\documentclass[a4paper,UKenglish]{lipics-v2018}

\usepackage{packages/mymaths}
\usepackage{packages/mylambdas}
\usepackage{packages/proofnets}
\usepackage{bbm}
\usepackage[toc,page]{appendix}
\usepackage{graphicx}
\usetikzlibrary{matrix}
\usepackage{microtype}

\newcolumntype{L}{>{$}l<{$}}
\newcolumntype{R}{>{$}r<{$}}
\newcolumntype{C}{>{$}c<{$}}
\newcolumntype{P}[1]{>{\hfill}p{#1}<{\hfill\vphantom.}}
\newcolumntype{h}[1]{@{\hspace{#1}}}
\newcolumntype{b}{@{}>{{}}}

\newcolumntype{D}{ >{\centering\arraybackslash} m{5cm}}
\newcolumntype{E}{ >{\centering\arraybackslash} m{1cm}}


\newcommand{\mynl}{\\[1.5ex]}

\newtheorem{notation}{Notation}
\newtheorem{property}{Property}

\newcommand{\store}[2]{{#1} \Leftarrow {#2}}

\newcommand{\setw}{assignment}
\newcommand{\getw}{read}


\title{Interpreting a concurrent $\lambda$-calculus in differential proof nets -
Extended version}

\author{Yann Hamdaoui}{IRIF, Univ. Paris Diderot}{yann.hamdaoui@irif.fr}{}{}
\authorrunning{Y. Hamdaoui}
\Copyright{Yann Hamdaoui}

\subjclass{Dummy classification}

\keywords{linear logic concurrency simulation}

\EventEditors{John Q. Open and Joan R. Access}
\EventNoEds{2}
\EventLongTitle{42nd Conference on Very Important Topics (CVIT 2016)}
\EventShortTitle{CVIT 2016}
\EventAcronym{CVIT}
\EventYear{2016}
\EventDate{December 24--27, 2016}
\EventLocation{Little Whinging, United Kingdom}
\EventLogo{}
\SeriesVolume{42}
\ArticleNo{23}

\begin{document}

\maketitle

\begin{abstract}
In this paper, we show how to interpret a language featuring concurrency,
references and replication into proof nets, which correspond to a fragment of
differential linear logic. We prove a simulation and adequacy
theorem. A key element in our translation are routing areas, a family of
nets used to implement communication primitives which we define and
study in detail.
\end{abstract}

\section{Introduction}
\label{sec:intro}

\newcommand{\LL}{\textsf{LL}}
\newcommand{\torework}[1]{{\color{red}{#1}}}

The distinctive feature of Linear Logic~\cite{GIRARD19871} (\LL) is to be a
resource-aware logic, which explains its success as a tool to study
computational processes. The native language for {\LL} proofs - or rather,
\emph{programs} - are proof nets, a graph representation endowed with a local
and asynchronous cut-elimination procedure.  The fine-grained computations and
the explicit management of resources in {\LL} make it an expressive target to
translate various computational primitives. Girard provided two translations in
its original paper, later clarified as respectively a call-by-value and
call-by-name translation of the $\lambda$-calculus~\cite{MARAIST1995370}.
PCF~\cite{Mackie:1995:GIM:199448.199483} has also been considered. Other
works have tackled the intricate question of modeling side-effects. State has been considered in a $\lambda$-calculus with
references~\cite{tranquilli:hal-00465793}. Another direction which has been explored is
concurrency and non-determinism. The extension of {\LL} with dual structural
rules, Differential Linear Logic~\cite{ehrhard:hal-00150274}, happens to be
powerful enough to accommodate non-determinism as demonstrated by the encoding
of a $\pi$-calculus without replication~\cite{EHRHARD2010606}.

These translations allow to relate and compare different computational
frameworks.  They benefit from the consequent work on the semantic and the
dynamic of {\LL} which has been carried for thirty years.  They are also of
practical interest : the proof net representation naturally leads to parallel
implementations~\cite{mackie94,Pedicini:2007,Pinto:2001} and forms the basis for
concrete operational semantics in the form of token-based
automata~\cite{DANOS199640,Lago:2015:PSI:2876514.2876564}. In this regard, one
may consider {\LL} as a ``functional assembly language'' with a diverse collection of
semantics and abstract machines.

While $\lambda$-calculus is a fundamental tool in the study and design of
functional programming languages, mainstream programming languages are pervaded
with features that enable productive development such as support for parallelism
and concurrency, communication primitives, imperative references, \emph{etc }.
Most of them imply side-effects, which are challenging to model, to compose, and
to reason about. While some have been investigated individually through the lens
of {\LL}, our goal is to go one step further by modeling a language featuring at
the same time concurrency, references and replication. The constructs involved in the
translation are inspired by the approach in~\cite{EHRHARD2010606} for
concurrency and non-determinism coupled with a monadic translation~\cite{tranquilli:hal-00465793} for
references.  Our goal is to exploit the ability of proof nets to enable
independent computations to be done in parallel without breaking the original operational
semantic. The translation we propose can be seen as a compilation
from a {\em global shared memory} model to a {\em local message passing} one, in
line with proof nets philosophy.

\paragraph*{Routing areas}

In a concurrent imperative language, references are a means to exchange
information between threads. In order to implement communication primitives in
proof nets, we use and extend the concept of \emph{communication areas}
introduced in \cite{EHRHARD2010606}. A \emph{communication area} is a particular
proof net whose external interface, composed of wires, is split between an equal
number of inputs and outputs. Inputs and outputs are grouped by pairs
representing a plug on which other nets can be connected. They are simple yet
elegant devices, whose role is similar to the one of a network switch which
connects several agents.  Connecting two \emph{communication areas} yields a
\emph{communication area} again: this key feature enables their use as modular
blocks that can be combined into complex assemblies. In this paper, we introduce
\emph{routing areas}, which allows a finer control on the wiring diagram. They
are parametrized by a relation which specifies which inputs and outputs are
connected.  Extending our network analogy, \emph{communication areas} are rather
hubs: they simply broadcast every incoming message to any connected agent. On
the other hand, \emph{routing areas} are more like switches: they are able to
choose selectively the recipients of messages depending on their origin.
\emph{Routing areas} are subject to atomic operations that decompose the
operation of connecting \emph{communication areas}. These operations also
have counterparts on relations.

We show that \emph{routing areas} are sufficient to actually describe all the normal
forms of the fragment of proof nets composed solely of structural
rules.  The algebraic description of \emph{routing areas} then provides a
semantic for this fragment.

\paragraph*{A Concurrent $\lambda$-calculus}
We consider a paradigmatic concurrent lambda-calculus with higher-order
references -- {\lamadio} below -- which has been introduced by Amadio
in~\cite{Amadio2009}. It is a call-by-value $\lambda$-calculus extended with:
\begin{itemize}
\item a notion of threads and an operator $\parallel$ for parallel
  composition of threads,
\item two terms $\set{r}{V}$ and $\get{r}$, to respectively
  assign a value to and read from a reference,
\item special threads $\store{r}{V}$, called stores, accounting for
  assignments.
\end{itemize}
%
%
In this language, the stores are global and cumulative: their scope is the whole
program, and each {\setw} adds a new binding that does not erase the previous
ones. Reading from a store is a non deterministic process that chooses a value
among the available ones. References are able to handle an unlimited number of
values and are understood as a typed abstraction of possibly several concrete
memory cells. This feature allows {\lamadio} to simulate various other calculi
with references such as variants with dynamic references or
communication~\cite{madet:tel-00794977}.  The language is endowed with an
appropriate type and effects system ensuring termination. The translation is
presented on an explicit substitutions version of {\lamadio}, introduced to
serve as an intermediate representation. This intermediate language is presented
and studied in details in \cite{HamdaouiValiron2018}.

\paragraph*{Contributions.}
The contributions of this paper are:
\begin{itemize}
  \item We introduce and study \emph{routing areas} which are flexible devices
    for implementing communication in proof nets. From routing areas we derive a
    semantic for a fragment of proof nets.
  \item We illustrate the use of routing areas by translating a concurrent
    $\lambda$-calculus to proof nets. Routing areas are the building block of
    this translation.
  \item We prove a \emph{simulation} and \emph{adequacy} theorem for this
    translation.
\end{itemize}

\paragraph*{Plan of the paper.}
\begin{itemize}
  \item In Section~\ref{section:nets}, we detail our proof nets setting and state its 
    properties
  \item We go on to define routing areas and study them more in detail
    in Section~\ref{section:areas}.
  \item In Section~\ref{section:lamadio}, we introduce the source language of our
    translation, the concurrent $\lambda$-calculus \lamadio.
  \item Section~\ref{section:translation} is devoted to the translation. We explain its underlying
    principles and give some representative cases.
  \item Section~\ref{section:properties} gives the simulation, termination and
    adequacy theorems together with their proof.
\end{itemize}

\section{Proof nets}
\label{section:nets}

In this section, we detail the proof nets that we are considering in the rest of
the paper. We do not attempt to give a full treatment of proof nets.
We recall the important notions and specify the system that we use. The
interested reader may find more details in~\cite{Pag08} for example. Proof nets are a
representation of proofs as multigraphs, where edges - called \emph{wires}
- correspond to the formulas, and nodes - \emph{cells} - correspond to the
rules.  We can decompose our system into three different layers:
\begin{description}
  \item[Multiplicative] The multiplicative fragment is
    composed of the conjunction $\otimes$ and the dual disjunction $\parr$.
    These connectors can express the linear implication $\multimap$ and are thus
    able to encode a linear $\lambda$-calculus, where all bound variables
    must occur exactly once in the body of an abstraction.
  \item[Exponential] The exponentials enable
    structural rules to be applied on particular formulas distinguished by the
    $\oc$ modality (its dual being $\wn$). Structural rules correspond to
    duplication (contraction) and erasure (weakening) : the multiplicative
    exponential fragment is the typical setting to interpret the
    $\lambda$-calculus.
  \item[Differential] Non determinism is expressed by using two rules from
    Differential {\LL}: cocontraction and coweakening.
    Semantically, contraction is thought as a family of diagonal morphisms
    $\mathit{cntr}_A : \oc A \to \oc A \otimes \oc A$, each one taking a
    resource $\oc A$ and duplicate it into a pair $\oc A \otimes \oc A$.
    Dually, cocontraction is a morphism going in the opposite direction, packing
    two resources of the same type into one : $\mathit{cocntr}_A : \oc A \otimes
    \oc A \to \oc A$.  What happens when the resulting resource is to be
    consumed ? Two incompatible possibilities : either the left one is used and
    the right one is erased, or vice-versa.  This corresponds to the rule
    $\red{nd}$ (see Table~\ref{figure:reduction}) in proof nets: the reduction
    produces the non-deterministic sum of the two outcomes.  Cocontraction will
    be used as an internalized non-deterministic sum. While weakening
    $\mathit{weak}_A : \oc A \to \mathbf{1}$ erases a resource, the dual
    coweakening produces a resource ex nihilo: $\mathit{coweak}_A : \mathbf{1}
    \to \oc A$.  This resource can be duplicated or erased, but any attempt to
    consume it will turn the whole net to $\mathbf{0}$. Coweakening produces a
    Pandora box with a $0$ inside. It is the neutral element of cocontraction.
\end{description}

%

One can define a correctness criterion to discriminate nets that are
well-behaved - the ones that are the representation of a valid proof - ensuring
termination and confluence of the reduction. We will not require it here (cf
Section~\ref{section:conclusion}). Without the correctness criterion, the full fledged reduction
of (differential) {\LL} is not even confluent, let alone terminating. We add
constraints in order to recover an suitable system that is confluent and
verifies a termination property (Theorem~\ref{theorem-nets-termination}). Let us now give the
definition of proof nets and their reduction:

\begin{notation}
We recall some vocabulary of rewriting theory that we use in the following. A term of a language or a proof net $t$ is
\begin{description}
  \item[A normal form] if it can't be reduced further
  \item[Weakly normalizing] if there exists a
    normal form $n$ such that $t \to^\ast n$ 
  \item[Strongly normalizing] if it has no infinite
    reduction sequence
  \item[Confluent] if for all $u,u'$ such that $u \rredstar{} t \to^\ast u'$,
    there exists $v$ such that $u \to^\ast v \rredstar{} u'$
  \end{description}
  A rewriting relation is confluent if all terms are.
\end{notation}

\begin{definition}{Proof nets}\label{nets-definition-nets}\\
Given a countable set, whose elements are called ports, a proof net is given by
\begin{enumerate}
  \item A finite set of ports
  \item A finite set of cells. A cell is a finite non-empty sequence of pairwise
    distinct ports, and two cells have pairwise distinct ports.  The first port
    of a cell $c$ is called the {\emph principal port} and written $p(c)$, and
    the $(i+1)$th the $i$th auxiliary port noted $p_i(c)$.  The number of
    auxiliary ports is called the {\emph arity} of the cell. A port is free if
    it does not occur in a cell. 
  \item A labelling of cells by symbols amongst $\{ 1, \otimes, \parr, \wn, \oc
    \}$. We ask moreover that the arity respects the following table: 
    \begin{center}
      \begin{tabular}{c|ccccc}
        Symbol & $\parr$ & $\otimes$ & $\wn$ & $\oc$ & $\oc_p$  \\
        \hline
        Arity &  $2$ & $2$ & $0,1$ or $2$ & $0,1$ or $2$ & $i \geq 1$\\
      \end{tabular}
    \end{center}
  \item A partition of its ports into pairs called {\emph wires}. A wire with
    one (resp. two) free port is a free (resp. floating) wire.
  \item A labelling of wires by MELL formulas: $F \mathrel{::=} 1 \mid \bot
    \mid F \parr F \mid F \otimes F \mid \oc F \mid \wn F$. A wire $(p_1,p_2)$
    labelled by $F$ is identified with the reversed wire $(p_2,p_1)$ labelled by
    $F^\perp$. We will abusively confuse ports with the wire attached to them.
  \item A mapping from $\oc_p$ nodes - called exponential boxes - to proof nets
    with no floating wires and $n \geq 1$ free wires. The first one is labelled
    (assuming orientation toward the free port) by $A$ and the
    remaining ones by $\oc B_1,\ldots, \oc B_{n-1}$. The corresponding
    $\oc_p$ node must have $\oc{A}$ as principal port and $\oc B_1,\ldots,\oc
    B_{n-1}$ as auxiliary ports.
\end{enumerate}
\end{definition}

The different kind of cells are illustrated in Figure~\ref{figure:nets}.  We
directly represent $\oc_p$ cells as their associated proof net delimited by a
rectangular shape. We impose that labels of wires connected to a cell respect
the one given in Figure~\ref{figure:nets}, i.e that nets are \emph{well-typed}.

\begin{table}[tb]
\centering
\begin{minipage}{.74\linewidth}
\centering
\begin{tabular}{cccc}
  one & tensor & par & dereliction \\
  \includegraphics{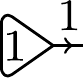} &
  \includegraphics{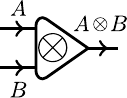} &
  \includegraphics{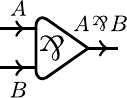} &
  \includegraphics{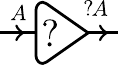}  \\
  contraction & cocontraction & weakening & coweakening \\
  \includegraphics{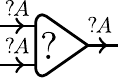} &
  \includegraphics{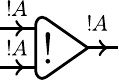} &
  \includegraphics{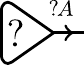} &
  \includegraphics{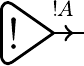} \\
\end{tabular}
\end{minipage}
\begin{minipage}{.24\linewidth}
\centering
exponential box ($\oc_p$)
\includegraphics{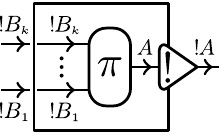}
\end{minipage}
\caption{Cells and box}\label{figure:nets}
\end{table}
\subsection{Reduction}

The reduction rules are illustrated in Table~\ref{figure:reduction}. Notice that
the exponential reductions are required to operate on closed boxes (boxes
without auxiliary doors). $\red{er}$ and $\red{e}$ may be performed  at any
depth. All the other rules are only allowed at the surface: reduction inside boxes
is prohibited. 

\proofset{straight,bottom=5pt}
\begin{table}[tb]
\centering
\begin{minipage}{0.54\textwidth}
\centering
  $\begin{gathered}
  \tikztarget(multexp){\extm{multexp}{
    \begin{proofnet}[baseline=-.5ex]
      \init[grow=right]{
        \binary[bottom=2.5pt]{\parr}{
          \wire{}
        }{
          \wire{}
        }
      }
      \init[grow=left]{
        \binary[bottom=2.5pt]{\otimes}{
          \wire{}
        }{
          \wire{}
        }
      }
      \end{proofnet}
    \red m
      \begin{proofnet}[baseline=-.5ex,rotate=90]
      \draw (5pt,-10pt) -- ++(0,20pt);
      \draw (-5pt,-10pt) -- ++(0,20pt);
      \end{proofnet}
      \qquad\hspace{20pt}
      \begin{proofnet}[baseline=-.5ex]
      \init[grow=left]{
        \unary[flip]{\wn}{
        \boxpal{p}{
          \nary[shape=rectangle,cell={net={20pt}{12pt}}]{\pi}{}
        }
        }
      }
      \drawbox{p}
      \end{proofnet}
    \red e
      \begin{proofnet}[baseline=-.5ex]
      \init[grow=left]{
          \nary[shape=rectangle,cell={net={20pt}{12pt}}]{\pi}{}
      }
      \end{proofnet}
  }}
  \\[1ex]
  \tikzvcenter{\tikztarget(duperase){\extm{duperase}{
  \begin{proofnet}[baseline=-.5ex]
    \init[grow=right]{
      \binary{\wn}{\wire{}}{\wire{}}
    }
    \init[grow=left]{
      \boxpal[bottom=0pt]{p}{
        \nary[shape=rectangle,cell={net={17.5pt}{12pt}}]{\pi}{}
      }
    }
    \drawbox p
  \end{proofnet}
  \red {d} 
  \begin{proofnet}[baseline=-.5ex]
    \init[multi,grow=left,sep=8pt]{
      \boxpal[]{p}{
        \nary[shape=rectangle,cell={net={15pt}{12pt}}]{\pi}{}
      },
      \boxpal[]{q}{
        \nary[shape=rectangle,cell={net={15pt}{12pt}}]{\pi}{}
      }
    }
    \drawbox p
    \drawbox q
  \end{proofnet}
  \qquad\hspace{15pt}
  \begin{proofnet}[baseline=-.5ex]
    \init[grow=right]{
      \zeroary{\wn}
    }
    \init[grow=left]{
      \boxpal[bottom=0pt]{p}{
        \nary[shape=rectangle,cell={net={17.5pt}{12pt}}]{\pi}{}
      }
    }
    \drawbox p
    \end{proofnet}
  \red {er} 
  \epsilon
  }}}
  \\[1ex]
  \tikzvcenter{\tikztarget(boxcompose){\extm{boxcompose}{
    \begin{proofnet}[baseline=-.5ex]
    \init[grow=left]{
      \boxpal{p}{
        \nary[shape=rectangle,cell={net={37.5pt}{12pt}}]{\pi}{
            \wire[inbox={}]{
              \boxpal q{
                \nary[shape=rectangle,cell={net={17.5pt}{12pt}},bottom=3pt]{\sigma}{
                  {\wire[wire=fake,inbox={},bottom=7.5pt]{\point(a)}}
                }
              }
            },,
          {
            \wireupto[straight,inbox={}](a){\coord(d)}
          },
          {
            \wireupto[straight,inbox={}](a){\coord(c)}
          }
      }
    }
    }
    \drawbox{p}
    \drawbox{q}
    \pdots[2pt](c)(d)
    \end{proofnet}
  \red c
    \begin{proofnet}[baseline=-.5ex]
    \init[grow=left]{
      \boxpal{p}{
        \nary[shape=rectangle,cell={net={37.5pt}{12pt}}]{\pi}{
            {\wire[inbox={},bottom=0pt]{
              \boxpal[] q{
                \nary[shape=rectangle,cell={net={17.5pt}{12pt}},bottom=3pt]{\sigma}{
                  {\wire[wire=fake,inbox={},bottom=12.5pt]{\point(a)}},
                }
              }
            }}
            ,,
          {
            \wireupto[straight,inbox={}](a){\coord(d)}
          },
          {
            \wireupto[straight,inbox={}](a){\coord(c)}
          }
      }
    }
    }
    \drawbox[inbox=p]{q}
    \drawbox{p}
    \pdots[2pt](c)(d)
    \end{proofnet}
  }}}
  \end{gathered}$
\end{minipage}
\begin{minipage}{0.44\linewidth}
  $\begin{gathered}
  \tikztarget(choice){\extm{choice}{
    \begin{proofnet}[baseline=-.5ex]
      \init[grow=right]{\unary[flip]{\wn}{\binary{\oc}{\wire{}}{\wire{}}}}
    \end{proofnet}
    \red{nd} 
    \begin{proofnet}[baseline=-.5ex]
      \coordinate (m) at (-30pt,0);
      \init[grow=left,multi]{
        \unary{\wn}{\wireto(m){}},
        \zeroary{\wn}
      }
    \end{proofnet}
    +
    \begin{proofnet}[baseline=-.5ex]
      \coordinate (m) at (-30pt,0);
      \init[grow=left,multi]{
        \zeroary{\wn},
        \unary{\wn}{\wireto(m){}}
      }
    \end{proofnet}
  }}
  \\[1ex]
  \tikztarget(bialgebra){\extm{bialgebra}{
    \begin{proofnet}[baseline=-.5ex]
      \init[grow=left]{\binary{\oc}{\wire{}}{\wire{}}}
      \init[grow=right]{\binary[bottom=0pt]{\wn}{\wire{}}{\wire{}}}
    \end{proofnet}
    \red {ba}
    \begin{proofnet}[baseline=-.5ex]
      \init[multi,grow=left]{
        {\binary{\oc}{\coord(a)}{\coord(b)}},
        {\binary{\oc}{\coord(c)}{\coord(d)}}
      }
      \ljoin ac[bottom=10pt,wires={,railed}]{\wn}{\wire{}}
      \rjoin bd[bottom=10pt,wires={railed,}]{\wn}{\wire{}}
    \end{proofnet}
  }}
  \\[1ex]
  \tikztarget(srules){\extm{srules}{
  \begin{aligned}
    \begin{proofnet}[baseline=-.5ex]
      \init[grow=left]{\zeroary{\oc}}
      \init[grow=right]{\binary[bottom=0pt]{\wn}{\wire{}}{\wire{}}}
    \end{proofnet}
    &\red {s_1}
    \begin{proofnet}[baseline=-.5ex]
      \init[multi,grow=left]{
        {\zeroary{\oc}},
        {\zeroary{\oc}}
      }
    \end{proofnet}
    &\hspace{20pt}
    \begin{proofnet}[baseline=-.5ex]
      \init[grow=left]{\binary{\oc}{\wire{}}{\wire{}}}
      \init[grow=right]{\zeroary[bottom=0pt]{\wn}}
    \end{proofnet}
    &\red {s_2}
    \begin{proofnet}[baseline=-.5ex]
      \init[multi,grow=right]{
        {\zeroary{\wn}},
        {\zeroary{\wn}}
      }
    \end{proofnet}
    \\
    \begin{proofnet}[baseline=-.5ex]
      \init[grow=left]{\zeroary{\oc}}
      \init[grow=right]{\zeroary[bottom=0pt]{\wn}}
    \end{proofnet}
    &\red \epsilon
    \epsilon
    \end{aligned}
  }}
  \end{gathered}$
\end{minipage}

\caption{Reduction rules}\label{figure:reduction}
\end{table}


%
%
%
%
%
%


We quotient the proof nets by associativity and commutativity of contraction and
cocontractions. We also include in the relation the neutrality of (co)weakening for
(co)contraction (Table~\ref{figure:equivalence}).
\proofset{straight,bottom=5pt}
\begin{table}[tb]
\centering
$\begin{gathered}
\\[2ex]
\tikztarget(equivwn){\extm{equivwn}{
\begin{proofnet}[baseline=-.5ex]
  \init[grow=right]{
    \binary{\wn}{\wire{}}{\wire{}}
  }
\end{proofnet}
  \equiv
\begin{proofnet}[baseline=-.5ex]
  \init[grow=right]{
    \binary{\wn}{\wire[bottom=10pt,straight=false,shift=11pt]{}}{\wire[bottom=10pt,straight=false,shift=-11pt,wire=railed]{}}
  }
\end{proofnet}
\quad\quad 
\begin{proofnet}[baseline=-.5ex]
  \init[grow=right]{
    \binary{\wn}{\binary{\wn}{\wire{\point(a)}}{\wire{}}}{\coord(c)}
  }
  \init(c){\wireupto(a){}}
\end{proofnet}
  \equiv
\begin{proofnet}[baseline=-.5ex]
  \init[grow=right]{
    \binary{\wn}{\coord(c)}{\binary{\wn}{\wire{\point(a)}}{\wire{}}}
  }
  \init(c){\wireupto(a){}}
\end{proofnet}
\quad\quad 
  \begin{proofnet}[baseline=-.5ex]
    \init[grow=right]{
      \binary{\wn}{\zeroary[name=weak]{\wn}}{\wireupto(weak.middle pax){}}
    }
  \end{proofnet}
\equiv
  \begin{proofnet}[baseline=-.5ex,rotate=90]
    \draw (0pt,-10pt) -- ++(0,20pt);
  \end{proofnet}
}}
\\[2ex]
\tikztarget(equivoc){\extm{equivoc}{
\begin{proofnet}[baseline=-.5ex]
  \init[grow=right]{
    \binary{\oc}{\wire{}}{\wire{}}
  }
\end{proofnet}
  \equiv
\begin{proofnet}[baseline=-.5ex]
  \init[grow=right]{
    \binary{\oc}{\wire[bottom=10pt,straight=false,shift=11pt]{}}{\wire[bottom=10pt,straight=false,shift=-11pt,wire=railed]{}}
  }
\end{proofnet}
\quad\quad
\begin{proofnet}[baseline=-.5ex]
  \init[grow=right]{
    \binary{\oc}{\binary{\oc}{\wire{\point(a)}}{\wire{}}}{\coord(c)}
  }
  \init(c){\wireupto(a){}}
\end{proofnet}
  \equiv
\begin{proofnet}[baseline=-.5ex]
  \init[grow=right]{
    \binary{\oc}{\coord(c)}{\binary{\oc}{\wire{\point(a)}}{\wire{}}}
  }
  \init(c){\wireupto(a){}}
\end{proofnet}
\quad\quad 
  \begin{proofnet}[baseline=-.5ex]
    \init[grow=right]{
      \binary{\oc}{\zeroary[name=weak]{\oc}}{\wireupto(weak.middle pax){}}
    }
  \end{proofnet}
\equiv
  \begin{proofnet}[baseline=-.5ex,rotate=90]
    \draw (0pt,-10pt) -- ++(0,20pt);
  \end{proofnet}
}}
\end{gathered}$
\caption{Equivalence relation}\label{figure:equivalence}
\end{table}

We denote by $\to$ the full reduction, extended to sums of nets in the same way
as in~\cite{ehrhard:hal-00150274}.

\begin{theorem}{Confluence}\label{theorem-nets-confluence}\\
  The reduction $\to$ is confluent.
\end{theorem}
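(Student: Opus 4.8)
The plan is to reduce confluence to the \emph{diamond property} of a suitable parallel reduction, so as to avoid any dependence on the termination result (Theorem~\ref{theorem-nets-termination}), which is established separately and under extra constraints. The starting observation is that our reduction has a strongly \emph{local}, interaction-net-like character: every redex is triggered by an \emph{active pair}, namely two cells facing each other through their principal ports (or, in the exponential rules, a box door facing a $\wn$-cell or a weakening). Since each cell has a single principal port, two \emph{distinct} active pairs necessarily involve disjoint cells. Consequently, firing one redex leaves every other active pair intact up to an evident residual, and there are essentially no genuine critical pairs at the level of raw nets; this is what makes the single-net reduction almost orthogonal.

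Two features break the naive diamond property and account for the real content of the proof. First, we work \emph{modulo} the equivalence $\equiv$ of Figure~\ref{figure:equivalence} (associativity and commutativity of (co)contraction and neutrality of (co)weakening), so confluence must be proved modulo $\equiv$ in the sense of Huet. I would first check that $\equiv$ is a strong bisimulation for $\to$: if $t \equiv t'$ and $t \to u$, then there is $u'$ with $t' \to u'$ and $u \equiv u'$. This amounts to verifying that rewriting an active pair commutes with re-bracketing and re-ordering the (co)contraction trees and with absorbing (co)weakenings; the rules $\red{ba}$, $\red{s_1}$, $\red{s_2}$ and $\red{\epsilon}$ are precisely the ones that must be shown coherent with $\equiv$. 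Second, the rule $\red{nd}$ turns a single net into a \emph{sum} of two nets, and box duplication $\red{d}$ replicates a whole box, so a redex lying elsewhere acquires \emph{several} residuals after such a step. A single $\to$-step on one side may then have to be matched by several $\to$-steps on the other, the strict one-step diamond fails, and a parallel reduction is genuinely required.

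Concretely, I would define a parallel reduction $\Rightarrow$ that fires simultaneously a chosen set of pairwise cell-disjoint active pairs, allowing the two rules permitted at arbitrary depth ($\red{e}$ and $\red{er}$) to be fired inside boxes as well, and extend it to formal sums by performing the simultaneous reduction of the chosen redexes and all their copies in each summand. Because active pairs are cell-disjoint and surface reduction keeps boxes closed, every development is well defined and the residual of a redex under $\Rightarrow$ is unambiguous up to $\equiv$; note that a box duplication $\red{d}$ copies the box together with any $\red{e}$/$\red{er}$ redex it contains, whose residuals must be carried along. I would then prove the diamond property modulo $\equiv$ for $\Rightarrow$: given $t \Rightarrow u$ and $t \Rightarrow u'$, completing each development to the full development of all residuals yields a common reduct $v$ with $u \Rightarrow v$ and $u' \Rightarrow v$ up to $\equiv$. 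Since $\to \,\subseteq\, \Rightarrow \,\subseteq\, \to^\ast$, this diamond property transfers confluence to $\to$, first on single nets modulo $\equiv$, hence on $\equiv$-classes, and then, by linearity of the extension to sums, on arbitrary sums of nets --- which is the statement.

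The main obstacle is the combination of the last two phenomena with the quotient: tracking residuals coherently when the very redex being copied by $\red{d}$ or split by $\red{nd}$ interacts, through the structural cells sitting at a box door, with the (co)contraction trees that $\equiv$ reorganises, and with the cell-proliferation produced by the bialgebra rule $\red{ba}$. The delicate point is that the two residual developments close \emph{up to} $\equiv$ rather than on the nose, so all the coherence diagrams required for confluence modulo $\equiv$ must be shown to commute; the interaction-net locality of the remaining rules is ultimately what keeps these diagrams manageable.
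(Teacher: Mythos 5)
Your proposal follows essentially the same route as the paper: define a Tait--Martin-L\"of-style parallel reduction $\Rightarrow$ firing disjoint redexes simultaneously, establish $\to\,\subseteq\,\Rightarrow\,\subseteq\,\to^\ast$, and conclude confluence from the diamond property of $\Rightarrow$. The paper's proof is a one-paragraph sketch of exactly this argument; you additionally make explicit the genuine subtleties it leaves implicit --- working modulo the equivalence $\equiv$, and tracking residuals under $\red{d}$ and $\red{nd}$ --- which is where the real work would lie in a fully detailed proof.
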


\begin{proof}
  The reduction is similar in spirit to an orthogonal reduction in a term
  rewriting system.  In the same approach as the proof of confluence of the
  $\lambda$-caluclus, we can define a parallel reduction $\redpar$, which allow
  to perform an abritrary number of step in parallel (meaning that we can't
  reduce redexes that are created by other the other steps). This parallel
  reduction verifies $\to \subseteq
  \redpar \subseteq \to^\ast$. Then we prove that $\redpar$ has the diamond
  property, and the previous expression ensures that $\to^\ast \subseteq
  \redpar^\ast$, thus $\to^\ast$ is confluent.
\end{proof}

\subsection{Termination}

\begin{theorem}\label{theorem-nets-termination}
  A net is weakly normalizing if and only if it is strongly
  normalizing.
\end{theorem}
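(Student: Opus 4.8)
The forward implication is immediate: if a net is strongly normalizing then, by definition, every reduction sequence out of it is finite, so a maximal one ends in a normal form and the net is weakly normalizing. The entire content of the statement is therefore the converse, that weak normalization implies strong normalization, and this is where the plan concentrates.

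For that direction the plan is to exploit confluence (Theorem~\ref{theorem-nets-confluence}), which guarantees that a weakly normalizing net $t$ has a \emph{unique} normal form $n$, together with a postponement argument isolating the only possible sources of divergence. First I would classify the rules into \emph{erasing} steps --- $\red{er}$, $\red{\epsilon}$, the structural rules $\red{s_1}$ and $\red{s_2}$, and the branch-discarding performed by each summand of $\red{nd}$ --- and \emph{non-erasing} steps ($\red{m}$, $\red{e}$, $\red{c}$, and the duplicating rules $\red{d}$ and $\red{ba}$). Erasing steps strictly decrease a naive size measure, so on their own they can never sustain an infinite run. The key lemma is a \emph{postponement of erasures}: whenever $t$ reaches a reduct by an erasing step immediately followed by a non-erasing one, it can reach the same reduct by first performing the non-erasing step(s) and only afterwards the erasing one. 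Consequently, any infinite reduction out of $t$ can be rearranged into one that, beyond some point, uses only non-erasing steps.

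It then remains to prove that the non-erasing fragment is strongly normalizing on weakly normalizing nets, which I would establish by a conservation-style argument in the spirit of the $\lambda I$-calculus: since nothing is discarded, I assign to each net a weight that counts its cells together with a contribution for every box, each box weighted by the number of times the duplicating rules will eventually copy it, this count being read off from the unique normal form supplied by confluence. The aim is to show that every non-erasing step strictly decreases this weight, so that no infinite non-erasing reduction can start from a weakly normalizing net; combined with the postponement lemma, this excludes every infinite reduction and yields strong normalization.

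The hard part will be the control of the duplicating and sum-producing rules. Apart from $\red{er}$ and $\red{e}$, which may act at any depth, reduction does not enter boxes, so a box stays inert until it is opened by $\red{e}$, erased by $\red{er}$, or \emph{copied} by the contraction--box rule $\red{d}$; the bialgebraic rule $\red{ba}$ similarly multiplies cocontraction--contraction patterns, and $\red{nd}$ splits a net into a sum. The delicate point is to show that copying a box belonging to a weakly normalizing net cannot manufacture fresh divergence --- that the several copies jointly consume no more reduction budget than the original --- and that the sums produced by $\red{nd}$ are treated componentwise without breaking the weight's monotonicity. Designing the weight so that it is insensitive to duplication yet still strictly decreasing is where the real work lies, and the surface-only discipline on the non-erasing rules is precisely what keeps this bookkeeping tractable.
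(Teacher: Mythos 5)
Your forward direction is fine, and your instinct to split the rules into two classes and postpone one of them is structurally the same move the paper makes --- but the split you chose does not isolate the actual difficulty, and the lemma you then need is left essentially undone. The paper's decomposition is \emph{surface} versus \emph{in-box} reduction (everything performed at depth $0$ versus the deep $\red{e}$/$\red{er}$ steps). The point of that particular split is that a box is opaque to surface reduction: when $\red{d}$ copies a box it copies no pending \emph{surface} redex, so surface reduction satisfies the diamond property, which gives ``weakly normalizing implies strongly normalizing'' for that fragment for free. The remaining work is a swapping/standardization argument showing that deep steps neither create nor erase surface redexes and are themselves strongly normalizing (the box count decreases), after which any infinite reduction yields an infinite surface reduction. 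Your split into erasing versus non-erasing rules does not have this property: your non-erasing class contains both the duplicator $\red{d}$ and the arbitrary-depth rule $\red{e}$, so a single $\red{d}$ step genuinely multiplies pending non-erasing work, the diamond property fails for your fragment, and you are forced into proving a conservation theorem instead.

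That conservation theorem is the gap. The weight you propose --- each box counted with multiplicity ``the number of times the duplicating rules will eventually copy it, read off from the unique normal form'' --- is not a definition: whether that count is independent of the reduction order is essentially the statement you are trying to prove, and the normal form of a weakly normalizing net retains no record of how often each box of the original net was copied (boxes are opened by $\red{e}$, merged by $\red{c}$, and erased along the way). There is also a classification problem with $\red{nd}$: it is a single step producing a formal sum, so you cannot postpone only ``the branch-discarding half'' of it. If you call it erasing, your postponement lemma fails because $\red{nd}$ creates non-erasing redexes (the dereliction can land on a box or on another cocontraction); if you call it non-erasing, the ``nothing is discarded'' invariant underlying the conservation argument is false. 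I would redo the argument with the surface/in-box split: prove the diamond property for surface reduction, strong normalization of the in-box reduction, and a swapping lemma allowing you to extract, from any infinite reduction of a weakly normalizing net, an infinite surface reduction --- which the diamond property forbids.
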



\newcommand{\redsurf}{\rightharpoonup}
\newcommand{\redsurfinv}{\leftharpoonup}
\newcommand{\rednsurf}{\leadsto}
\newcommand{\rednsurfinv}{\leftharpoondown}
\tikzset{
  equiv/.style={
    draw=none,
    edge node={node [sloped, allow upside down, auto=false]{$\rednsurf^\ast$}}},
  equiv1step/.style={
    draw=none,
    edge node={node [sloped, allow upside down, auto=false]{$\rednsurf$}}},
  equiv01step/.style={
    draw=none,
    edge node={node [sloped, allow upside down,
      auto=false]{$\underline{\rednsurf}$}}},
  equivkstep/.style={
    draw=none,
    edge node={node [sloped, allow upside down,
      auto=false]{$\rednsurf^k$}}},
  equivkpstep/.style={
    draw=none,
    edge node={node [sloped, allow upside down,
    auto=false]{$\rednsurf^{k'}$}}},
  equals/.style={
    draw=none,
    edge node={node [sloped, allow upside down, auto=false]{$=$}}},
  reduceto/.style={
    draw=none,
    edge node={node [sloped, allow upside down, auto=false]{$\redsurf^\ast$}}},
  reduceto1step/.style={
    draw=none,
    edge node={node [sloped, allow upside down, auto=false]{$\redsurf$}}},
  reduceto1pstep/.style={
    draw=none,
    edge node={node [sloped, allow upside down, auto=false]{$\redsurf^+$}}},
  reducetokstep/.style={
    draw=none,
    edge node={node [sloped, allow upside down, auto=false]{$\redsurf^k$}}},
  reducetokpstep/.style={
    draw=none,
    edge node={node [sloped, allow upside down, auto=false]{$\redsurf^{k'}$}}}
}

To prove Theorem~\ref{theorem-nets-termination}, we proceed in two steps. First, we
consider the system where all reductions are restricted to the surface.
This reduction is constrained enough so that it verifies the diamond
property, which is known to implies the 
Theorem~\ref{theorem-nets-termination}. Then, we consider the non-surface
$\red{er}$ and $\red{e}$ reductions, that we write $\rednsurf$, and show that
adding it to $\redsurf$ does not invalid the theorem. In this subsection, we denote by
$\redsurf$ all the reductions rules of Figure~\ref{figure:reduction} applied
at the surface. We have the decomposition $\to$ = $\redsurf \cup \rednsurf$.
The following lemma shows that $\redsurf$ and $\rednsurf$ have a 
commutation property:

\begin{lemma}{Swapping}\label{lemma-nets-swapping}\\
  Let $\mathcal{S} \rednsurf^\ast \mathcal{R}$ and $\mathcal{R} \redsurf^+ \mathcal{R}'$. Then
  there exists $\mathcal{S'}$ such that

  \begin{center}
    \begin{tikzpicture}
      \matrix (m) [matrix of math nodes,row sep=2em,column sep=2em,minimum
        width=2em] 
      {
        \mathcal{R} & \mathcal{R}' \\
        \mathcal{S} & \mathcal{S'} \\
      };
      \path[-stealth, auto]
        (m-1-1) edge[reduceto1pstep] (m-1-2)
        (m-2-1) edge[reduceto1pstep] (m-2-2)
        (m-2-1) edge[equiv] (m-1-1)
        (m-2-2) edge[equiv] (m-1-2);
    \end{tikzpicture} 
  \end{center}
\end{lemma}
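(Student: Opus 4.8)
The plan is to reduce the statement to an \emph{elementary diagram} for single steps and then close it under iteration by a postponement argument. The elementary diagram I first establish is that $\mathcal{S} \rednsurf \mathcal{R} \redsurf \mathcal{R}'$ implies the existence of $\mathcal{S}'$ with $\mathcal{S} \redsurf^{+} \mathcal{S}' \rednsurf^{\ast} \mathcal{R}'$. The guiding intuition is that a non-surface step rewrites a redex $\rho$ sitting strictly inside some surface box $B$, while a surface step $\sigma$ acts at depth $0$ and treats boxes as black boxes; hence $\sigma$ can always be performed first, and the work done by $\rho$ is replayed afterwards on the residuals of $\rho$. Since a non-surface step is always an $\red{e}$ or $\red{er}$ reduction strictly inside a box, and such an internal step never alters the auxiliary doors of $B$ nor the surface cell sitting on its principal port, the surface redex $\sigma$ is enabled in $\mathcal{S}$ exactly when it is in $\mathcal{R}$.

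I would prove the elementary diagram by a case analysis on the interaction between $\sigma$ and the surface box $B$ containing $\rho$. When $\sigma$ does not touch $B$ --- this covers every multiplicative, $\red{nd}$, $\red{ba}$ and structural step, together with box steps acting elsewhere --- the two rewrites are orthogonal and commute on the nose. Otherwise $\sigma$ acts on $B$ and there are four subcases. If $\sigma$ \emph{duplicates} $B$ (a $\red{d}$ step against a contraction), performing $\sigma$ first copies the still-unreduced $\rho$ into both residual boxes, so two non-surface steps are replayed. If $\sigma$ \emph{erases} $B$ (a surface $\red{er}$ against a weakening), the residual of $\rho$ disappears with the box and no non-surface step remains. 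If $\sigma$ \emph{opens} $B$ (a surface $\red{e}$), its contents are exposed one level shallower, so the residual of $\rho$ is either still inside a box, and replayed as a non-surface step, or now at the surface, in which case it is absorbed into the $\redsurf^{+}$ prefix as an extra surface step. If $\sigma$ is a box composition $\red{c}$, it merely rearranges the nesting around $\rho$, whose residual stays strictly inside a box and is replayed once. In every case the outcome has the shape $\mathcal{S} \redsurf^{+} \mathcal{S}' \rednsurf^{\ast} \mathcal{R}'$, and the prefix is genuinely non-empty because $\sigma$ itself is carried out.

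To lift the elementary diagram to the full statement I would argue by a postponement/residual induction on the two given finite reductions $\mathcal{S} \rednsurf^{\ast} \mathcal{R}$ and $\mathcal{R} \redsurf^{+} \mathcal{R}'$. The idea is to repeatedly locate a minimal inversion, i.e.\ a pattern $\rednsurf \cdot \redsurf$, and rewrite it using the elementary diagram, thereby pushing surface steps to the left and non-surface steps to the right. The non-surface steps that remain to be replayed at the end are precisely the residuals of the originally given finitely many non-surface steps under the surface reduction; each such step has only finitely many residuals, so the trailing $\rednsurf^{\ast}$ is finite and the reorganisation terminates without any appeal to strong normalisation of $\to$, which is exactly what we are ultimately trying to prove. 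Since at least one surface step is always present, the resulting bottom path is again $\redsurf^{+}$, yielding the required $\mathcal{S}'$.

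The main obstacle is the bookkeeping forced by the box-opening subcase $\red{e}$ and by duplication $\red{d}$. Opening is the only situation in which a step that was genuinely non-surface in $\mathcal{R}$ must be re-accounted as a surface step in $\mathcal{S}'$, because the depth of $\rho$'s residual drops to $0$; this \emph{promotion} is exactly why the diagram is stated with the asymmetric bound $\redsurf^{+} \cdot \rednsurf^{\ast}$ rather than a one-to-one swap. Combined with the multiplication of non-surface residuals under $\red{d}$, it means the naive measure by sequence length is not monotone, so the induction must be organised on a measure that tracks the residuals of the fixed reductions --- for instance the multiset of non-surface steps still preceding a surface step, counted with their residual multiplicity. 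Verifying that this measure strictly decreases at each application of the elementary diagram, and that no surface redex is spuriously created or destroyed by the postponed internal work, is the delicate part of the proof.
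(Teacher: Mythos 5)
Your elementary one-step diagrams are exactly the ones the paper establishes in its preliminary remarks (surface redexes are in bijection between $\mathcal S$ and $\mathcal R$ since internal steps do not alter depth-$0$ boxes; orthogonal steps commute; erasure kills the internal residual; opening may promote it to a surface step, giving two bottom $\redsurf$ steps; duplication gives one bottom step and up to two residual $\rednsurf$ steps), so that part of the proposal is sound and matches the paper. The genuine gap is in the lifting. You candidly defer the termination of the rearrangement to a measure --- ``the multiset of non-surface steps still preceding a surface step, counted with their residual multiplicity'' --- but this measure does not decrease. The problematic case is precisely the \emph{promotion} you identify: swapping $\rho\cdot\sigma$ into $\sigma\cdot\sigma'$ when $\sigma$ opens the box containing $\rho$ at depth $1$ lengthens the surface block, so every non-surface step still sitting to the left now precedes one \emph{more} surface step. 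Concretely, on $\rho_1\rho_2\sigma$ with $\rho_2$ promoted, the multiset of ``surface steps following each pending non-surface step'' goes from $\{1,1\}$ to $\{2\}$, which is an \emph{increase} in the multiset order (duplication, by contrast, sends $\{m\}$ to $\{m-1,m-1\}$ and is harmless). So the core difficulty of the lemma --- why the local reshuffling terminates --- is not actually resolved by the proposal.

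The paper avoids a global measure on interleaved sequences by organising the induction differently: it first proves the statement for a \emph{single} surface step $\mathcal R\redsurf\mathcal R'$ preceded by $k$ non-surface steps, by induction on $k$, splitting on whether that step is a duplication. For a non-duplication step the strengthened induction hypothesis records that the number of pending non-surface steps never increases ($k'\le k$) and that the bottom surface reduction contains no duplication; this is what licenses re-applying the hypothesis to each of the one or two bottom surface steps produced by the one-step diagram. Duplication is handled separately using the fact that it reflects to \emph{exactly one} bottom surface step, so the non-surface block can be peeled off one step at a time without controlling its growth. Only then does an outer induction on the length of $\mathcal R\redsurf^+\mathcal R'$ conclude. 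If you want to keep your postponement-by-local-swaps formulation, you would need either this kind of stratified invariant or a genuine residual theory for these nets bounding the total number of descendants of each original $\rednsurf$ step; as written, neither is supplied.
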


\begin{proof}Lemma~\ref{lemma-nets-swapping}\\
  By induction on both the length of the reduction $\mathcal{S} \rednsurf^\ast
  \mathcal{R}$ and $\mathcal{R} \to^+ \mathcal{R}'$. We consider the
  fundamental case where both reductions are one-step : 
  
  \begin{center}
    \begin{tikzpicture}
      \matrix (m) [matrix of math nodes,row sep=2em,column sep=2em,minimum
        width=2em] 
      {
        \mathcal{R} & \mathcal{R}' \\
        \mathcal{S} & \\
      };
      \path[-stealth, auto]
        (m-1-1) edge[reduceto1step] (m-1-2)
        (m-2-1) edge[equiv1step] (m-1-1);
    \end{tikzpicture} 
  \end{center}

  We make the following remarks:
  \begin{itemize}
    \item As $\rednsurf$ reduction happens inside boxes, and $\redsurf$ outside,
      we can put in a one-to-one correspondence the boxes at depth 0 - and thus
      the $\redsurf$-redexes - of $\mathcal{R}$ and $\mathcal{S}$ 
    \item If the reduction in $\mathcal{R}$ does not involve a box, it does not interact with
      $\rednsurf$, and the diagram can be closed as a square with one step reduction in every
      direction
    \item If the reduction in $\net R$ erases a box, we can erase the
      corresponding one in $\mathcal{S}$, which may delete the $\rednsurf$ redex
      involved reduced in $\net S$.  In any case we can reduce $\mathcal{S}$ to
      $\mathcal{S}'$ in one step by erasing this box and $\mathcal{S}'$ to
      $\mathcal{R}'$ by at most one $\rednsurf$ step.
    \item If the reduction in $\net R$ opens a box, we can open the
      corresponding one in $\net S$. If the $\rednsurf$-redex was at depth $1$
      in the same box, then it becomes a $\redsurf$-redex as it now appears at
      the surface. Otherwise, it can be performed in one $\rednsurf$ step. In
      any case, we can close the diagram by peforming either one $\redsurf$ step
      followed by one $\rednsurf$ step, or two $\redsurf$ steps.
  \end{itemize}

  In the cases listed above, the length of the reduction between
  $\mathcal{R}'$ and $\mathcal{S}'$ never exceeds one, hence we can always
  fill the following diagram :
    
    \begin{center}
      \begin{tikzpicture}
        \matrix (m) [matrix of math nodes,row sep=2em,column sep=2em,minimum
          width=2em] 
        {
          \mathcal{R} & \mathcal{R}' \\
          \mathcal{S} & \mathcal{S'} \\
        };
        \path[-stealth, auto]
          (m-1-1) edge[reduceto1step] (m-1-2)
          (m-2-1) edge[reduceto1pstep] (m-2-2)
          (m-2-1) edge[equiv1step] (m-1-1)
          (m-2-2) edge[equiv01step] (m-1-2);
      \end{tikzpicture} 
    \end{center}
  
  Where the bottom line does not involve duplication
  ($\mathcal{S}' \underline{\rednsurf} \mathcal{R}'$ if $\mathcal{S}' =
  \mathcal{R}'$ or $\mathcal{S'} \rednsurf \mathcal{R}'$).

  \begin{itemize}
    \item Duplication is the only $\redsurf$ reduction that can create new
      $\rednsurf$ redexes, but the bottom $\redsurf$ reduction requires exactly
      one step in $\mathcal{S}$. This corresponds to the
      following diagram: 
      
      \begin{center}
        \begin{tikzpicture}
          \matrix (m) [matrix of math nodes,row sep=2em,column sep=2em,minimum
            width=2em] 
          {
            \mathcal{R} & \mathcal{R}' \\
            \mathcal{S} & \mathcal{S'} \\
          };
          \path[-stealth, auto]
            (m-1-1) edge[reduceto1step] (m-1-2)
            (m-2-1) edge[reduceto1step] (m-2-2)
            (m-2-1) edge[equiv1step] (m-1-1)
            (m-2-2) edge[equiv] (m-1-2);
        \end{tikzpicture} 
      \end{center}
  \end{itemize}

  Let us now prove the lemma. We consider the two cases separately : let us assume that $\mathcal{R}
  \redsurf \mathcal{R'}$ is not a duplication.  We perform an induction
  on the length of the reduction $\mathcal{S} \rednsurf^k \mathcal{R}$. The
  induction hypothesis is that we can always fill the following diagram

  \begin{center}
      \begin{tikzpicture}
        \matrix (m) [matrix of math nodes,row sep=2em,column sep=2em,minimum
          width=2em] 
        {
          \mathcal{R} & \mathcal{R}' \\
          \mathcal{S} & \mathcal{S'} \\
        };
        \path[-stealth, auto]
          (m-1-1) edge[reduceto1step] (m-1-2)
          (m-2-1) edge[reduceto] (m-2-2)
          (m-2-1) edge[equivkstep] (m-1-1)
          (m-2-2) edge[equivkpstep] (m-1-2);
      \end{tikzpicture} 
  \end{center}

  With $k' \leq k$, and where the bottom reduction does not contain any
  duplication step.  The base case $k=0$ is trivially true. For the induction
  step, if we have $\mathcal{S} \rednsurf^k \mathcal{T} \rednsurf \mathcal{R}$,
  we use the first diagram of the remarks to get the middle line $T \redsurf^\ast
  T_p$.  Then, we apply the induction hypothesis on each reduction step $T_i \to
  T_{i+1}$, which we can do precisely because our IH states that the lengths
  $k_i$ of each reduction $S_i \rednsurf^{k_i} T_i$ verify $k_i \leq k_{i-1}
  \leq \ldots \leq k$. We paste all the diagrams and get 

  \begin{center}
      \begin{tikzpicture}
        \matrix (m) [matrix of math nodes,row sep=3em,column sep=1em,minimum
          width=1em] 
        {
          \mathcal{R} & & & \mathcal{R}' \\
          \mathcal{T} & \mathcal{T}_1 & \phantom{|}\ldots & \mathcal{T}_p \\
          \mathcal{S} & \mathcal{S}_1 & \phantom{|}\ldots & \mathcal{S}_p \\
        };
        \path[-stealth, auto]
          (m-1-1) edge[reduceto1step] (m-1-4)
          (m-2-1) edge[reduceto1step] (m-2-2)
          (m-2-2) edge[reduceto1step] (m-2-3)
          (m-2-3) edge[reduceto1step] (m-2-4)
          (m-3-1) edge[reduceto1step] (m-3-2)
          (m-3-2) edge[reduceto1step] (m-3-3)
          (m-3-3) edge[reduceto1step] (m-3-4)
          (m-2-1) edge[equiv1step] (m-1-1)
          (m-2-4) edge[equiv01step] (m-1-4)
          (m-3-1) edge[equiv] (m-2-1)
          (m-3-2) edge[equiv] (m-2-2)
          (m-3-4) edge[equiv] (m-2-4);
      \end{tikzpicture} 
  \end{center}

  The case of duplication is simpler as the step $\mathcal{R} \redsurf \mathcal{R}'$
  is reflected by just one step $\mathcal{S} \redsurf \mathcal{S}'$ in the second
  diagram of the remarks. Our IH is now that we can fill the diagram, without
  further assertion on the length of the $\rednsurf$ reduction. Indeed, if $\mathcal{S} \rednsurf^k \mathcal{T}
  \rednsurf \mathcal{R}$, we use the one-step diagram to get $\mathcal{T}'$ and
  apply the IH to fill the bottom part with $\mathcal{S}'$ : 

  \begin{center}
      \begin{tikzpicture}
        \matrix (m) [matrix of math nodes,row sep=2em,column sep=2em,minimum
          width=2em] 
        {
          \mathcal{R} & \mathcal{R}' \\
          \mathcal{T} & \mathcal{T}' \\
          \mathcal{S} & \mathcal{S}' \\
        };
        \path[-stealth, auto]
          (m-1-1) edge[reduceto1step] (m-1-2)
          (m-2-1) edge[reduceto1step] (m-2-2)
          (m-3-1) edge[reduceto1step] (m-3-2)
          (m-2-1) edge[equiv1step] (m-1-1)
          (m-3-1) edge[equivkstep] (m-2-1)
          (m-2-2) edge[equiv] (m-1-2)
          (m-3-2) edge[equiv] (m-2-2);
      \end{tikzpicture} 
  \end{center}

  Finally, we can perform a second induction on the length of the reduction
  $\mathcal{R} \to^+ \mathcal{R}'$ to get the final result.
\end{proof}

Writing a reduction $\net R \to^\ast S$ as blocks $\net R \rednsurf^\ast R_1
\redsurf^\ast R_2 \rednsurf^\ast \ldots \redsurf^\ast R_n$,
we can iterate Lemma~\ref{lemma-nets-swapping} to form a new reduction sequence with
only two distinct blocks:

\begin{lemma}{Standardization}\label{lemma-nets-standardization}\\
  Let $\net R \to^\ast \net S$. Then $\net R \redsurf^\ast \net{R}'
  \rednsurf^\ast \net S$. Moreover, if the original reduction contains at least
  one $\redsurf$ step, then $\net{R} \redsurf^+ \net{R}'$.
\end{lemma}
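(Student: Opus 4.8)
The plan is to prove the Standardization Lemma by induction on the number of "alternations" between $\rednsurf$-blocks and $\redsurf$-blocks in the given reduction sequence, using the Swapping Lemma~\ref{lemma-nets-swapping} as the engine that pushes each surface block leftward past the preceding non-surface block. The key observation is that the Swapping Lemma lets us convert a pattern $\rednsurf^\ast \redsurf^+$ into a pattern $\redsurf^+ \rednsurf^\ast$, and iterating this transformation collects all surface reductions to the front and all non-surface reductions to the back.

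**First I would** write the reduction $\net R \to^\ast \net S$ in the block form $\net R \rednsurf^\ast R_1 \redsurf^\ast R_2 \rednsurf^\ast \cdots \redsurf^\ast R_n$, where blocks strictly alternate in type, and take as induction measure the number of $\redsurf$-blocks that are preceded by a nonempty $\rednsurf$-block (equivalently, the number of $\rednsurf\!\cdots\!\redsurf$ alternation points). When this measure is zero, every $\redsurf$ step already precedes every $\rednsurf$ step, so the sequence is already of the desired shape $\redsurf^\ast \rednsurf^\ast$ and we are done.

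**For the inductive step**, I would locate the leftmost occurrence of a block $\rednsurf^\ast$ immediately followed by a block $\redsurf^+$. This is exactly the situation $\mathcal{S} \rednsurf^\ast \mathcal{R} \redsurf^+ \mathcal{R}'$ handled by Lemma~\ref{lemma-nets-swapping}, which yields an intermediate $\mathcal{S}'$ with $\mathcal{S} \redsurf^+ \mathcal{S}' \rednsurf^\ast \mathcal{R}'$. Substituting this locally into the global sequence replaces the offending $\rednsurf^\ast \redsurf^+$ pattern by $\redsurf^+ \rednsurf^\ast$, moving the surface block one alternation to the left; after merging it with the adjacent preceding $\redsurf$-block and the trailing $\rednsurf^\ast$ with the following $\rednsurf$-block, the alternation measure strictly decreases. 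The induction hypothesis then finishes the job. The "moreover" clause follows because the Swapping Lemma produces a $\redsurf^+$ (at least one surface step) from each applied surface block, so if the original reduction contains any $\redsurf$ step at all, the accumulated front block is nonempty, giving $\net R \redsurf^+ \net R'$.

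**The main obstacle I expect** is purely bookkeeping rather than conceptual: one must be careful that applying Lemma~\ref{lemma-nets-swapping} locally does not disturb the blocks elsewhere in the sequence and that the chosen measure genuinely decreases under each swap, since the produced $\redsurf^+$ may consist of more steps than the original $\redsurf^+$ (the Swapping Lemma only bounds the length of the resulting $\rednsurf$ reduction, not the surface one). The resolution is to count \emph{alternations} of block-type rather than total step length, so that the possible growth in the number of individual surface steps is irrelevant; what matters is that each swap removes one $\rednsurf$-to-$\redsurf$ boundary, and this quantity is a well-founded measure that the transformation strictly reduces.
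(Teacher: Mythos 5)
Your proposal is correct and follows essentially the same route as the paper: decompose the reduction into alternating blocks and iterate the Swapping Lemma to gather all surface steps at the front. The paper's own proof is just this one-line recipe; your version usefully makes explicit the well-founded measure (number of $\rednsurf$-to-$\redsurf$ alternation points) that justifies the iteration terminating, which the paper leaves implicit.
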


\begin{proof}Lemma~\ref{lemma-nets-standardization}\\
This follows from the previous lemma: we decompose the reduction $\net R
\to^\ast S$ as blocks $\net R \rednsurf^\ast R_1 \redsurf^\ast R_2
\rednsurf^\ast \ldots \redsurf^\ast R_n$, and iterate
\ref{lemma-nets-swapping} to gather the reductions into only two distinct
blocks.
\end{proof}

Lemma~\ref{lemma-nets-rednsurf-neutrality} follows from
the observation that as $\redsurf$ only acts on surface and $\rednsurf$
inside boxes, the latter can not interact with the redexes of the former.

\begin{lemma}{Neutrality of $\rednsurf$}\label{lemma-nets-rednsurf-neutrality}\\
  $\rednsurf$ does not create nor erase $\redsurf$-redexes. In
  particular, if $\net R \rednsurf^\ast \net{R}'$, then $\net{R}$ is
  $\redsurf$-normal if and only if $\net{R}'$ is.
\end{lemma}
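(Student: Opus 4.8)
The claim is about $\rednsurf$, which denotes the non-surface $\red{er}$ and $\red{e}$ reductions (reductions that happen strictly inside boxes, at depth $\geq 1$). The surface reduction $\redsurf$ acts only at depth 0. The lemma says two things: (1) $\rednsurf$ neither creates nor erases $\redsurf$-redexes, and (2) consequently, for $\net R \rednsurf^\ast \net R'$, the net $\net R$ is $\redsurf$-normal iff $\net R'$ is.

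Let me think about how I'd prove this.

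**Key structural observation.** The fundamental fact is that $\redsurf$-redexes live at the surface (depth 0), while $\rednsurf$ rewrites occur strictly inside boxes. A $\redsurf$-redex is determined by a pair of cells at depth 0 that form a cut (e.g., a $\parr$/$\otimes$ cut, a contraction/box cut, a cocontraction/dereliction cut, etc.). For $\rednsurf$ NOT to create or erase such redexes, I need: the rewrite inside a box doesn't change the *interface* of that box (its principal/auxiliary ports and their labels) and doesn't change which depth-0 cells exist or how they are wired.

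**The single-step argument.** The heart is the single-step case: suppose $\net R \rednsurf \net R'$ via a $\red{er}$ or $\red{e}$ reduction inside some box $b$. I'd argue:

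First, for the $\red{e}$ rule (opening a closed box against a dereliction — but here at depth $\geq 1$), and the $\red{er}$ rule (erasing a closed box against a weakening), I need to check these don't alter anything observable at depth 0. Since the reduction is internal to $b$, the *contents* of $b$ change, but the box $b$ as a depth-0 cell keeps its principal port, its auxiliary ports, and the formula labels on all its doors. So every depth-0 wire and every depth-0 cell is untouched. A $\redsurf$-redex is a configuration of depth-0 cells/wires, hence the set of $\redsurf$-redexes of $\net R$ and $\net R'$ is literally the same.

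I'd phrase this carefully: the reduction $\rednsurf$ modifies only material at depth $\geq 1$, and the box interface is preserved by $\red{er}$ and $\red{e}$ (they operate on *closed* boxes, removing a box or a $\wn$-dereliction strictly within an enclosing box, which changes neither the enclosing box's arity nor its door labels). Therefore the depth-0 cells, their labels, arities, and the wiring among them are in bijection and identical before and after. Since being a $\redsurf$-redex is a property determined entirely by this depth-0 data, no $\redsurf$-redex is created or destroyed.

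**Main obstacle.** The subtle point I'd watch is whether an internal $\red{er}$ or $\red{e}$ could ever change the *arity* or door-labels of the box it lives in, which would alter how that box cuts against depth-0 cells and thus create or remove a surface redex. I'd argue it cannot: $\red{er}$ and $\red{e}$ are required to act on *closed* boxes (no auxiliary doors), so they operate on a self-contained sub-box and affect only that sub-box's own contents; the enclosing box's free wires — the only thing visible from depth 0 — are unaffected. This is exactly the ``$\rednsurf$ acts inside, $\redsurf$ acts outside'' separation already exploited in Lemma~\ref{lemma-nets-swapping}, so I can lean on the same one-to-one correspondence of depth-0 boxes established there.

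**From one step to the iterated statement.** For part (2), I'd extend part (1) along the reduction $\net R \rednsurf^\ast \net R'$ by a trivial induction on its length: each step preserves the set of $\redsurf$-redexes, so the full set is preserved, and in particular $\net R$ has no $\redsurf$-redex iff $\net R'$ has none, i.e. $\net R$ is $\redsurf$-normal iff $\net R'$ is.

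\begin{proof}[Proof of Lemma~\ref{lemma-nets-rednsurf-neutrality}]
The plan is to reduce everything to a single-step statement and then iterate. I would first treat the case $\net R \rednsurf \net R'$, a single internal $\red{er}$ or $\red{e}$ reduction occurring at depth $\geq 1$ inside some box $b$. The key is that a $\redsurf$-redex is, by definition, a configuration of cells and wires located entirely at depth $0$, determined by the depth-$0$ cells, their labels and arities, and the wiring among them. Since $\red{er}$ and $\red{e}$ operate on \emph{closed} boxes, the step modifies only material strictly inside $b$ and leaves the interface of $b$ — its principal and auxiliary ports together with their formula labels — unchanged. Exactly as in the one-to-one correspondence of depth-$0$ boxes used in Lemma~\ref{lemma-nets-swapping}, all depth-$0$ cells, their labels, arities and their mutual wiring are therefore identical in $\net R$ and $\net R'$.

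Consequently the set of $\redsurf$-redexes of $\net R$ coincides with that of $\net R'$: none is created and none is erased. The point requiring care is that an internal reduction might alter the arity or the door-labels of the enclosing box $b$, which would change how $b$ cuts against depth-$0$ cells; but this cannot happen precisely because $\red{er}$ and $\red{e}$ act on a self-contained closed sub-box and never touch the free wires of $b$, which are the only data visible from depth $0$.

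For the general statement, let $\net R \rednsurf^\ast \net R'$ and proceed by induction on the length of this reduction. The base case is trivial, and each single step preserves the set of $\redsurf$-redexes by the argument above; hence the set of $\redsurf$-redexes is preserved along the whole reduction. In particular $\net R$ has no $\redsurf$-redex if and only if $\net R'$ has none, that is, $\net R$ is $\redsurf$-normal if and only if $\net R'$ is.
\end{proof}
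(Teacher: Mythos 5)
Your proof is correct and follows exactly the same idea as the paper, which dispenses with the lemma in a single sentence: since $\redsurf$ acts only at the surface and $\rednsurf$ only inside boxes, the latter cannot interact with the redexes of the former. Your version merely elaborates that observation (interface preservation of the enclosing box, then induction on the length of the $\rednsurf$-reduction), which is a faithful and somewhat more careful rendering of the paper's argument.
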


We also need some properties about termination of the reduction $\redsurf$ and
$\rednsurf$ :

\begin{lemma}{Strong normalization for $\rednsurf$}\label{lemma-nets-termination-nsurf}\\
  $\rednsurf$ is strongly normalizing.
\end{lemma}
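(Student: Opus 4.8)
The plan is to exhibit a measure valued in the non-negative integers that is strictly decreased by every $\rednsurf$ step. The key observation is that the two reductions constituting $\rednsurf$ — the non-surface instances of $\red{e}$ and $\red{er}$ from Figure~\ref{figure:reduction} — can only destroy exponential boxes, never create them. Accordingly I would set $\#_b(\net R)$ to be the total number of $\oc_p$ cells occurring in $\net R$ counted \emph{at every depth}, so that boxes nested inside boxes are tallied recursively. Since this is a non-negative integer, it suffices to prove that each $\rednsurf$ step strictly decreases it, after which strong normalization follows from the well-foundedness of the usual order on the naturals.

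First I would check that $\#_b$ is well defined on equivalence classes: the equivalences of Table~\ref{figure:equivalence} only rearrange $(\mathrm{co})$contraction trees and absorb $(\mathrm{co})$weakenings, hence involve no $\oc_p$ cell and leave $\#_b$ unchanged. Then I would analyse the two rules. The rule $\red{e}$ opens a closed box: the $\oc_p$ cell disappears and is replaced by the net $\pi$ it carried, whose own sub-boxes are preserved and were already counted, so $\#_b$ drops by exactly one. The rule $\red{er}$ erases a closed box together with its whole content, removing the box itself and possibly further nested boxes, so $\#_b$ drops by at least one. Crucially, neither rule introduces a new $\oc_p$ cell. Because $\#_b$ counts boxes at all depths, the same count decreases whether the redex sits at the surface or inside a box, so a $\rednsurf$ step — which by definition fires one of these redexes at depth $\geq 1$ — strictly decreases $\#_b$. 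Extending the measure additively to sums of nets is harmless, since a $\rednsurf$ step rewrites a single summand without splitting it (the only branching rule $\red{nd}$ is surface-only), and the total $\#_b$ still strictly decreases, falling further when a summand collapses to the empty net under $\red{er}$.

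This is essentially a counting argument, so I do not expect a serious obstacle. The only point that needs care is to define the measure so that it counts boxes recursively at \emph{every} depth rather than only at the surface: since $\rednsurf$ acts precisely inside boxes, a surface-only box count would be left unchanged by a deep step and would fail to be a termination measure. Once the measure is fixed in this global way, a strictly decreasing sequence of natural numbers is finite, hence no infinite $\rednsurf$ reduction exists, which is exactly the statement.
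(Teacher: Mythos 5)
Your argument is correct and is essentially the paper's own proof, which likewise observes that opening or deleting a box strictly decreases the total number of boxes in the net. Your additional checks (invariance under the equivalence relation, counting at all depths, extension to sums) are sensible elaborations of the same counting measure rather than a different route.
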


\begin{proof}
  It suffices to note that opening or deleting a box strictly decreases the
  total number of boxes in the net.
\end{proof}

\begin{lemma}\label{lemma-nets-termination-surf}
  Theorem~\ref{theorem-nets-termination} is true when replacing $\to$ with
$\redsurf$.
\end{lemma}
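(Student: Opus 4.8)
The plan is to show that the surface reduction $\redsurf$ satisfies the \emph{diamond property} and then to conclude by the classical fact that this property forces weak and strong normalization to coincide. Strong normalization trivially implies weak normalization, so only the converse requires an argument. Concretely, I want to prove: whenever $\net R \redsurf \net{R}_1$ and $\net R \redsurf \net{R}_2$ with $\net{R}_1 \ne \net{R}_2$, there is a $\net T$ with $\net{R}_1 \redsurf \net T$ and $\net{R}_2 \redsurf \net T$, each in a single step (read modulo $\equiv$, and in the module of sums where the rule $\red{nd}$ is concerned).

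The first observation is that $\redsurf$ has no critical pairs. Every cell carries a unique principal port, and each $\redsurf$-redex is an active pair: two cells — or a dereliction/contraction/weakening and the principal door of a box — joined principal port to principal port. Since a wire pairs exactly two ports, a given cell belongs to at most one active pair, and therefore two distinct redexes are always cell-disjoint. In particular neither redex touches the principal wire of the other, so the only way that firing $\rho_1 : \net R \redsurf \net{R}_1$ could prevent firing $\rho_2$ afterwards would be by \emph{duplicating} or \emph{erasing} the cells of $\rho_2$.

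This is exactly where the surface restriction does its work, and proving that it rules out duplication or erasure of a redex is the heart of the argument. The only rules that duplicate or erase material are $\red{d}$, $\red{er}$, $\red{s_1}$, $\red{s_2}$ and $\red{\epsilon}$; by hypothesis the box rules act on \emph{closed} boxes, whose sole external link is the principal door consumed by the reduction, and the structural rules on (co)weakenings only rewrite their own active pair. Hence the duplicated box content lives at depth $\ge 1$ and is never a $\redsurf$-redex, while the erased material is confined to $\rho_1$ itself; thus $\rho_2$ survives untouched, save possibly for a harmless rewiring at its auxiliary ports, which does not affect its being a redex. (This is also precisely the point that fails for the full reduction $\to$, where duplicating a box copies the redexes inside it — which is why $\to$ is only confluent and not diamond.) Two disjoint, non-duplicating redexes then commute, closing the diamond as a genuine square; for $\red{nd}$ one simply fires the surviving redex inside each of the two summands, closing the diamond in the module of sums. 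Since $\equiv$ is a strong bisimulation for $\redsurf$, the square closes up to $\equiv$.

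Finally, I invoke the classical lemma that the diamond property implies that weak normalization equals strong normalization — indeed that all reductions of a given net to a normal form have equal length. The proof is a short induction on the length $n$ of a normalizing reduction $\net R \redsurf^\ast \net N$: for $n = 0$ the net is already normal; for $n > 0$, any alternative first step $\net R \redsurf \net{R}'$ is joined to the chosen one by the diamond, and the induction hypothesis applied to the one-step reducts bounds every reduction from $\net R$ by $n$, so $\net R$ is strongly normalizing. Applying this to the surface system gives Lemma~\ref{lemma-nets-termination-surf}.
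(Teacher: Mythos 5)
Your proposal is correct and follows the same route as the paper, which simply asserts that $\redsurf$ satisfies the diamond property and invokes the classical fact that this forces weak and strong normalization to coincide. You additionally supply the justification the paper leaves implicit — that surface redexes are cell-disjoint active pairs and that the closedness of boxes confines duplication and erasure to depth $\geq 1$ — which is a faithful expansion rather than a different argument.
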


\begin{proof}
  The surface reduction satisfies the diamond property,
  which excludes the existence of a weakly normalizing term with an
  infinite reduction.
\end{proof}

We can finally prove Theorem~\ref{theorem-nets-termination}:

\begin{proof}Theorem~\ref{theorem-nets-termination}\\
  We prove two auxiliary properties:
\begin{description}
  \item[(a)] \textbf{$\to$-weak normalization implies $\redsurf$-weak
    normalization} \\
    Let $\net R$ be $\to$-weakly normalizing, $\net R \to^\ast \net N$ a
    reduction to its normal form. By \ref{lemma-nets-standardization}, we can write
    $\net R \redsurf^\ast \net S \rednsurf^\ast N$. $N$ being a
    $\to$-normal form, it is also a $\redsurf$-normal form, and by
    \ref{lemma-nets-rednsurf-neutrality} so is $\net S$. $\net R$ is thus a
    $\redsurf$-weakly normalizing.
  \item[(b)] \textbf{an infinite $\to$-reduction gives an infinite
    $\redsurf$-reduction}\\
    Let $\net R$ be a net with an infinite reduction, written $\net R \to^\ast
    \infty$. We will prove by induction that for any $n \geq 0$, there exists
    $\net{R}_n$ such that $R \redsurf^n \net{R}_n \to^\ast \infty$.
    \begin{itemize}
      \item Case $n = 0$\\
        We just take $\net{R}_0 = \net R$
      \item Inductive case\\ If $\net R \redsurf^n \net{R}_n \to^\ast \infty$,
        we take any infinite reduction starting from $\net{R}_n$. If the first
        step is $\net{R}_n \redsurf \net S$, then we take $\net{R}_{n+1} = \net
        S$. Otherwise, the first step is a $\rednsurf$ step, and we take the
        maximal block of $\rednsurf$ reduction starting from $\net{R}_n$. By
        \ref{lemma-nets-termination-nsurf}, this block must indeed be finite and we can write $\net{R}_n
        \rednsurf^\ast \net S \redsurf \net{S}' \to \infty$. By \ref{lemma-nets-swapping},
        we can swap the two blocks such that $\net{R}_n \redsurf \net{R}'
        \redsurf^\ast
        \net{R}'' \rednsurf^\ast \net{S}'$, and we take $\net{R}_{n+1} =
        \net{R}'$.
    \end{itemize}
\end{description}
From these two points follows that $\to$-weak normalization implies $\to$-strong
normalization. If a net is $\to$-weakly normalizing, then it is
$\redsurf$-weakly normalizing by (a). By
\ref{lemma-nets-termination-surf}, it is also $\redsurf$-strongly normalizing.
But by (b) it must be also $\to$-strongly normalizing.
\end{proof}

%
%
%
%
The next section focuses on a specific family of proof nets that play a key role
in the expression of communication primitives inside proof nets.

\section{Routing Areas}
\label{section:areas}

Let us now define and study a special kind of nets: the
\emph{routing areas}. It is a generalization of the construction of
communication areas introduced in \cite{EHRHARD2010606}. The approach is
similar: we aim at constructing nets to be used as building blocks to implement
communication primitives. We shall see that this seemingly restricted class of nets is
actually the set of normal forms of a fragment of proof nets
(Theorem~\ref{theorem-mrarea-charact}). Routing areas are composed only of structural
rules: contraction, weakening, cocontraction and coweakening. These basic
components act as resource dispatchers (a resource designates a closed exponential box in
the following):

\begin{itemize}
  \item
    $\tikzvcenter{\begin{proofnet}\init[grow=left]{\wire{}}\end{proofnet}}$\\ A
    free wire acts as the identity. It passively forwards a resource
    that is connected on the \emph{input} (the left port) to the \emph{output}
    (the right port).
  \item 
    $\tikzvcenter{\begin{proofnet}
    \init[grow=right]{\binary{\wn}{\wire{}}{\wire{}}} \end{proofnet}}$ \\ A
    contraction is a broadcaster with one input and two outputs. A resource
    connected on the left will be copied to both outputs on the right. A
    weakening is a degenerate case of a broadcaster with zero outputs as
    broadcasting something to no one is the same as erasing it.
  \item 
    $\tikzvcenter{\begin{proofnet}
      \init[grow=left]{\binary{\oc}{\wire{}}{\wire{}}}
    \end{proofnet}}$ \\
    Dually, a cocontraction is a packer with two inputs and one output. A packer
    aggregates its two inputs non deterministically. When a dereliction is
    connected to the output to consume two packed resources, a non-deterministic
    sum of the two possible choices for the one to be provided is produced.
    Similarly, coweakening is seen as a degenerate packer with no inputs.
\end{itemize}

%
A routing area can be seen as a router, or a circuit, between inputs and
outputs.  Inputs are connected to contractions which broadcast resources they
receive to cocontractions. Cocontractions may gather resources from
multiple such sources.  The conclusion of these cocontractions form the outputs.
A routing area is then described by a slight generalization of a relation
between sets, a
\emph{multirelation}.  Its role is to define the \emph{wiring diagram} which
specifies which inputs and outputs are connected. Let us first introduce
multirelations:



\begin{description}
  \item[Multirelation] Let $A$ and $B$ be two sets, a \emph{multirelation} $R$
    between $A$ and $B$ is a multiset of elements of $A \times B$, or concretely
    a map $R : A \times B \to \mathbb{N}$.  For $k \in \mathbb{N}$, we write $x
    \mathrel{R_k} y$ if $R(x,y) = k$.
  \item[Relations and multirelations] 
    A relation $R$ between $A$ and $B$ can be seen as a multirelation by taking
    its characteristic function $\mathbbm{1}_{R}$.Conversely, we can forget the
    multiplicity of a multirelation $S$ and recover a relation by taking the
    subset of $A \times B$ defined by $\{ (x,y) \in A \times B \mid S(x,y) \geq
    1 \}$.
  \item[Composition]
    Multirelations enjoy a composition operation that computes all the ways
    to go from an element to another with multiplicities. For multirelations $R,S$
    respectively between $A$ and $B$, and $B$ and $C$,
    \[
      (S \circ R)(x,z) = \sum_{y \in B} R(x,y)S(y,z)
    \]
    This composition is associative, coincides with the usual one for relations,
    and has the identity relation (seen as a multirelation) for neutral. This is
    in fact the matrix multiplication, seeing a multirelation between $A$ and
    $B$ as a $\cardn{B} \times \cardn{A}$ matrix with integer coefficients
    $R(i,j)$ (identifying finite sets with their cardinal).
  \item[The \textbf{FMRel} Category] Finite sets and multirelations between them
    form a category \cat{FMRel}, with the category \cat{FRel} of finite sets and
    relations as a subcategory.  \cat{FMRel} has finite coproducts, extending
    the one of \cat{FRel}, and corresponding to the direct sum of matrices.
\end{description}

%
%

%

A routing area is described by a multirelation between its inputs and its
outputs. Its value at the pair $(i,o)$ indicates how many times the input $i$ is
connected to the output $o$.  We are now ready to construct the \emph{routing
area} defined by a multirelation.

\begin{definition}{Routing area}\label{definition-routingarea}\\
  Let $\labin, \labout$ be two finite sets called the input labels and the
  output labels, and a multirelation $R$ between $\labin$ and $\labout$. A  
  routing area $\routarea R$ associated to the triplet $(\labin,\labout,R)$
  is a net constructed as follows:
  \begin{itemize}
    \item It has $\cardn{\labin} + \cardn{\labout}$ free wires partitioned into
      $\cardn{\labin}$ \emph{inputs} and $\cardn{\labout}$ \emph{outputs}. Each input is
      labelled by a distinct element of $\labin$, while outputs are labelled by
      distinct elements of $\labout$.
    \item Each input (resp. output) is connected to the main port of a
      contraction (resp. cocontraction) tree. Then, for every $(i,o) \in \labin
      \times \labout$, we connect the tree of the input $i$ to the tree of
      output $o$ with exactly $R(i,o)$ wires.
  \end{itemize}
\end{definition}
\begin{center}
  \tikzvcenter{
    \begin{proofnet}
      \init[grow=right]{
        \nary[name=i1]{\wn}{
          \wire{\point(i11)},
          \point(i12),
          \wire{\point(i13)}
        }
      }
      \init[grow=right]([yshift=7.5ex,xshift=-5pt]i1.pal){
        \nary[name=im]{\wn}{
          \wire{\point(im1)},
          \point(im2),
          \wire{\point(im3)}
        }
      }
      \init[grow=left]([xshift=15ex]i1.pal){
        \nary[name=o1]{\oc}{
          \wire{\point(o11)},
          \point(o12),
          \wire{\point(o13)}
        }
      }
      \init[grow=left]([yshift=7.5ex,xshift=5pt]o1.pal){
        \nary[name=on]{\oc}{
          \wire{\point(on1)},
          \point(on2),
          \wire{\point(on3)}
        }
      }
      \pdots[2pt](i11)(i13)
      \pdots[2pt](im1)(im3)
      \pdots[5pt](i1.pal)(im.pal)
      \pdots[2pt](o11)(o13)
      \pdots[2pt](on1)(on3)
      \pdots[5pt](o1.pal)(on.pal)
    \end{proofnet}  
  }
\end{center}
We represent them as rectangular boxes, with the inputs appearing on the left and
outputs on the right.

\begin{definition}{Arity}\label{definition-routing-arity}\\
Let $(\labin, \labout, R)$ be a routing area. For an input $i \in \labin$, we
define its \emph{arity} as the number of leafs of the associated contraction
tree given by $\arity{i} = \sum_{o \in \labout} R(i,o)$. Similarly, the arity of
an output $o \in \labout$ is defined by $\arity{o} = \sum_{i \in \labin}
R(i,o)$. The set of outputs (resp. inputs) connected to an input $i$ (resp.
output $o$) is defined as $\conn{i} = \{ o \in \labout \mid R(i,o) \geq 1 \}$
(resp. $\conn{o} = \{ i \in \labin \mid R(i,o) \geq 1 \}$). In general, for an
input or an output $x$, $\arity{x} \leq \cardn{\conn{x}}$.  This is an equality
for all $x$ if and only if $R$ is a relation.
\end{definition}


Routing areas may be combined in two ways such that the
resulting proof net reduces to a new routing area. The multirelation
describing the result can be computed directly from the initial multirelations of routing
areas involved, giving a way of building complex circuits from small components.

\paragraph*{Operations}

The first operation, juxtaposition, amounts to put side by side two routing areas.
The result is immediately seen as a routing area itself, described by the
coproduct of the two multirelations:

\begin{definition}{Juxtaposition}\label{definition-mrarea-juxt}\\
  Let $\routarea R = (\labin,\labout,R)$ and $\routarea S =
  (\labin',\labout',S)$, we define the juxtaposition $\routarea R + \routarea S$
  by $(\labin + \labin', \labout + \labout', R + S)$. The corresponding net is
  obtained by juxtaposing the nets of $\routarea R$ and $\routarea S$:

  \begin{center}
    \tikzvcenter{
      \begin{proofnet}
        \node[net={30pt}{30pt}] (R) {R};
        \coordinate (Ri1) at ([yshift=-10pt]R.west);
        \coordinate (Rim) at ([yshift=10pt]R.west);
        \coordinate (Ro1) at ([yshift=-10pt]R.east);
        \coordinate (Ron) at ([yshift=10pt]R.east);
        \init{
          \wirefromto{[xshift=-10pt]Ri1}{[xshift=-10pt]Ri1}{Ri1}{Ri1}
          \wirefromto{[xshift=-10pt]Rim}{[xshift=-10pt]Rim}{Rim}{Rim}
          \wirefromto{[xshift=10pt]Ro1}{[xshift=10pt]Ro1}{Ro1}{Ro1}
          \wirefromto{[xshift=10pt]Ron}{[xshift=10pt]Ron}{Ron}{Ron}
        }
        \pdots[2pt]([xshift=-5pt]Ri1)([xshift=-5pt]Rim)
        \pdots[2pt]([xshift=5pt]Ro1)([xshift=5pt]Ron)
        \node[net={30pt}{30pt},below=10pt of R] (S) {S};
        \coordinate (Si1) at ([yshift=-10pt]S.west);
        \coordinate (Sim) at ([yshift=10pt]S.west);
        \coordinate (So1) at ([yshift=-10pt]S.east);
        \coordinate (Son) at ([yshift=10pt]S.east);
        \init{
          \wirefromto{[xshift=-10pt]Si1}{[xshift=-10pt]Si1}{Si1}{Si1}
          \wirefromto{[xshift=-10pt]Sim}{[xshift=-10pt]Sim}{Sim}{Sim}
          \wirefromto{[xshift=10pt]So1}{[xshift=10pt]So1}{So1}{So1}
          \wirefromto{[xshift=10pt]Son}{[xshift=10pt]Son}{Son}{Son}
        }
        \pdots[2pt]([xshift=-5pt]Si1)([xshift=-5pt]Sim)
        \pdots[2pt]([xshift=5pt]So1)([xshift=5pt]Son)
      \end{proofnet}
      $=$
      \begin{proofnet}
        \node[net={30pt}{70pt}] (box) {\routarea{R} + \routarea{S}};
        \coordinate (i1) at ([yshift=-20pt]box.west);
        \coordinate (im) at ([yshift=20pt]box.west);
        \coordinate (o1) at ([yshift=-20pt]box.east);
        \coordinate (on) at ([yshift=20pt]box.east);
        \init{
          \wirefromto{[xshift=-10pt]i1}{[xshift=-10pt]i1}{i1}{i1}
          \wirefromto{[xshift=-10pt]im}{[xshift=-10pt]im}{im}{im}
          \wirefromto{[xshift=10pt]o1}{[xshift=10pt]o1}{o1}{o1}
          \wirefromto{[xshift=10pt]on}{[xshift=10pt]on}{on}{on}
        }
        \pdots[2pt]([xshift=-5pt]i1)([xshift=-5pt]im)
        \pdots[2pt]([xshift=5pt]o1)([xshift=5pt]on)
      \end{proofnet}
    }
  \end{center}
\end{definition}

%

The second one is more involved: the trace operation consists in connecting
an input to an output given that they are not related to begin with, to avoid
the creation of a cycle. Doing so, we remove this output and input from the
external interface, and create new internal paths between remaining inputs and
outputs. If we reduce the resulting net to a normal form, we obtain a new area,
whose multirelation can be computed from the initial one. 

\begin{definition}{Trace}\label{definition-mrarea-trace}\\
  Let $\mathbf{R} = (\labin,\labout,R)$ be a routing area, and $(i,o) \in \labin
  \times \labout$. The \emph{trace} at $(i,o)$ of $\routarea
  R$ is obtained by connecting the input $i$ with the output $o$ of $\routarea
  R$ and reducing this net to its normal form.
\end{definition}

\begin{property}{Trace is a routing area}\label{property-mrarea-trace}\\
  Let $\mathbf{R} = (\labin,\labout,R)$ be a routing area, and $(i,o) \in \labin
  \times \labout$ such that $i \mathrel{R_0} o$. Then the trace at $(i,o)$ of
  $\routarea R$ is a routing area $\routarea T$ :
  \begin{center}
    \tikzvcenter{
      \begin{proofnet}
        \node[net={30pt}{30pt}] (R) {R};
        \coordinate (Ri1) at ([yshift=-10pt]R.west);
        \coordinate (Rim) at ([yshift=10pt]R.west);
        \coordinate (Ro1) at ([yshift=-10pt]R.east);
        \coordinate (Ron) at ([yshift=10pt]R.east);
        \init{
          \wirefromto{[xshift=-10pt]Ri1}{[xshift=-10pt]Ri1}{Ri1}{Ri1}
          \wirefromto{[xshift=-10pt]Rim}{[xshift=-10pt]Rim}{Rim}{Rim}
          \wirefromto{[xshift=10pt]Ro1}{[xshift=10pt]Ro1}{Ro1}{Ro1}
          \wirefromto{[xshift=10pt]Ron}{[xshift=10pt]Ron}{Ron}{Ron}
        }
        \pdots[2pt]([xshift=-5pt]Ri1)([xshift=-5pt]Rim)
        \pdots[2pt]([xshift=5pt]Ro1)([xshift=5pt]Ron)
        \draw[rounded corners=3pt]
          ([xshift=-10pt]Ri1) -- ++(0,-10pt) -| ([xshift=10pt]Ro1);
      \end{proofnet}
      $\to^\ast$
      \begin{proofnet}
        \node[net={30pt}{30pt}] (box) {T};
        \coordinate (i1) at ([yshift=-10pt]box.west);
        \coordinate (im) at ([yshift=10pt]box.west);
        \coordinate (o1) at ([yshift=-10pt]box.east);
        \coordinate (on) at ([yshift=10pt]box.east);
        \init{
          \wirefromto{[xshift=-10pt]i1}{[xshift=-10pt]i1}{i1}{i1}
          \wirefromto{[xshift=-10pt]im}{[xshift=-10pt]im}{im}{im}
          \wirefromto{[xshift=10pt]o1}{[xshift=10pt]o1}{o1}{o1}
          \wirefromto{[xshift=10pt]on}{[xshift=10pt]on}{on}{on}
        }
        \pdots[2pt]([xshift=-5pt]i1)([xshift=-5pt]im)
        \pdots[2pt]([xshift=5pt]o1)([xshift=5pt]on)
      \end{proofnet}
    }
  \end{center}
  Where $\routarea T = (\labin - \{i\},\labout - \{o\},T)$ is defined by the
  multirelation :
  \begin{equation}\label{routing-eq-trace}
    T(x,y) = R(x,y) + R(x,o)R(i,y)
  \end{equation}
\end{property}

The formula~\ref{routing-eq-trace} expresses that in the resulting routing area
$\routarea T$, the total number of ways to go from an input $x$ to an output $y$
is the number of direct paths $R(x,y)$ from $x$ to $y$ that were originally in $R$, plus
all the ways of going from $x$ to $o$ times the ways of going from $i$ to $y$.
Indeed, any pair of such paths yields a new distinct path in the trace once
$i$ and $o$ have been connected.

\begin{proof}Property~\ref{property-mrarea-trace}\\
  If $\arity{i} = \arity{o} = 0$, then we connected a coweakening to a weakening
  and we can erase them to recover the desired area where $T$ is just the
  restriction of $R$ to $(\labin - \{i\}) \times (\labout - \{o\})$, which
  agrees with the formula of \ref{property-mrarea-trace} as the product
  $R(x,o)R(i,y)$ is always zero.

  Now, assume that $\arity{o} = 0 < \arity{i}$. We can reduce the introduced
  redex as follow :
\begin{center}
  \tikzvcenter{
    \begin{proofnet}
      \init[grow=left]{
        \zeroary[normal,name=i1]{\oc}
      }
      \init[grow=right]{
        \nary[name=im]{\wn}{
          \wire{\point(im1)},
          \point(im2),
          \wire{\point(im3)}
        }
      }
      \pdots[2pt](i11)(i13)
      \pdots[2pt](im1)(im3)
    \end{proofnet}  
    $\to^\ast$
    \begin{proofnet}
      \init[grow=left]{
        \zeroary[normal,name=o1]{\oc}
      }
      \init[grow=left]([yshift=7.5ex,xshift=5pt]o1.pal){
        \zeroary[normal,name=on]{\oc}
      }
      \pdots[5pt](o1.pal)(on.pal)
    \end{proofnet} 
  }
\end{center}

These coweakening are connected to the trees of the outputs $\conn{i}$.
These are either connected to a wire, or a
cocontraction tree and we can eliminate superfluous coweakening using the equivalence
relation. Once again, we didn't create new paths and recover an area whose
relation is the restriction of $R$, still agreeing with the formula as
$\arity{i} = 0$ implies $R(x,o)R(i,y)$ being zero again. The dual case $\arity{i} = 0 <
\arity{o}$ is treated the same way.
 
The general case relies on the commutation of contractions and cocontractions
trees that can be derived by iterating the $\red{ba}$ rule. We can apply the
following reduction on the trees of $i$ and $o$ that have been connected :

\begin{center}
  \includegraphics{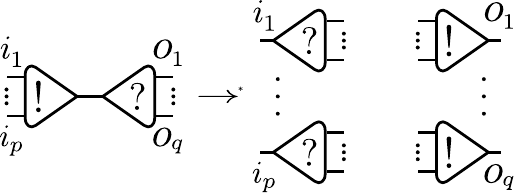}
\end{center}

where wires $i_1,\ldots,i_p$ are connected to the trees of the inputs in
$\conn{o}$ and $o_1,\ldots,o_q$ to the trees of the outputs of $\conn{i}$. The
reduced net now has the shape of a routing area. As before, the direct
paths between $x$ and $y$ when $x \neq i$ and $y \neq o$ are left unchanged. But
any couple of paths in $R$ between $x$ and $o$ arriving at some $i_k$ and
between $i$ and $y$ arriving at some $o_l$ yields exactly one new path between $x$
and $y$ in the new area (see \ref{property-mrarea-path-preservation} below). By
definition, there are $R(x,o)$ paths connecting $x$ to $o$ and $R(i,y)$
connecting $i$ to $y$ : hence there are $R(x,o)R(i,y)$ such couples.
\end{proof}

These two operations are sufficient to implement composition which is the
connection of an output of an area to an input of another area. Composition is a
fundamental feature of routing areas. This is what makes them modular, allowing
to build routing areas by composing simple blocks. To connect an output $o$ of
$\routarea R$ to an input $i$ of $\routarea S$, we first perform the
juxtaposition followed by a trace at $(i,o)$.  This is similar both in form and
in spirit, to the composition of Game Semantic or Geometry of Interaction whose
motto is ``composition = parallel composition plus hiding''. 

\begin{corollary}{Composition}\label{corollary-mrarea-composition}\\
  Let $\routarea{R} = (\labin,\labout,R)$ and $\routarea{S} = (\labin',\labout',S)$ be
  two routing areas, $o \in \labout$ and $i \in \labin'$. Then the net resulting
  from connecting the output $o$ to the input $i$ can be reduced to a new
  routing area $\routarea{T} = (\labin + \labin' - \{i\}, \labout + \labout' -
  \{o\}, T)$
  
  \begin{center}
    \tikzvcenter{
      \begin{proofnet}
        \node[net={30pt}{30pt}] (R) {R};
        \coordinate (Ri1) at ([yshift=-10pt]R.west);
        \coordinate (Rim) at ([yshift=10pt]R.west);
        \coordinate (Ro1) at ([yshift=-10pt]R.east);
        \coordinate (Rok) at (R.east);
        \coordinate (Ron) at ([yshift=10pt]R.east);
        \init{
          \wirefromto{[xshift=-10pt]Ri1}{[xshift=-10pt]Ri1}{Ri1}{Ri1}
          \wirefromto{[xshift=-10pt]Rim}{[xshift=-10pt]Rim}{Rim}{Rim}
          \wirefromto{[xshift=10pt]Ro1}{[xshift=10pt]Ro1}{Ro1}{Ro1}
          \wirefromto{[xshift=10pt]Ron}{[xshift=10pt]Ron}{Ron}{Ron}
        }
        \pdots[2pt]([xshift=-5pt]Ri1)([xshift=-5pt]Rim)
        \pdots[2pt]([xshift=5pt]Ro1)([xshift=5pt]Rok)
        \pdots[2pt]([xshift=5pt]Rok)([xshift=5pt]Ron)
        \node[net={30pt}{30pt},right=10ex of R] (S) {S};
        \coordinate (Si1) at ([yshift=-10pt]S.west);
        \coordinate (Sil) at (S.west);
        \coordinate (Sim) at ([yshift=10pt]S.west);
        \coordinate (So1) at ([yshift=-10pt]S.east);
        \coordinate (Son) at ([yshift=10pt]S.east);
        \init{
          \wirefromto{[xshift=-10pt]Si1}{[xshift=-10pt]Si1}{Si1}{Si1}
          \wirefromto{[xshift=-10pt]Sim}{[xshift=-10pt]Sim}{Sim}{Sim}
          \wirefromto{[xshift=10pt]So1}{[xshift=10pt]So1}{So1}{So1}
          \wirefromto{[xshift=10pt]Son}{[xshift=10pt]Son}{Son}{Son}
          \wirefromto{Rok}{Rok}{Sil}{Sil}
        }
        \pdots[2pt]([xshift=-5pt]Si1)([xshift=-5pt]Sil)
        \pdots[2pt]([xshift=-5pt]Sil)([xshift=-5pt]Sim)
        \pdots[2pt]([xshift=5pt]So1)([xshift=5pt]Son)
      \end{proofnet}
      $\to^\ast$
      \begin{proofnet}
        \node[net={30pt}{30pt}] (box) {T};
        \coordinate (i1) at ([yshift=-10pt]box.west);
        \coordinate (im) at ([yshift=10pt]box.west);
        \coordinate (o1) at ([yshift=-10pt]box.east);
        \coordinate (on) at ([yshift=10pt]box.east);
        \init{
          \wirefromto{[xshift=-10pt]i1}{[xshift=-10pt]i1}{i1}{i1}
          \wirefromto{[xshift=-10pt]im}{[xshift=-10pt]im}{im}{im}
          \wirefromto{[xshift=10pt]o1}{[xshift=10pt]o1}{o1}{o1}
          \wirefromto{[xshift=10pt]on}{[xshift=10pt]on}{on}{on}
        }
        \pdots[2pt]([xshift=-5pt]i1)([xshift=-5pt]im)
        \pdots[2pt]([xshift=5pt]o1)([xshift=5pt]on)
      \end{proofnet}
    }
  \end{center}
\end{corollary}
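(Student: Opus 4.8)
The plan is to realize composition exactly as announced in the paragraph preceding the statement: connecting the output $o$ of $\routarea R$ to the input $i$ of $\routarea S$ is, by construction, the same net as first juxtaposing the two areas and then tracing at $(i,o)$. So I would reduce the corollary to the two operations already established, Definition~\ref{definition-mrarea-juxt} (juxtaposition) and Property~\ref{property-mrarea-trace} (trace), rather than analysing reductions directly.

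First I would form the juxtaposition $\routarea R + \routarea S = (\labin + \labin', \labout + \labout', R + S)$, which by Definition~\ref{definition-mrarea-juxt} is obtained simply by placing the two nets side by side and is itself a routing area. After juxtaposition, $o \in \labout$ is an output and $i \in \labin'$ is an input of this single area, so connecting them and reducing to normal form is, by Definition~\ref{definition-mrarea-trace}, precisely the trace of $\routarea R + \routarea S$ at $(i,o)$.

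The key point, and the only genuine verification, is that the hypothesis of Property~\ref{property-mrarea-trace} is met, namely $i \mathrel{(R+S)_0} o$, i.e. $(R+S)(i,o) = 0$. This is immediate from the fact that $R + S$ is the coproduct of multirelations, that is the direct sum of the corresponding matrices: there are no cross terms between the block indexed by $\labin \times \labout$ and the block indexed by $\labin' \times \labout'$. Since $i$ lives in the $\labin'$ summand and $o$ in the $\labout$ summand, we get $(R+S)(i,o) = 0$, so the trace is well defined and, by Property~\ref{property-mrarea-trace}, yields a routing area on the domain $(\labin + \labin' - \{i\}) \times (\labout + \labout' - \{o\})$, as required.

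It then remains to identify $T$ by specializing the trace formula~\ref{routing-eq-trace}, $T(x,y) = (R+S)(x,y) + (R+S)(x,o)(R+S)(i,y)$. Using the coproduct structure once more, $(R+S)(x,o)$ vanishes unless $x \in \labin$, where it equals $R(x,o)$, and $(R+S)(i,y)$ vanishes unless $y \in \labout'$, where it equals $S(i,y)$; hence the correction term contributes $R(x,o)\,S(i,y)$ exactly on the $\labin \times \labout'$ block and is zero elsewhere, while the first summand restricts to $R$ on the $\labin \times \labout$ block and to $S$ on the $\labin' \times \labout'$ block. I expect no real obstacle here: the whole argument is a bookkeeping consequence of the two earlier operations, the only care needed being the systematic vanishing of cross terms inherited from the direct-sum structure of the coproduct.
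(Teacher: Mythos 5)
Your proposal is correct and follows exactly the route the paper intends: the corollary is obtained by juxtaposition (Definition~\ref{definition-mrarea-juxt}) followed by the trace at $(i,o)$ (Property~\ref{property-mrarea-trace}), which is precisely the ``parallel composition plus hiding'' recipe stated in the paragraph preceding the corollary. Your explicit check that $(R+S)(i,o)=0$ via the block structure of the coproduct, and your derivation of the formula $T(x,y)$ with the cross term $R(x,o)S(i,y)$, fill in details the paper leaves implicit but add nothing that departs from its argument.
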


\begin{remark}
This operation can be generalized to the connection of $n$ outputs of $R$ to
$n$ inputs of $S$. When $n = \cardn{\labout} = \cardn{\labin'}$, the 
multirelation $T$ describing the resulting routing area is the
composed $S \circ R$.
\end{remark}

The following property gives the high level operational behavior of a routing
area. It supports our interpretation of routing areas as dispatchers of
exponential boxes. Given a closed exponential box, we connect it to the
auxiliary port of a cocontraction to obtain a module which can then be connected
to an input $i$ of an area.  Through reduction, the box will traverse the area
and be duplicated $R(i,o)$ times to each output $o$. The role of the additional
cocontraction is to preserve the area and allow future connections to the same
input. We would get a similar transit property connecting directly the
exponential box to $i$, but the process is destructive as it erases
the input wire of $i$ and prevents any future use.

\begin{property}{Transit}\label{property-translation-transit}\\
  Let $\sigma$ be a closed exponential box, $\mathbf{R} = (\labin,\labout,R)$ a
  routing area, $i \in \labin$. Let $\{ o_1, \ldots, o_p \} = \conn{i}$ and for $1 \leq k \leq p,\ c_k = R(i,o_k)$. Then :
  \begin{center}
    \includegraphics{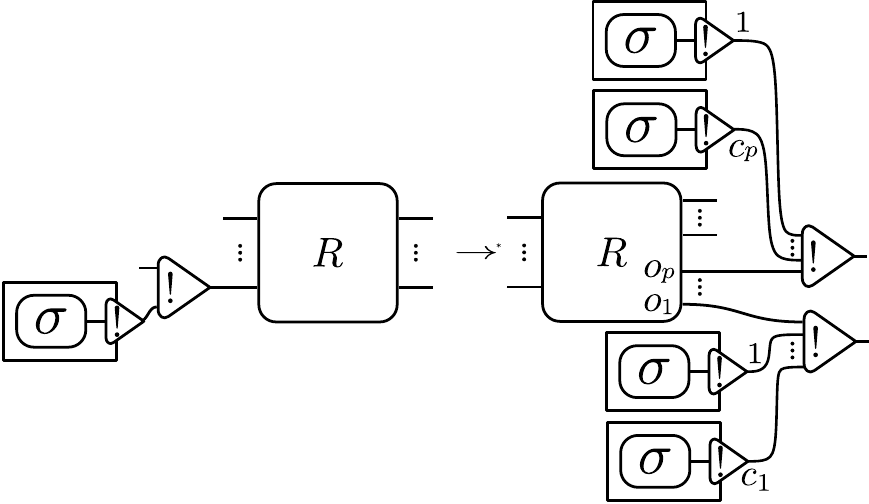} 
  \end{center}
\end{property}

The property is straightforward application of reduction rules and the net
equivalence.

\subsection{The Routing Semantic}

Routing areas do not only fulfill practical needs. They are general enough to be
the language of normal forms of \emph{routing nets}. Let us first give a precise
definition. A correct net is a net satisfying the correctness criterion defined
in~\cite{ehrhard:hal-00150274}:

\begin{definition}{Routing nets}\label{definition-mrarea-rnets}\\
  A routing net $\net R$ is a correct net composed only of weakenings,
  coweakenings, contractions, cocontractions, and possibly floating wires.
  Moreover, we ask that all wires are labelled with the same formula $\oc A$,
  fixing de facto their orientation.
\end{definition}

Paths will also be of interest in the rest of this subsection. 

\begin{definition}{Paths}\label{definition-mrarea-paths}\\
  Let $\net R$ be a net. We recall that for a cell $c$ of $\net R$, we write
  $p_i(c)$ for its $i$-th auxiliary port (if it exists) and $p(c)$ for its main
  port. For a wire $w$, $s(w)$ designates the source port of $w$ while $e(w)$
  is its end port.
  Let $\net R$ be a net without exponential boxes, we construct the associated
  undirected graph $G(\net R)$ with ports as vertices and: 
  \begin{description}
    \item[Wire edges] For any wire $w$ of $\net R$, we add an edge between $s(w)$
      and $e(w)$.
    \item[Cell edges] For every auxiliary port $p_i(c)$ of a cell, we add an edge between $p_i(c)$ and $p(c)$.
  \end{description}
  A path $p$ in $\net R$ is a finite sequence
  $(p_1,e_1,p_2,e_2,\ldots,e_n,p_{n+1})$ such that $p_i$ is a port of $\net R$,
  $e_i$ an edge of $G(R)$ linking $p_i$ and $p_{i+1}$ and such that $e_i$ and
  $e_{i+1}$ are of distinct nature (cell/wire edge). We extend $s$ and $e$ to operate on path, defined by
  $s(p) = p_1$ and $e(p) = p_{n+1}$. Paths whose starting and ending edges are wire edges
  can also be described as a sequence of corresponding wires $(w_1,\ldots,w_m)$
  as internal ports and cell edges can be recovered.  We note $P(\net R)$ the
  set of paths in $\net R$ and $P_f(\net R)$ the paths starting and ending on
  free ports. 
\end{definition}

Albeit closed to \emph{switching paths} (the ones involved in the correctness
criterion), they do not match exactly. A path in a switching graph can arrive at
an auxiliary port of a cocontraction and bounce back in the other, while the
paths we have defined here must continue via the principal port. However, a
correct (routing) net do not contain cyclic paths, which means that switching acyclicity
implies acyclicity:

\begin{property}{Acyclicity of correct nets}\label{property-mrarea-path-switching}\\
  A routing net is acyclic, that is there is no path $p$ such that $s(p) = t(p)$.
\end{property}

\begin{proof}\ref{property-mrarea-path-switching}\\
  We prove that the existence of a cycle implies the existence of a cycle in a
  switching graph. Assume that a cycle exist and take a one of minimal
  length.  Assume that the two auxiliary ports of a cell (contraction or
  cocontraction) are both visited by the cycle. By the definition of paths, when
  the cycle visits one of these auxiliary port, it must continue in the
  principal port.  But then a smaller cycle would be derivable, by taking the
  sub path starting at the principal port of this cell and stopping as long as
  it comes back by any of the auxiliary ports. Thus a minimal cycle visits at most
  once of the auxiliary port of any cell, and is also a cycle in a switching
  graph.  
\end{proof}

We now state a fundamental property relating reduction to paths:

\begin{property}{Path preservation}\label{property-mrarea-path-preservation}\\
  Let $\net R$ be a routing net, $p_1$ and $p_2$ be free ports. We write
  $P_f(\net R,p_1,p_2) = \{
  p \in P_f(\net R) \mid s(p) = p_1, e(p) = p_2 \}$. If $\net R \to
  \net{R}'$, then $P_f(\net R,p_1,p_2) = P_f(\net{R}',p_1,p_2)$. In particular,
  $P_f(\net R) = \sum P_f(\net R,p_i,p_{i'}) = P_f(\net{R}')$.
\end{property}

\begin{proof}\ref{property-mrarea-path-preservation}
  The proof is simple in the case of routing nets. As a path in $P_f(R,p_1,p_2)$
  must begin and end in a free port, it can't go through a weakening or a
  coweakening, and the corresponding reductions leave such paths unchanged. The
  only relevant reduction is the $\red{ba}$ rule:
  \begin{center}
    \includegraphics{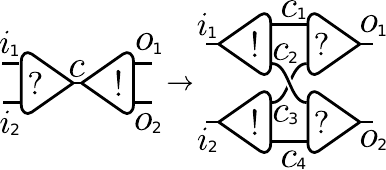}
  \end{center}
  Let us define the application $\tau : P_f(R,p_1,p_2) \to P_f(R',p_2,p_2)$. For
  a path $p$ which does not cross the redex, $\tau(p) = p$. Otherwise, we
  replace any subsequence in the left column by the one in the right column:
  \begin{center}
    \begin{tabular}{c|c}
      Subsequence & Image by $\tau$ \\
      \hline
      $i_1,c,o_1$ & $i_1,c_1,o_1$ \\
      $i_1,c,o_2$ & $i_1,c_2,o_2$ \\
      $i_2,c,o_1$ & $i_2,c_3,o_1$ \\
      $i_2,c,o_2$ & $i_2,c_4,o_2$ \\
    \end{tabular}
  \end{center}
  We omitted the four other possibilities which can be deduced from this table
  by reversing both the subsequence and its image. It is easily seen that $\tau$
  is an bijection.
\end{proof}

The following theorem establishes the link between routing
areas and normal forms of routing nets. We propose two different intuitive
explanations of why the theorem holds:

\begin{enumerate}
  \item The basic components of routing nets, (co)contractions, wires
    and (co)weakenings, are routing areas. Then, juxtaposition and trace
    operations are general enough to combine them into an arbitrary routing net that
    reduces to a routing area according to Property~\ref{property-mrarea-trace}.
  \item In a routing net, the $\red{ba}$ rule allows to commute all contractions
    and cocontractions. Then, by equivalence, we can erase weakenings and
    coweakenings that are not connected to a free wire. At the end of this
    process, the resulting net must have the shape of a routing area.
\end{enumerate}

\begin{theorem}{Routing area characterization}\label{theorem-mrarea-charact}\\
 The normal form of a routing net $\net S$ is a routing area $\routarea R =
 (\labin,\labout,R)$.
\end{theorem}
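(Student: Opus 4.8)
The plan is to show that any routing net $\net S$ reduces to a normal form that is exactly a routing area, and to do so constructively using the reduction rules available for routing nets. Since $\net S$ is correct (hence acyclic by Property~\ref{property-mrarea-path-switching}) and strongly normalizing (Theorem~\ref{theorem-nets-termination}), a normal form $\net N$ exists and is unique up to the equivalence of Table~\ref{figure:equivalence} by confluence (Theorem~\ref{theorem-nets-confluence}). The task is therefore to characterize the shape of $\net N$. First I would observe that the only reduction rules applicable inside a routing net are the bialgebra rule $\red{ba}$ and the structural simplification rules $\red{s_1}, \red{s_2}$ together with the coweakening/weakening annihilation $\red{\epsilon}$, since routing nets contain no multiplicative cells, no boxes, and no dereliction-against-promotion redexes. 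Thus $\net N$ contains no redex of any of these kinds.

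Next I would analyze what redex-freeness forces structurally. The key is the $\red{ba}$ rule: a normal form can contain no configuration in which the principal port of a cocontraction ($\oc$) faces the principal port of a contraction ($\wn$). Combined with the equivalences quotienting contraction and cocontraction trees by associativity and commutativity, this means that in $\net N$ every maximal tree of contractions has its root (principal port) pointing toward an input free wire, and every maximal tree of cocontractions has its root pointing toward an output free wire; the two kinds of trees may only meet leaf-to-leaf. Being $\red{s_1}, \red{s_2}, \red{\epsilon}$-normal additionally forces that no coweakening feeds directly into a contraction tree and no weakening into a cocontraction tree, and no coweakening faces a weakening — so by the equivalence relation every remaining (co)weakening with a dangling leaf can be absorbed, leaving only (co)weakenings that are degenerate roots attached to free wires. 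Putting these together, $\net N$ decomposes into a layer of contraction trees whose roots are the inputs, a layer of cocontraction trees whose roots are the outputs, and a collection of wires connecting leaves of the former to leaves of the latter — which is precisely the shape of a routing area as in Definition~\ref{definition-routingarea}.

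It then remains to define the multirelation $R$ witnessing $\net N = \routarea R$ and to confirm well-definedness. I would take $\labin$ and $\labout$ to be the sets of input and output free wires, and define $R(i,o)$ to be the number of leaf-to-leaf wires of $\net N$ running from the contraction tree rooted at $i$ to the cocontraction tree rooted at $o$. Because $\net N$ has no cycles (acyclicity is preserved by reduction, being a consequence of correctness which reduction preserves), no wire can connect a tree to itself, so $R$ is genuinely a multirelation between disjoint input and output label sets, and the reconstructed net is exactly $\routarea R$. I would phrase this via the second intuitive argument of the excerpt made rigorous: repeatedly applying $\red{ba}$ commutes all contractions above all cocontractions, and the equivalence rules clean up the degenerate (co)weakenings, and the process terminates by Theorem~\ref{theorem-nets-termination}.

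The main obstacle is the structural case analysis establishing that $\red{ba}$-normality really does force the clean two-layer separation rather than some more tangled interleaving of contraction and cocontraction cells. One must argue that at a normal form, following any cell from an input, one encounters only contractions until reaching a leaf wire, after which one encounters only cocontractions up to an output — in other words that no cocontraction can appear "below" (toward an input from) a contraction along any path. This is where acyclicity is essential: a naive argument that $\red{ba}$-normality locally forbids adjacent $\oc$-$\wn$ principal ports must be promoted to the global statement about the whole net, and the absence of cycles is what rules out the otherwise-possible configurations where trees could loop back on themselves. I expect the cleanest route is to induct on the number of cells, peeling off a contraction or cocontraction adjacent to a free wire and invoking the induction hypothesis on the residual net, using Property~\ref{property-mrarea-path-preservation} to track how the path structure — and hence the candidate multirelation — is preserved throughout.
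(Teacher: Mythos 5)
Your overall strategy---characterize the normal form directly by analyzing which configurations are excluded by $\red{ba}$-, $\red{s_1}$-, $\red{s_2}$-, $\red{\epsilon}$-normality and by the equivalence relation---is the paper's ``second intuitive explanation'', but it is not the proof the paper actually gives, and as written it has a genuine gap at its very first step: you assume the normal form exists by invoking Theorem~\ref{theorem-nets-termination} to conclude that $\net S$ is strongly normalizing. That theorem only states the \emph{equivalence} of weak and strong normalization; it does not assert that any net normalizes, and indeed the paper stresses that without further constraints the reduction is not even terminating. For routing nets the point is not innocuous, because $\red{ba}$ strictly increases the number of cells (two cells become four), so no naive size measure certifies termination. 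Your argument therefore only shows that \emph{if} a normal form exists it must be a routing area; the existence itself is exactly what still needs to be produced. (A second, smaller issue: your closing suggestion to peel off a cell ``adjacent to a free wire'' can fail, e.g.\ for a closed component such as a coweakening wired to a weakening, which has no free wires at all.)

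The paper's proof closes precisely this hole by being constructive: it inducts on the number of cells, removes an \emph{arbitrary} cell $N$, applies the induction hypothesis to reduce the residual net to a routing area, and then reattaches $N$ using juxtaposition and trace, relying on Property~\ref{property-mrarea-trace} and Corollary~\ref{corollary-mrarea-composition} to show the result reduces again to a routing area (with Property~\ref{property-mrarea-path-preservation} guaranteeing the side condition $i \mathrel{R_0} o$ needed for each trace, since reduction creates no cycles). This exhibits an explicit reduction of $\net S$ to a routing area, hence weak normalization, and only \emph{then} does Theorem~\ref{theorem-nets-termination} yield strong normalization and confluence yield uniqueness. If you want to salvage your shape-analysis route, you would need to supply a separate termination argument for the bialgebra rewriting on acyclic nets (e.g.\ a well-founded measure built from the path structure); otherwise the cleanest fix is to adopt the compositional induction, which you gesture at in your last paragraph but in the weaker, occasionally inapplicable ``adjacent to a free wire'' form.
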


For a routing net $\net S$, we can define the application $\sem{.} : \net S
\mapsto R$ that maps $S$ to the multirelation $R$ describing its normal form. By
unicity of normal forms, this application is invariant by reduction and is thus a
semantic for routing nets. It has the following properties:
\begin{description}
  \item[Soundness] The multirelation only depends on the normal form.
  \item[Adequacy] Two routing nets with the same denotation have the same normal
    form because a multirelation defines a routing area uniquely.
  \item[Full completeness] Any multirelation between finite sets is realized by the 
    associated routing area.
  \item[Compositionnality] We can compute the semantic of a net in a
    compositional way from the semantic of its smaller parts through
    juxtaposition and trace.
\end{description}

\begin{proof}Theorem~\ref{theorem-mrarea-charact}\\
 We prove the result by induction on the number $n$ of cells of $\net R$.
 \begin{itemize}
   \item ($n=0$) $\net R$ is only composed of free wires. We take $\labin = \{
     s(w) \mid w \mbox{ wire }\}$, $\labout = \{ e(w) \mid w \mbox{ wire } \}$
     and $R$ is the relation defined by $i \mathrel R o \iff \exists w,\ i =
     s(w), o = e(w)$.
   \item (induction step) Let take any node $N$ of $\net R$. We call
     $\net{R}'$ the subnet obtained by removing $N$ and replacing its ports by
     free ports.

    \begin{center}
      \includegraphics{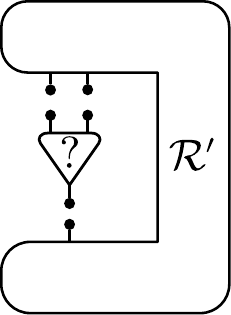}
    \end{center}

    By induction, $\net{R}'$ can be reduced to a routing area
    $\routarea{R}'$.

     \begin{description}
       \item[(co)weakening] If $N$ is a weakening or coweakening, it is a
         routing area and can be composed with $\routarea{R}'$, and reduced to a
         new routing area $\routarea{R}$ according to
         \ref{corollary-mrarea-composition}. Thus we can
         reduce the whole net to $\routarea{R}$. 
       \item[(co)contraction] If $N$ is a contraction or a cocontraction, it can
         still be seen as a routing area and we juxtapose it to $\routarea{R}'$.
         What remains to do is to perform three traces operations to recover the
         original net and reduce the whole net to a routing area
         $\routarea{R}$. However, we must ensure that the input and the output
         we connect at each step are not already connected in the routing
         area. The first operation is always legal, as it is actually a merge
         operation of previously disjoint areas. The following traces are also
         valid, as reduction does not create cycles, as implied by
         \ref{property-mrarea-path-preservation}. 
     \end{description}
 \end{itemize}
\end{proof}

For a routing net $\net S$, we can define the application $\sem{.} : \net S
\mapsto R$ that maps $S$ to the multirelation $R$ describing its normal form. By
unicity of normal forms, the application is invariant by reduction and is thus a
semantic for our routing nets, with the following remarkable properties:
\begin{description}
  \item[Sound] The multirelation only depends on the normal form, this is invariant by reduction.
  \item[Adequate] Two routing nets with the same denotation have the same normal
    form, as a multirelation define a routing area uniquely.
  \item[Fully complete] Any multirelation on finite sets is realised by the 
    associated routing area.
  \item[Compositionnal] We can compute the semantic of a net in a
    compositional way from the semantic of its smaller parts, through
    juxtaposition and trace.
\end{description}

This semantic can be defined without resorting reduction, by
counting paths.

\begin{theorem}{Path semantic}\label{theorem-mrarea-pathsemantic}\\
  Let $\net S$ be a routing net. Let $\labin$ be the set of free ports of
  $\net S$ which are the source point of a wire, and $\labout$ the ones
  that are the end point. Then $\sem{\net S}$
  is the multirelation between $\labin$ and $\labout$ given by
  \[
    \sem{\net S}(i,o) = \cardn{ P_f(\net S,i,o) }
  \]
\end{theorem}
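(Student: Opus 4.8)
The plan is to combine the characterization of normal forms (Theorem~\ref{theorem-mrarea-charact}) with path preservation (Property~\ref{property-mrarea-path-preservation}). By definition $\sem{\net S}$ is the multirelation $R$ of the routing area $\routarea R = (\labin,\labout,R)$ to which $\net S$ normalizes. Since reduction never touches free ports, the sets $\labin$ and $\labout$ of the statement coincide with the input and output labels of $\routarea R$: the orientation induced by the common label $\oc A$ matches the sources of wires with inputs and their ends with outputs. It therefore suffices to establish two things: that $\cardn{P_f(\net S, i, o)}$ is invariant along the reduction $\net S \to^\ast \routarea R$, and that $\cardn{P_f(\routarea R, i, o)} = R(i,o)$ for a routing area.

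First I would upgrade Property~\ref{property-mrarea-path-preservation} from single steps to $\to^\ast$. Routing nets are closed under reduction -- the only applicable rules are $\red{ba}$, $\red{s_1}$, $\red{s_2}$ and the annihilation of a coweakening against a weakening, none of which produce a sum or leave the structural fragment -- so a straightforward induction on the length of $\net S \to^\ast \routarea R$, applying Property~\ref{property-mrarea-path-preservation} at each step, yields $\cardn{P_f(\net S,i,o)} = \cardn{P_f(\routarea R,i,o)}$ for every pair of free ports $(i,o)$.

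The substantive step, and the one I expect to be the main obstacle, is the explicit count of free paths in a routing area. By Definition~\ref{definition-routingarea}, the input $i$ is the principal port of a contraction tree, the output $o$ the principal port of a cocontraction tree, and these two trees are joined by exactly $R(i,o)$ wires. A free path from $i$ to $o$ enters through the input wire, descends the contraction tree from its principal port to a leaf, crosses one connecting wire, and ascends the cocontraction tree to its principal port before leaving through the output wire; it cannot pass through any weakening or coweakening, which are dead ends, and the definition of path forbids it from backtracking within a tree, since two consecutive edges must be of distinct nature. Because each tree is acyclic (Property~\ref{property-mrarea-path-switching}), the route from a root to any fixed leaf is unique, so the map sending a free path to the connecting wire it crosses is a bijection between $P_f(\routarea R,i,o)$ and the set of the $R(i,o)$ wires joining the two trees. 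The delicate point is precisely this bijection: one must verify that a free path meets each of the two trees it traverses in exactly one maximal monotone segment, which is where acyclicity and the alternation constraint on edges are jointly used.

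Chaining the two equalities then gives $\sem{\net S}(i,o) = R(i,o) = \cardn{P_f(\routarea R,i,o)} = \cardn{P_f(\net S,i,o)}$, which is the desired identity.
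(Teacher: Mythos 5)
Your proposal is correct and follows essentially the same route as the paper: invariance of $\cardn{P_f(\net S,i,o)}$ under reduction via Property~\ref{property-mrarea-path-preservation} iterated along the reduction to the normal form, combined with a direct count of free paths in the resulting routing area, which the paper dispatches by appeal to the definition and you spell out as a bijection with the $R(i,o)$ connecting wires. The extra care you take with the bijection and with closure of routing nets under reduction only makes explicit what the paper leaves implicit.
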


\begin{proof}\ref{theorem-mrarea-pathsemantic}
  \begin{itemize}
    \item For normal forms, this follows from the very definition of a
      routing area associated to the multirelation $\sem{\net S}$ :
      $\sem{\net S}(i,o)$ counts precisely the number of paths starting
      at port $i$ and ending at port $o$.
    \item Otherwise, by path preservation (\ref{property-mrarea-path-preservation}), this
      quantity is invariant by reduction. We conclude by induction on the length
      of the longest reduction of $\net S$ to its normal form.
  \end{itemize}
\end{proof}

\begin{remark}{Communication Areas}\label{mrem-mrarea-communication-areas}\\
The communication areas defined in~\cite{EHRHARD2010606} are a special case of
routing areas: for $n \leq 1$, the $n$-communication area is the routing area
$(\{1,\ldots,n\},\{1,\ldots,n\},R)$ where $x \mathrel{R} y \iff x \neq y$.
\end{remark}

\begin{remark}{Canonicity of multirelations}\\
  The multirelation defining an area is actually not unique from a set-theoretic
  point of view: indeed, for $\routarea R = (\labin,\labout,R)$, then all
  multirelations in $\{ \tau^{-1} \circ R \circ \sigma : E \to F \mid \sigma : E
    \to \labin, \tau : F \to \labout, \sigma \text{ and } \tau \text{ bijective
  } \}$ describe the same area. Even though this object is a proper class,
  all these relations are considered isomorphic (for example, in the arrow
  category of \cat{FMRel}). After all, even finite deterministic automata suffer
  this kind of subtlety, which is not relevant in practice.
\end{remark}

We are now armed to encode communication primitives.  This is illustrated in
Section~\ref{section:translation} by the translation of the {\lamadio} calculus,
succinctly described in the following section.

\section{The Concurrent $\lambda$-calculus \lamadio}
\label{section:lamadio}

We present the 
{\lamadio}~\cite{Amadio2009} calculus, 
following the presentation in Madet's Ph.D
thesis~\cite{madet:tel-00794977}. 

The \lamadio\ calculus is a call-by-value $\lambda$-calculus, equipped with {\em
references} that abstract the notion of global memory cells. The calculus is
enriched with a parallel composition operator $\parallel$ for modeling
concurrency.

\subsection{Syntax}
Variables are
denoted with $x,y,\ldots$ while references are
denoted with $r,s,\ldots$. The language consists of {\em values}, {\em
  terms}, {\em stores} and {\em programs}. A {\em store} is a
top-value set of associations between references and values, while a
program is a store together with a set of terms. The terms in a
program can be regarded as threads running in parallel, while the
stores represent the state of the global memory. Programs are the
objects of interest in \lamadio.
\begin{center}
\begin{tabular}{llll}
  -values & $V$ & $::=$ & $x \mid \ast \mid \lambda x.M $ \\
  -terms & $M$ & $::=$ & $V \mid M\ M \mid \get{r} \mid
  \set{r}{V} \mid M \parallel M$ \\
  -stores & $S$ & $::=$ & $r \Leftarrow V \mid (S \parallel S)$ \\
  -programs & $P$ & $::=$ & $M \mid S \mid (P \parallel P)$ \\
\end{tabular}
\end{center}
The constant value $\ast$ stands for the return value of a reference
assignation: it carries no particular information.
The primitives $\set{r}{V}$ and $\get{r}$ respectively
writes a value $V$ to and reads from a given reference $r$.
While {\setw}s can be only performed on values, a more general
$\set{r}{M}$ can be encoded as $(\lambda x . \set{r}{x})\,M$ for an
arbitrary term $M$. 
Once reduced, a $\set{r}{V}$ produces the special kind of thread
$r \Leftarrow V$ at top level in a store, adding $V$ as the
possible values available for the reference $r$.
Parallelism is accounted for using the operator~$\parallel$. Terms,
programs and stores can be placed in parallel. For stores, this simply
means that all the corresponding associations reference/value are
available for substitution of reference in the threads.

\subsection{Reduction}

The calculus is endowed with the usual structural rules for the
parallel operator, namely associativity and commutativity
(Table~\ref{fig:lam-struct-rules}). 
The rewrite rules for the language are found in
Table~\ref{fig:lam-red-rules}. Together with the $\beta_v$ rule are
two new reductions, one that turns an assignment to a store and one
that turns a get to a value. 
The rules use the evaluation contexts defined in
Table~\ref{fig:lam-eval-context} to handle congruence. The context $E$ denotes a
weak call-by-value weak reduction which is neither right-to-left or
left-to-right. The context $C$ allows reduction to occurs in any thread of a
program.

The substitution of a reference is a non-deterministic
operation. Reference must be seen as an abstraction for a set of typed memory
cells that can hold many values. Let $\mbox{proj}_1 = \lambda x y . x$ and
$\mbox{proj}_2 = \lambda x y . y$ be the Church projections, and consider the
term $$P = (\lambda x . x\ V_1\ V_2)\ \get{r} \parallel (\lambda y .\set{r}{y})\
\get{s} \parallel \set{s}{\mbox{proj}_1} \parallel \set{s}{\mbox{proj}_2}$$ 
The two \setw\ can be reduced to stores :
$$P \to^\ast (\lambda x . x\ V_1\ V_2)\ \get{r} \parallel (\lambda y . \set{r}{y})\
\get{s} \parallel \ast \parallel \ast \parallel \store{s}{\mbox{proj}_1}
\parallel \store{s}{\mbox{proj}_2}$$ for some distinct values $V_1$ and $V_2$.
Here, $\get{s}$ have essentially two incompatible ways to reduce : either it
is replaced by $\mbox{proj}_1$ or by $\mbox{proj}_2$. In the first case,
\begin{tabred}
  P & \to^\ast & (\lambda x. x\ V_1\ V_2)\ \get{r} \parallel \set{r}{\mbox{proj}_1}
  \parallel P' \\
  & \to^\ast & (\lambda x . x\ V_1\ V_2)\ \get{r} \parallel \ast \parallel
  \store{r}{\mbox{proj}_1} \parallel P' \\
  & \to^\ast & (\lambda x . x\ V_1\ V_2)\ \mbox{proj}_1 \parallel \ast \parallel
  \store{r}{\mbox{proj}_1} \parallel P' \\
  & \to^\ast & V_1 \parallel \ast \parallel
  \store{r}{\mbox{proj}_1} \parallel P' \\
\end{tabred}
  where $P' = \ast \parallel \ast \parallel \store{s}{\mbox{proj}_1} \parallel
  \store{s}{\mbox{proj}_2}$. However if $\get{s}$ is replaced by
  $\mbox{proj}_2$, we get that $P \to^\ast V_2 \parallel \ast \parallel
  \store{r}{\mbox{proj}_2} \parallel P'$ : $P$ has two distinct normal forms.

Despite the $\parallel$ operator being a static constructor, it can be embedded
in abstractions and thus dynamically liberated or duplicated. For example, the
term $(\lambda f. f\ \ast \parallel f\ \ast)$ act like a \emph{fork} operation:
if applied to $M$, it generates two copy of its argument in two parallel threads
$M\ \ast \parallel M\ \ast$.
The next section is devoted to detailing how terms of {\lamadio} are translated
to proof nets described in \ref{section:nets}, thanks to the areas introduced in
\ref{section:areas}.
\begin{table}[t]
\centering
\begin{minipage}{.4\linewidth}
\centering
\begin{tabular}{rcl}
$P \parallel P'$ & = & $P \parallel P'$ \\
$(P \parallel P') \parallel P''$ & = & $P \parallel (P' \parallel
P'')$\mynl
\end{tabular}
\caption{Structural Rules}\label{fig:lam-struct-rules}
\end{minipage}
\begin{minipage}{.4\linewidth}
\centering
\begin{tabular}{lcl}
$E$ & $::=$ & $[.] \mid E\ M \mid M\ E$ \\
$C$ & $::=$ & $[.] \mid (C \parallel P) \mid (P \parallel C)$\mynl
\end{tabular}
\caption{Evaluation Contexts}\label{fig:lam-eval-context}
\end{minipage}
\mynl
\begin{tabular}{lrcl}
$(\beta_v)$ & $C[E[(\lambda{x}.M)\ V]$ &
  $\rightarrow$ & $ C[E[M[V/x]]]$ \\
$(\texttt{\text{get}})$ & $C[E[\get{r}]] \parallel r \Leftarrow V$ &
  $\rightarrow$ & $ C[E[V]] \parallel r \Leftarrow V$ \\
$(\texttt{\text{set}})$ & $C[E[\set{r}{V}]]$ &
  $\rightarrow$ & $ C[E[\ast]] \parallel r \Leftarrow V $\mynl
\end{tabular}
\caption{Reduction Rules}\label{fig:lam-red-rules}
\end{table}

%
%
%
%
%
%
%


\section{Translation}
\label{section:translation}

To implement the translation, we make use of two specific routing areas that we
introduce below. In the following, $E_i = \{1,\ldots,i\}$ and $R_i$ is the binary relation defined on
$E_i$ by $k \mathrel{R_i} l \iff k \neq l$.

The $\gamma$ area is defined by $(E_3,E_3,R_\gamma = R_3)$. $\gamma$ is actually
a communication area, composed of $3$ pairs of input and outputs grouped by
label. Each such pair represents a plug to which translated terms will be
connected. The definition of $R_\gamma$ expresses that the input and the output
of a plug are not connected, as a component should not receive the data it sent
himself: this would be the analog of a short-circuit. All others inputs and
output are connected.

The $\delta$ area is an analog structure with $4$ plugs: $(E_4,E_4,R_\delta)$.
It is designed to handle the application $M\ N$ which includes three potential sources of effects : 
\begin{enumerate}
  \item The effects $e_1$ produced by reducing $M$ to $\lambda x. M'$
  \item The effects $e_2$ produced by reducing $N$ to value $V_N$
  \item The effects $e_3$ produced by reducing $M'[V_N/x]$ to the final result
    $V$
\end{enumerate}
The reduction of {\lamadio} imposes that $e_1$ and $e_2$ happen before $e_3$,
while $e_1$ and $e_2$ may happen concurrently. For $1 \leq i \leq 3$, the plug
$(i,i)$ of $\delta$ corresponds to the effects $e_i$. The last one is the
external interface for future connections. We easily accommodate $\delta$ to
implements the sequentiality constraint by removing the couples $(3,1)$ and
$(3,2)$ from $R_4$ to form $R_\delta$. Indeed $e_1$ and $e_2$ happens before $e_3$ thus can not observe
any {\setw} made by the latter. Thus we just cut the corresponding wires.  We see
that the formalism of routing areas allows us to easily encode the order of effects.

\subsection{Translating types and effects}
Before translating terms, we need to translate the types from the type and
effects system for {\lamadio} to plain {\LL} formulas. We use the approach of
\cite{tranquilli:hal-00465793}, a monadic translation, explained in the following. Before
translating to {\LL}, let us first try to take a type with effects and translate
it to a pure simple type. Let $e = \{r_1,\ldots,r_n\}$ be an effect (a finite set of references), assume we can assign a simple type $R_i$ to each
reference $r_i$. We can type a store $S_e = R_1 \times \ldots \times R_n$
representing the current state of the memory. We transform a term $M$ of type $A$
producing effects to a pure term which takes the initial state of the store, and
returns the value it computes together with the new state of the store after
this computation. Using curryfication for the arrow type, we define the 
translation: 

\begin{center}
  \begin{tabular}{rcl}
    $T_e(\alpha)$ & $=$ & $S \to S \times \alpha$ \\
    $T_e(A \sur{\to}{e} \alpha)$ & $=$ & $A \to (S \to (S \times \alpha))
    \cong A \times S \to S \times \alpha$
  \end{tabular}
\end{center}

From there, we go to {\LL} types by implementing the pair type $A \times B$ as
$\oc A \otimes \oc B$, and the usual call-by-value translation for the arrow $(A
\to B)^\bullet = \oc(A^\bullet \multimap B^\bullet)$~\cite{MARAIST1995370}. We still
have to determine each $R_i$ first. Using the previous formula, we may
associate an {\LL} type variable $X_{r_i}$ to each reference and plug everything
together to obtain the following equations (where $A_i$ is the type given 
to $r_i$ by the reference context):
\begin{center}
  \begin{tabular}{rcl}
    $\text{Unit}^\bullet$ & $=$ & $\oc{1}$ \\
    $(A \sur{\to}{\{s_1, \ldots, s_m\}} \alpha)^\bullet$ & $=$ 
     & $\oc{((A^\bullet \otimes X_{s_1} \ldots \otimes X_{s_m}) \multimap (X_{s_1}
       \otimes \ldots \otimes X_{s_m} \otimes \alpha^\bullet))}$ \\
   $X_{r_i}$ & $=$ & ${A_i}^\bullet$
  \end{tabular}
\end{center}

This system is solvable precisely because the type system is
stratified~\cite{tranquilli:hal-00465793}, and we can thus translates all the
types of {\lamadio} to plain {\LL} types. The behavior type $\mathbf{B}$ will be
translated to types of the form $A_1 \parr \ldots \parr A_n$ as the translation
remembers the types of each threads.

\subsection{Translating terms}

The general form of the translation of a term $x_1:A_1,\ldots,x_n:A_n \vdash M :
(\alpha, \{r_1,\ldots,r_k\})$ is given by 
$$
\tikzvcenter{
  \begin{tikzpicture}
    \node[net={80pt}{40pt}] (delta) {M^\bullet};
    \draw [arrowed] (delta.east) -- node[above,m] {\alpha^\bullet} ++(15pt,0);
    \draw [arrowed] ([xshift=-15pt]delta.south east) coordinate (a) -- node[left,m] {\oc X_{r_k}} ++(0,-15pt) coordinate (b)  node [below,m,dist=1.5pt] (rk) {r_k};
    \draw [arrowed] ([xshift=-40pt]a) coordinate (b) -- node[left,m] {\oc X_{r_k}} ++(0,-15pt)  node [below,m,dist=1.5pt] (r1) {r_1};
    \proofdots(r1.mid east)(rk.mid west)
    \draw [reverse arrowed] (a|-delta.north) -- node[left,m] {\oc X_{r_k}} ++(0,15pt)  node [above,m,dist=1.5pt] (rk) {r_k};
    \draw [reverse arrowed] (b|-delta.north) coordinate (c) -- node[left,m] {\oc X_{r_1}} ++(0,15pt)  node [above,m,dist=1.5pt] (r1) {r_1};
    \proofdots(r1.mid east)(rk.mid west)
    \draw [reverse arrowed] ([yshift=-10pt]delta.north west) -- node[above,m] (lab1) {A_1^\bullet} ++(-15pt,0) node[left,m,dist=1pt] (x1) {x_1};
    \draw [reverse arrowed] ([yshift=10pt]delta.south west) -- node[below,m] (lab2) {A_n^\bullet} ++(-15pt,0) node[left,m,dist=1pt] (xn) {x_n};
    \proofdots(x1)(xn)
  \end{tikzpicture}
}
$$

We distinguish three different types of free wires:
\begin{description}
  \item[Output wire] The right wire, labelled by $\alpha^\bullet$, corresponds
    to the result of the whole term.
  \item[Variable wires] Each wire on the left corresponds to a variable of the
    context. The (explicit) substitution of a variable $x$ for a term
    $V^\bullet$ is obtained by connecting the output wire of $V^\bullet$ to the
    wire of $x$.
  \item[References wires] The wires positioned at the top are
    input wires corresponding to references and have a similar role as variable wires,
    while the wires at the bottom corresponds dually to the output. References
    wires will be connected by routing areas.
\end{description}
    
We present some representative cases of the translation: $\get{r}$ and
$\set{r}{V}$ for reference management, the abstraction to show how effects are
thunked in a function's body following the monadic translation, and the
application that shows the usage of routing areas to handle non-trivial effects
scheduling. We start with reference operations, which serve as switch from and
to reference wires:


\paragraph*{Set (Figure~\ref{figure:set})}
A $\set{r}{V}$ connects the output of the translation of $V$ to the output
reference wire corresponding to $r$. As other {\setw s} are not relevant a
weakening is connected on the input wire to ignore any incoming resource. In
{\lamadio}, a set reduces to $\ast$ as it does not compute anything valuable.
Consequently the output is the conclusion of a banged $1$ which is the
translation of $\ast$.

One important remark is that an additional exponential layer is added around the
translation of $V$. In a call-by-value language, the non determinism is strict
in the sense that non-deterministic term must be evaluated before any copy. For
example, the term $(\lambda fx .f\ x\ x)\ \get{r} \parallel \store{r}{V_1}
\parallel \store{r}{V_2}$ can reduce either to $f\ V_1\ V_1$ or $f\ V_2\ V_2$
but not to $f\ V_1\ V_2$. Differential {\LL} rather implements the latter
call-by-name semantic as hinted at by the $\red{ba}$ rule which 
expresses that duplication and non-determinism should commute. The mismatch is
due to two different usages we want to make of the $\oc$: 

\begin{itemize}
  \item The first one allows to discriminate what proof nets can be the target
    of structural rules, which implements substitution. In call-by-value, the
    only terms that can be substituted are values. The $\oc$ is introduced by
    the translations of values, using $\oc_p$, and eliminated at usage - when
    applied to another term - for each copy by a dereliction.
  \item The second usage relates to the differential part. The bang denotes
    resources that may be packed non deterministically by a cocontraction. The
    choice is made when a dereliction is met.
\end{itemize}

But as we noted, these two usages are in contradiction: a non-deterministic
packing should not be allowed to be substituted. Technically, the
dereliction corresponding to the place of usage and the dereliction corresponding
to the non-deterministic choice should not be the same. This is the reason
of the additional $\oc$ layer introduced by an exponential box around
$V^\bullet$. The corresponding dereliction is found in the translation of $\get{r}$.

\begin{figure}[tb]
\centering
\begin{minipage}{0.29\textwidth}
\centering
  $\infer[(set)]
  { R;\Gamma \vdash set(r,V) : (\text{Unit},\{ r \}) }
  { R;\Gamma \vdash r : \text{Reg}_r{A} \qquad R;\Gamma \vdash V : (A,\emptyset)
  }$
\end{minipage}
\begin{minipage}{0.69\textwidth}
\centering
  \includegraphics{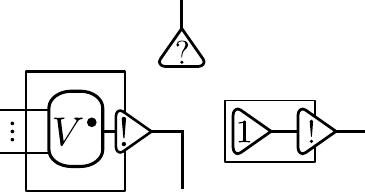}
\end{minipage}
\caption{Translation of $\set{r}{V}$}\label{figure:set}
\end{figure}

\paragraph*{Get (Figure~\ref{figure:get})}
The $\get{r}$, dual of the set, takes a resource from the corresponding input reference wire
and redirects it to the output wire. It outputs a coweakening on the reference wire
as it does not produce any {\setw}. As mentioned in the previous case, a
dereliction is added on the input wire to force the non-deterministic choice and
strip the exponential layer added by the set.

\begin{figure}[tb]
\centering
\begin{minipage}{0.29\textwidth}
\centering
  $\infer[(get)]
  { R;\Gamma \vdash \get{r} : (A,\{ r \}) }
  { R \vdash \Gamma }$
\end{minipage}
\begin{minipage}{0.69\textwidth}
\centering
 \includegraphics{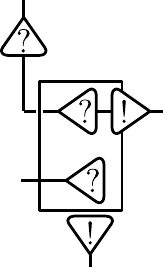}
\end{minipage}
\caption{Translation of $\get{r}$}\label{figure:get}
\end{figure}

\paragraph*{Abstraction (Figure~\ref{figure:abs})}
The abstraction thunks the potential effects of the body $M$ in the pure term
$\lambda x. M$. Following the monadic translation, the input effects are
tensorized with the bound variable, and the output effects with the output of
$M$. Finally the whole term is put in an exponential box as it is a value.

\begin{figure}[tb]
\centering
\begin{minipage}{0.29\textwidth}
\centering
  $\infer[(lam)]
    { R;\Gamma \vdash \lambda{x}.M : (A \sur{\to}{e} \alpha, \emptyset) }
    { R;x:A,\Gamma \vdash M : (\alpha,e) }$
\end{minipage}
\begin{minipage}{0.69\textwidth}
\centering
  \includegraphics{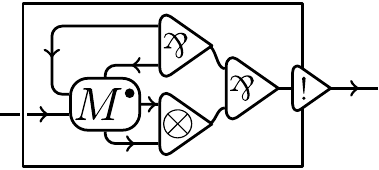}
\end{minipage}
\caption{Translation of $\lambda x. M$}\label{figure:abs}
\end{figure}

\paragraph*{Application (Figure~\ref{figure:app})}
Finally, the application put the routing area at use. Using the same terminology
as in the introduction of this section, we see the effects $e_1$ and $e_2$
coming respectively from the evaluation of $M$ and $N$, and $e_3$, liberated by
the body of the function being applied, plugged on the $\delta$ area.

\begin{figure}[tb]
\centering
$\infer[(app)]
  { R;\Gamma \vdash \refssl{}{\multv{V}}{M\ N} : (\alpha, e=e_1 \cup e_2 \cup e_3) }
  { R;\Gamma \vdash M : (A \sur{\to}{e_1} \alpha, e_2)
    \qquad R;\Gamma \vdash N : (A,e_3)
    \qquad R;\Gamma \vdash V \in \multv{E}_i : (R(r_i),\emptyset)}$
\centering
\includegraphics{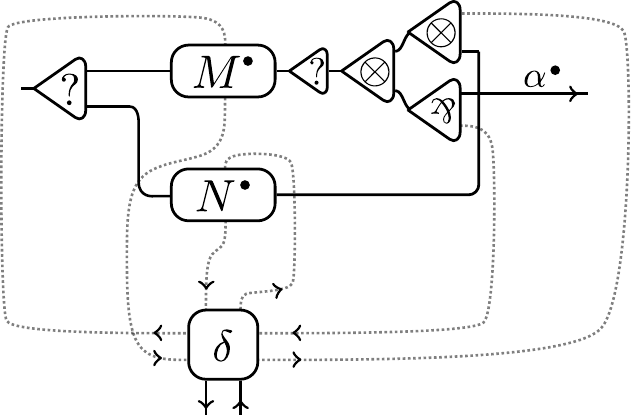}
\caption{Translation of $M\ N$}\label{figure:app}
\end{figure}

\subsection{Full translation of {\lthis} into proof nets}
We presented some interesting cases of the translation of {\lamadio} to proof
nets to provide the reader with some intuition. However the complete translation
is rather operating on the
intermediate language {\lthis}. A translation and a simulation theorem between
{\lamadio} and {\lthis} are given in~\cite{HamdaouiValiron2018}, completing the
picture.

\begin{center}
\begin{longtable}{|c|c|}
\hline
Typing derivation & Translation \\
\hline
$\infer[(var)]{\Gamma,x : A \vdash x : (A,\emptyset)}{}$
& \includegraphics{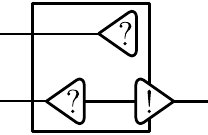} \\
\hline

$\infer[(unit)]{\Gamma \vdash \ast : (\texttt{Unit},\emptyset)}{}$
& \includegraphics{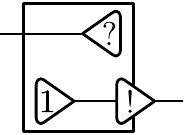} \\
\hline

$\infer[(lam)]
{ \Gamma \vdash \lambda{x}.M : (A \sur{\to}{e} \alpha, \emptyset) }
{ x:A,\Gamma \vdash M : (\alpha,e) }$
& \includegraphics{figures/translation/pdfs/translation-lam.pdf} \\
\hline

$\infer[(app)]
{ \Gamma \vdash \refssl{}{\multv{V}}{M\ N} : (\alpha, e=e_1 \cup e_2 \cup e_3) }
{ \Gamma \vdash M : (A \sur{\to}{e_1} \alpha, e_2)
  \qquad \Gamma \vdash N : (A,e_3)
  \qquad \Gamma \vdash V \in \multv{E}_i : (B_i,\emptyset)}$
& \includegraphics{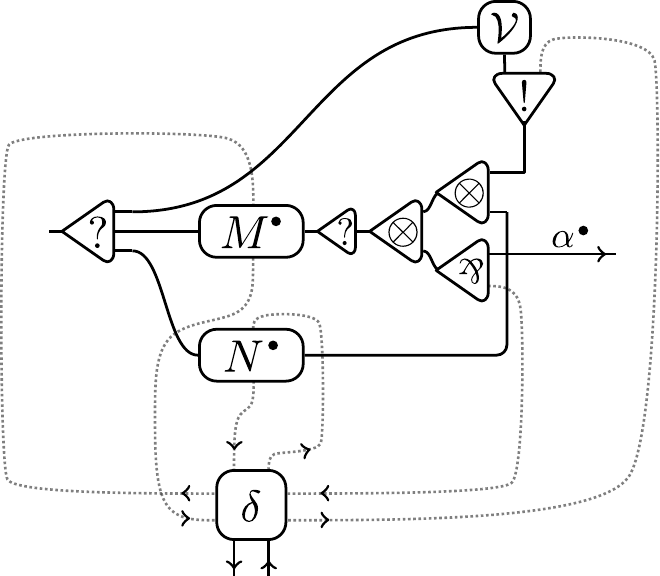} \\
\hline

$\infer[(get)]
{ \Gamma \vdash \get{r} : (A,\{ r \}) }
{ \vdash \Gamma }$
& \includegraphics{figures/translation/pdfs/translation-get.pdf} \\
\hline


$\infer[(par)]
{ R; \Gamma \vdash P_1 \parallel P_2 : \mathbf{B} }
{i=1,2 \qquad R;\Gamma \vdash P_i : \alpha_i }$
& \includegraphics{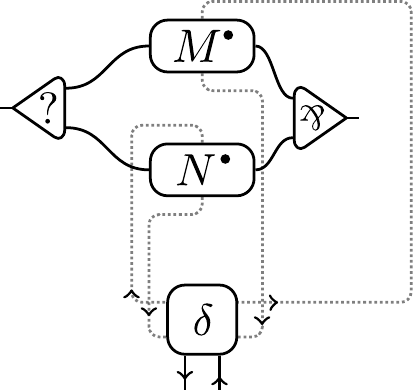} \\
\hline

\infer[(subst)]
{ R; \Gamma \vdash \varss{\sigma}{M} : (\alpha, e) }
{ R; \Gamma, x : A \vdash M : (\alpha,e) \qquad R; \Gamma \vdash V : (A,\emptyset) }
& \includegraphics{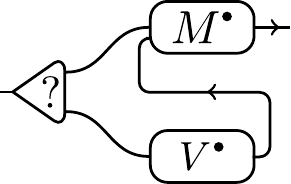} \\
\hline

\infer[(subst\text{-}r\downarrow)]
{ R; \Gamma \vdash \substg{(r_i)}{(\mathcal{V}_i)}{M} : (\alpha, e) \qquad }
{ R; \Gamma \vdash M : (\alpha,e) \qquad R; \Gamma \vdash \mathcal{V}_i : (A_i,\emptyset) }
& \includegraphics{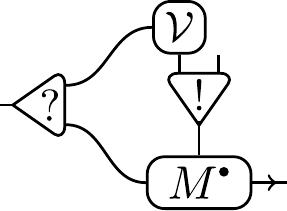} \\
\hline

\infer[(subst\text{-}r\uparrow)]
{ R; \Gamma \vdash \substup{(r_i)}{(\mathcal{V}_i)}{M} : (\alpha, e) \qquad }
{ R; \Gamma \vdash M : (\alpha,e) \qquad R; \Gamma \vdash \mathcal{V}_i : (A_i,\emptyset) }
& \includegraphics{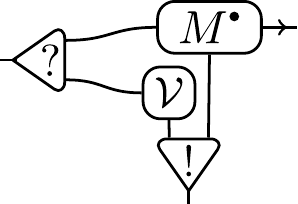} \\
\hline
\end{longtable}
\end{center}

We proceed to give the properties satisfied by the translation.

\section{Properties of the translation}
\label{section:properties}

\subsection{Simulation}
The first result is the simulation theorem. The formulation precises
that a deterministic step in {\lamadio} is mapped to a deterministic reduction in nets, and
only the reduction of $\get{r}$ produces a non-deterministic sum.

\begin{theorem}{Simulation}\label{theorem-nets-simulation}\\ Let $\vdash M :
  (\alpha,e)$ be a closed well-typed term of \lamadio. Then
  \begin{itemize}
    \item If $M \to N$  by $(\beta_v)$ or $(\texttt{set})$ then
      $M^\bullet \to^\ast N^\bullet$
    \item If $M \to N_i$ by $(\texttt{get})$ then
      $M^\bullet \to^\ast \net R + \sum_i {N_i}^\bullet$
  \end{itemize}
\end{theorem}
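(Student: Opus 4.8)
The plan is to argue by \emph{compositionality} together with three \emph{base cases}, one per source reduction, using crucially that every redex of \lamadio\ fires at the surface of the translated net.

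I would first prove a \textbf{context lemma}: for an evaluation context $C[E[\cdot]]$ of \lamadio, the net $C[E[\cdot]]^\bullet$ is obtained from the translation of the hole by plugging its output, variable and reference wires into a fixed surrounding net---assembled from $\delta$-areas for the applicative frames of $E$ and from parallel juxtaposition and $\gamma$-areas for the thread frames of $C$. The essential point is that $E$ never enters the body of an abstraction and $C$ only crosses $\parallel$, so the hole's sub-net always lies at depth $0$, i.e.\ outside every exponential box. Since net reduction is local, any \emph{surface} reduction $P^\bullet \redsurf^\ast Q^\bullet$ of the hole lifts unchanged to $C[E[P]]^\bullet \redsurf^\ast C[E[Q]]^\bullet$, the surrounding wiring being untouched. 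This reduces the theorem to the three redexes.

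For $(\beta_v)$, the translation of $(\lambda x.M)\,V$ cuts the exponential box carrying $M^\bullet$ (Figure~\ref{figure:abs}) against the applicative structure of Figure~\ref{figure:app} holding $V^\bullet$ and the $\delta$-area. A $\red{e}$ step opens the abstraction box against the application's dereliction, exposing the $\multimap$; a $\red{m}$ step then connects $V^\bullet$ to the $x$-wire of $M^\bullet$ and the body's output to the continuation, while the effect wires are threaded by $\delta$. It remains to prove a \textbf{substitution lemma}: connecting the box $V^\bullet$ to the free variable wire $x$ of $M^\bullet$ and reducing yields $(M[V/x])^\bullet$. This goes by induction on $M$: the $x$-wire feeds a contraction tree whose leaves are the derelictions at the occurrences of $x$, so $\red{d}$ copies the closed box to each occurrence (and $\red{er}$ erases it when $x$ is absent), reproducing the translation with $V^\bullet$ in place of $x$; since values carry the empty effect, the reference-wire bookkeeping is trivial. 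Equivalently, one may factor $(\beta_v)$ through the explicit-substitution calculus \lthis, whose propagation rules are realised by the (subst) and (subst-r) cases of the translation; this is the route of the full development.

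The $(\texttt{set})$ case is essentially definitional: by Figure~\ref{figure:set}, the weakening on the input reference wire is absorbed by the routing (Property~\ref{property-translation-transit}) and $V^\bullet$ is redirected onto $r$'s output reference wire, so that up to the net equivalence and a few structural steps the result already is $(\ast \parallel \store{r}{V})^\bullet$, matching $\ast^\bullet = \oc 1$ for the returned unit. For $(\texttt{get})$, the store $\store{r}{V}$ packs $V^\bullet$ by a cocontraction onto $r$'s reference wire while $\get{r}$ places a dereliction on that same wire (Figure~\ref{figure:get}); transit (Property~\ref{property-translation-transit}) brings all values currently stored at $r$ to this dereliction, and the dereliction-against-cocontraction redex fires the $\red{nd}$ rule of Figure~\ref{figure:reduction}. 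Iterating $\red{nd}$ over the packing tree of the $m$ stored values $V_1,\dots,V_m$ produces the sum $\sum_i V_i^\bullet$ fed back into the continuation, i.e.\ the summands ${N_i}^\bullet$; the branch selecting the neutral coweakening leaf turns to $\mathbf 0$ and, together with the routing structure preserved by transit, forms the residual net $\net R$.

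Combining the three base cases with the context lemma yields the two clauses of the theorem, the $\red{nd}$ step being the only source of a non-deterministic sum. The \textbf{main obstacle} is the substitution lemma for $(\beta_v)$---establishing that the net-level insertion of a value box into a variable wire duplicates it correctly over all occurrences of the bound variable with vacuous effect bookkeeping---and, for $(\texttt{get})$, the precise accounting showing that iterating $\red{nd}$ yields \emph{exactly} the summands ${N_i}^\bullet$ plus a cleanly identified residue $\net R$, with no spurious terms, the coweakening branches collapsing to $\mathbf 0$ as required.
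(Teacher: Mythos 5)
Your overall strategy coincides with the paper's: a context lemma reducing the statement to the redex in an empty context (this is exactly Property~\ref{property-nets-substitution} and the remark following it), then a case analysis on the redexes, with Property~\ref{property-translation-transit} carrying the reference traffic and $\red{nd}$ producing the non-deterministic sum for $\get{r}$. The one genuine gap is in your primary route for $(\beta_v)$: the substitution lemma you propose to prove by induction on $M$ --- that plugging the closed box $V^\bullet$ into the $x$-wire of $M^\bullet$ and reducing yields $(M[V/x])^\bullet$ --- is false in this system. Duplication $\red{d}$, like every rule except $\red{e}$ and $\red{er}$, is restricted to the surface; so when $x$ occurs under an abstraction, $V^\bullet$ can only be pushed through the auxiliary door of the abstraction's exponential box by $\red{c}$ and is then stuck, whereas $(M[V/x])^\bullet$ has $V^\bullet$ already distributed to every occurrence at arbitrary depth. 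This is precisely why the paper does not prove a meta-level substitution lemma but factors the whole theorem through the explicit-substitution calculus \lthis{}, whose \rrulet{subst}{} rules stop at abstractions and are each simulated by one block of net reductions, the correspondence with \lamadio's meta-level substitution being delegated to the \lamadio/\lthis{} simulation of~\cite{HamdaouiValiron2018}. You mention this factoring as an ``equivalent'' alternative; it is not equivalent --- it is the only one of your two routes that goes through.

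Two smaller inaccuracies. The residual $\net R$ in the $(\texttt{get})$ case is not made of the coweakening branches (those collapse to $\mathbf{0}$ and disappear from the sum); it is the summand in which the dereliction, after the $\red{ba}$/$\red{nd}$ cascade, reattaches to a remaining input of the routing area --- the branch that drops all available stores and waits for a hypothetical future one --- and the paper shows this summand reduces back to a net containing the translation of $\get{r}$ itself. Likewise, $(\texttt{set})$ is not purely definitional: emitting the store $\store{r}{V}$ at top level is simulated via the upward reference substitution rules of \lthis{} (the \rrulet{subst-r'}{} cases), which again use Transit on the $\gamma$ and $\delta$ areas to carry $V^\bullet$ out to the top level; your sketch is right in spirit but skips this propagation.
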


The presence of the additional term $\net R$ is linked to the
local nature of non-determinism in proof nets. When facing a get
reduction, the proof net can either select one of the available {\setw}, or drop
them all and wait for an hypothetical future one, which corresponds to this
$\net R$ net. This is better understood when stating the simulation theorem on
{\lthis}:

\begin{theorem}{Simulation for {\lthis}}\label{theorem-nets-simulation-lthis}\\
  Let $\vdash M : (\alpha,e)$ be a closed well-typed term of \lthis
  . If $M \to^\ast N$, then $M^\bullet \to^\ast N^\bullet$.
\end{theorem}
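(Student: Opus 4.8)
The plan is to reduce the statement to a single reduction step and then iterate. Since $\to^\ast$ on nets is reflexive and transitive, it suffices to prove that whenever $M \to N$ by one reduction step of \lthis\ we have $M^\bullet \to^\ast N^\bullet$; the general case then follows by induction on the length of the reduction $M \to^\ast N$, the empty reduction being covered by reflexivity.

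For the single-step case I would first factor out the contextual structure. Every reduction of \lthis\ fires a redex inside an evaluation context (the contexts $E$ and $C$ of Table~\ref{fig:lam-eval-context}, together with the explicit-substitution formers). I would establish a compositionality lemma stating that $(C[\cdot])^\bullet$ is obtained from the translation of its hole by plugging $M^\bullet$ into a fixed surrounding net along the output wire, the variable wires and the reference wires. Because proof-net reduction is local — every rule of Table~\ref{figure:reduction} acts on a bounded region, and $\red{er}$ and $\red e$ may fire at any depth — a reduction $M^\bullet \to^\ast N^\bullet$ taking place in the plugged sub-net lifts to $(C[M])^\bullet \to^\ast (C[N])^\bullet$. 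This reduces the problem to the case where the redex sits at the root.

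It then remains to treat each rule at the root by an explicit computation on nets. For $\beta_v$ the translation of $(\lambda x.M)\,V$ presents the box translating $\lambda x.M$ cut against the application apparatus built from $\delta$; opening the box with $\red e$, firing the multiplicative cut $\red m$ between the $\otimes$ and the $\parr$, and connecting $V^\bullet$ to the variable wire yields the translation of the explicit substitution, which is exactly the \lthis\ contractum. For the substitution-propagation rules the relevant net steps are box duplication ($\red d$, when a contraction meets a box) and box erasure ($\red{er}$, against a weakening), mirroring how a substitution is pushed through a term or discarded. For $\set{r}{V}$ one checks directly, as in Figure~\ref{figure:set}, that the value is rewired onto the output reference wire and the result becomes the banged $1$ translating $\ast$. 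The reference reads and the reference-substitution rules $(subst\text{-}r\!\downarrow)$ and $(subst\text{-}r\!\uparrow)$ are handled by the Transit property (Property~\ref{property-translation-transit}): a boxed value connected to an input of a routing area traverses it and is delivered, duplicated $R(i,o)$ times, to the outputs, which is precisely the net-level effect of moving a stored value to the points where $r$ is read.

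The hard part is the reference reads together with the extra exponential layer. The dereliction introduced by the translation of $\get{r}$ meets the cocontraction that packs the available values, firing $\red{nd}$ and producing a sum of choices; this is the source of the residual net $\net R$ and the genuine sum appearing in the \lamadio\ statement of Theorem~\ref{theorem-nets-simulation}. I would argue that, because each read step of \lthis\ already commits to a single value through its reference-substitution machinery, only one real value is present at the read point while the remaining cocontraction inputs are coweakenings; the $\red{nd}$ sum then degenerates, since every unused branch consumes a coweakening by a dereliction and reduces to $\mathbf 0$, leaving exactly the single net $N^\bullet$. Making this collapse precise — and checking that the auxiliary cocontraction of the Transit property leaves the surrounding routing areas intact, so that equality with $N^\bullet$ holds on the nose rather than up to residuals — is where the bulk of the case-by-case bookkeeping lies.
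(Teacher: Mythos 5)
Your overall skeleton matches the paper's: a context/compositionality lemma (the paper's Property~\ref{property-nets-substitution}, via the hole typing and hole translation) to reduce to a root redex, followed by a rule-by-rule computation on nets in which the substitution-propagation rules are handled by duplication/erasure and the reference-substitution rules by the Transit property. Up to that point the proposal is essentially the paper's proof.

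The gap is in your treatment of the read. You claim that in \lthis{} ``only one real value is present at the read point'' so that the $\red{nd}$ sum degenerates to the single net $N^\bullet$. This is false: the reference substitution carries a whole multiset $\multv{V}$ of values, the rule \rrulet{subst-r}{get} is genuinely non-deterministic, and at the net level the dereliction of $\get{r}$ meets a cocontraction tree packing one closed box per element of $\multv{V}$. The resulting $\red{nd}$ reduction produces a genuine sum, and the correct argument --- the one the paper gives --- is that this sum is matched summand-by-summand by the term-level sum produced by \rrulet{subst-r}{get}: one summand per value $V$ in the image of $\multv{V}$, plus one residual summand in which all available values are dropped and which reduces back to the translation of $\get{r}$ itself (this is exactly the sense in which the extra net $\net R$ of Theorem~\ref{theorem-nets-simulation} is ``internalized'' in \lthis; the statement must be read with $N$ a sum and $\to$ extended to sums). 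Your degeneration argument also misidentifies the mechanism: in $\red{nd}$ the unused branch of the cocontraction is discarded by a \emph{weakening}, not consumed by a dereliction against a coweakening; the branches do not collapse to $\mathbf{0}$, they survive as distinct summands. As stated, your proof would establish a deterministic simulation that the translation does not satisfy, and would break the subsequent adequacy argument, which relies precisely on accounting for all the summands created by \rrulet{subst-r}{get}.
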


Here, the term $\net R$ is totally internalized in {\lthis} and we get a clean
simulation theorem.  To derive this result, we start by defining a notion of
typed context and a translation of contexts to nets.

\begin{definition}{Hole typing rule}\label{definition-nets-simulation-typehole}\\
\[
  \infer[(hole)]
  {R,[.] : (\alpha,e) \vdash [.] : (\alpha,e)}
  {R \vdash (\alpha,e)}
\]
\end{definition}

\begin{center}
  \begin{tabular}{|c|c|}

  \end{tabular}
\end{center}

$[.]$ can be seen just as a special kind of variable in the typing derivation
that can be substituted by something else than a value. We use the usual typing
rules to build derivations of typed contexts. A context can then be translated
to a net, with an interface determined by $\alpha$ and $e$, labelling
free ports.  This is what we call net contexts. The substitution of nets
consists in plugging the translation of a term with a matching type in this interface.

\begin{definition}{Hole translation}\label{definition-nets-simulation-holetrans}\\
  We define the translation of a typed hole $R \vdash [.] : (\alpha,e)$
as

\begin{center}
  \begin{tabular}{DD}
  \[
    \infer[(hole)]
    {R,[.] : (\alpha,e) \vdash [.] : (\alpha,e)}
    {R \vdash (\alpha,e)}
  \]
   &
    \includegraphics{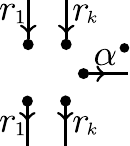} \\
  \end{tabular}
\end{center}

We can then carry on and use the usual term translation to build the
translation of a context $(C[E])^\bullet$, which is a net of the form

\begin{center}
  \includegraphics{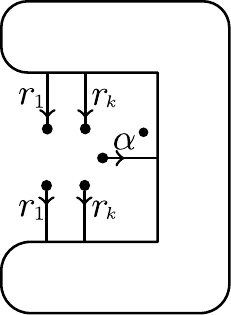}
\end{center}

The substitution of $M^\bullet$, or of any net with a compatible interface for
that matter, in $(C[E])^\bullet$ is defined by just connecting the free wires of
the substituted net to the corresponding free wire of the context hole :

\begin{center}
  \includegraphics{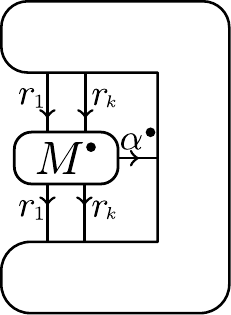}
\end{center}

\end{definition}

The fundamental property of net contexts and net substitutions is that the
substitution commutes with the translation, in the following sense :

\begin{property}{Nets substitution}\label{property-nets-substitution}\\
  Let $ \vdash [.] : (\alpha,e)$ be a typed hole, $[.] : (\alpha,e) \vdash C[E]
  : (\beta,e')$ a typed context, and $ \vdash M : (\alpha,e)$ a term. Then :
\[
  (C[E[M]])^\bullet = (C[E])^\bullet[M^\bullet]
\]
\end{property}

%
The very definition of net reduction immediately entails that if $M^\bullet
\to^\ast N^\bullet$ then $(C[E])^\bullet[M^\bullet] \to^\ast
(C[E])^\bullet[N^\bullet]$. Together with
Property~\ref{property-nets-substitution}, this ensures that we can focus on the case
where $C = E = [.]$, as the general case follows seamlessly.

From here, we check that each reduction rule of \lthis\ can be simulated on the
net side, relying on the definition of the reduction and the behavior of routing areas.

%
%

\paragraph{Variable substitutions reductions}

Let us show the simulation for \rrulet{subst}{} rules, involving mainly
duplication. Thanks to the previous theorem, we can assume that contexts are
empty without loss of generality. We consider only closed terms. Indeed, since
reduction contexts $S,C,E$ do not bind variables, all the terms appearing in the
premise of a rule are thus closed terms, and we can
omit their context. For each rule, we write the translation of the premise
followed by its reduction in nets, which matches the conclusion.

The fundamental rule is the variable one.

\begin{center}
  \begin{longtable}{lDED}
    \rrulet{subst}{var} ($\sigma(x)$ undefined)
    & \includegraphics{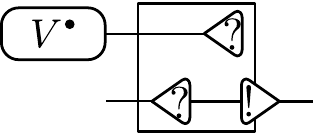}
    & $\to^\ast$
    & \includegraphics{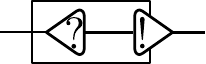} \\

    \rrulet{subst}{var} ($\sigma(x)$ defined)
    & \includegraphics{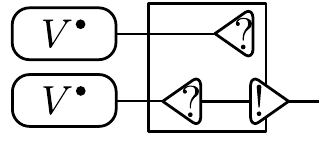}
    & $\to^\ast$
    & \includegraphics{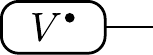} \\
  \end{longtable}
\end{center}

\vspace{0.5cm}
When reaching a $\get{r}$ or a $\ast$, the substitution simply vanishes.
\begin{center}
 \begin{longtable}{lDED}
    \rrulet{subst}{unit}
    & \includegraphics{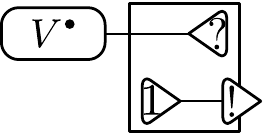}
    & $\to^\ast$
    & \includegraphics{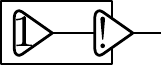} \\
  \end{longtable}
\end{center}

\vspace{0.5cm}
The \rrulet{subst}{app}, \rrulet{subst}{subst-r} and \rrulet{subst}{subst-r'}
perform a duplication and propagate the substitution inside reference
substitutions.

\begin{center}
  \begin{longtable}{lDED}
    \rrulet{subst}{app}
    & \multicolumn{3}{D}{
      \includegraphics{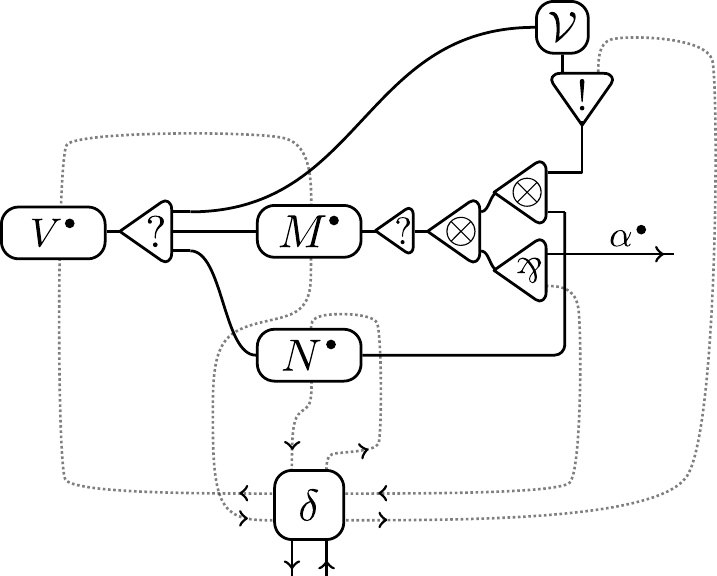}
    } \\

    $\to^\ast$
    & \multicolumn{3}{D}{
      \includegraphics{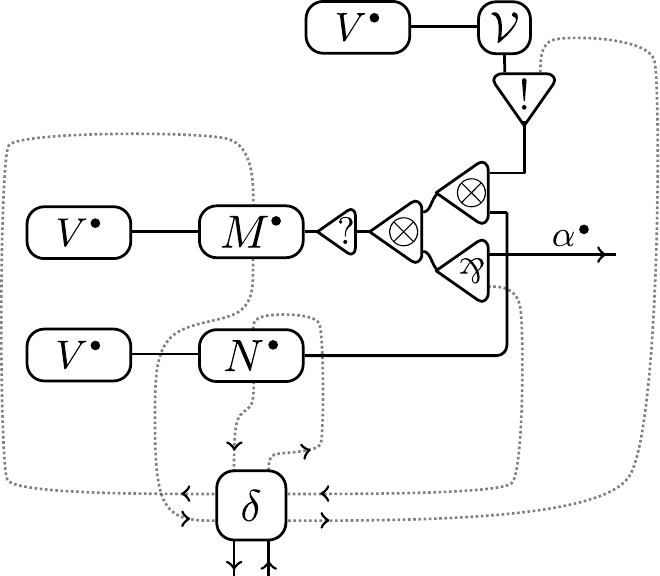}
    } \\

    \rrulet{subst}{subst-r}
    & \includegraphics{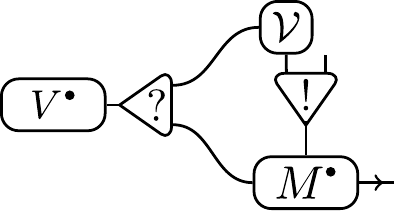}
    & $\to^\ast$
    & \includegraphics{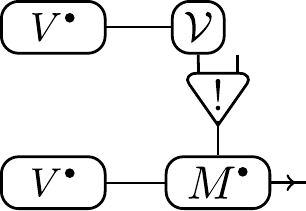} \\

    \rrulet{subst}{subst-r'}
    & \includegraphics{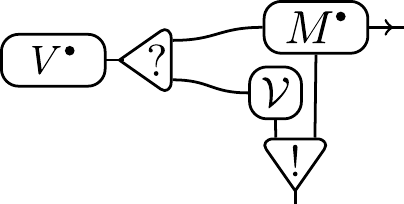}
    & $\to^\ast$
    & \includegraphics{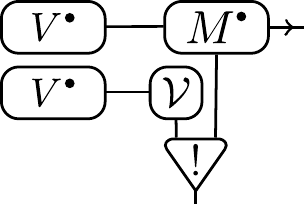} \\
  \end{longtable}
\end{center}

\vspace{1cm}
The \rrule{subst}{\parallel} just duplicate the variable substitution to the two
threads
\begin{center}
  \begin{longtable}{lDED}
    \rrule{subst}{\parallel}
    & \includegraphics{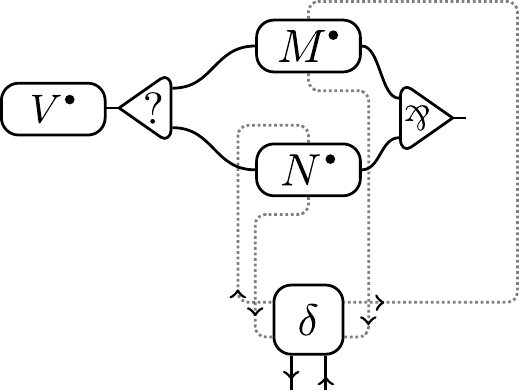}
    & $\to^\ast$
    & \includegraphics{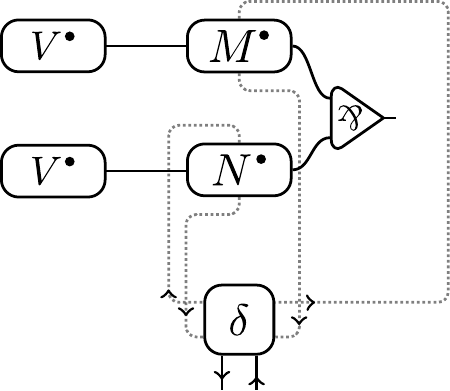} \\
  \end{longtable}
\end{center}

\vspace{0.5cm}
Finally, the \rrulet{subst}{merge} distributes the outter substitution to both
the term and the inner substitution.
\begin{center}
  \begin{longtable}{lDED}
    \rrulet{subst}{merge}
    & \includegraphics{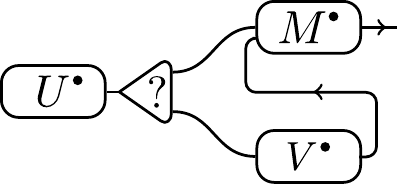}
    & $\to^\ast$
    & \includegraphics{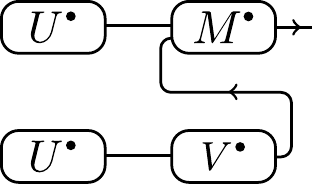} \\
  \end{longtable}
\end{center}

\paragraph{Downward reference substitutions reductions}

The propagation of references substitutions relies on the behavior of routing
area, and especially Property~\ref{property-translation-transit}. The
fundamental case is the non deterministic reduction happening when reducing a
$\get{r}$ whose redex is
\begin{center}
  \begin{longtable}{lD}
    \rrulet{subst-r}{get}
    & \includegraphics{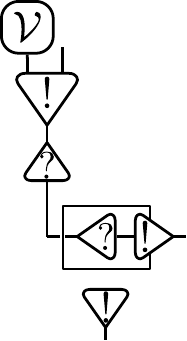} \\
  \end{longtable}
\end{center}

Then for each $V$ in the image of $\multv{V}$, there will be exactly one summand
of the following form

\begin{center}
  \begin{longtable}{DED}
    \includegraphics{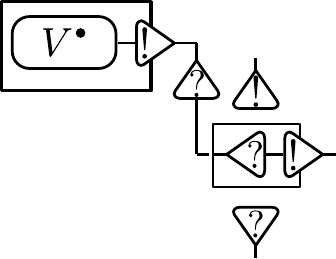}
    & $\to^\ast$
    & \includegraphics{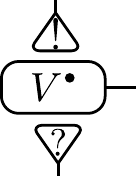} \\
  \end{longtable}
\end{center}

The remaining term does indeed reduce to the translation of $\get{r}$:
\begin{center}
  \begin{longtable}{DED}
    \includegraphics{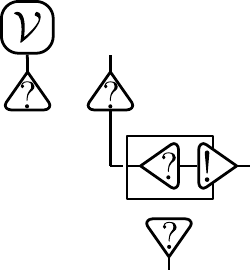}
    & $\to^\ast$
    & \includegraphics{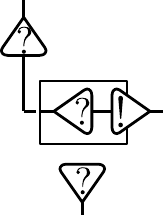} \\
  \end{longtable}
\end{center}

\vspace{0.5cm}
\rrulet{subst-r}{val}, \rrule{subst-r}{\parallel} and \rrulet{subst-r}{app} are
just direct application of the Property~\ref{property-translation-transit}.

\begin{center}
  \begin{longtable}{lDED}
    \rrulet{subst-r}{val}
    & \includegraphics{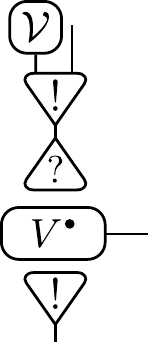}
    & $\to^\ast$
    & \includegraphics{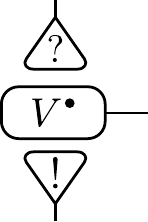} \\

    \rrule{subst-r}{\parallel}
    & \includegraphics{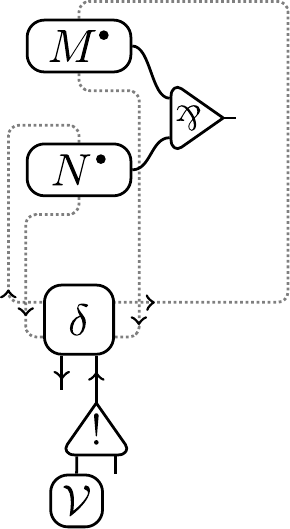}
    & $\to^\ast$
    & \includegraphics{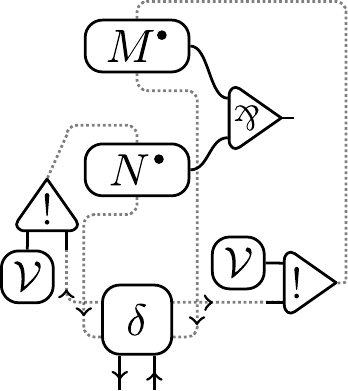} \\

    \rrulet{subst-r}{app}
    & \multicolumn{3}{D}{
      \includegraphics{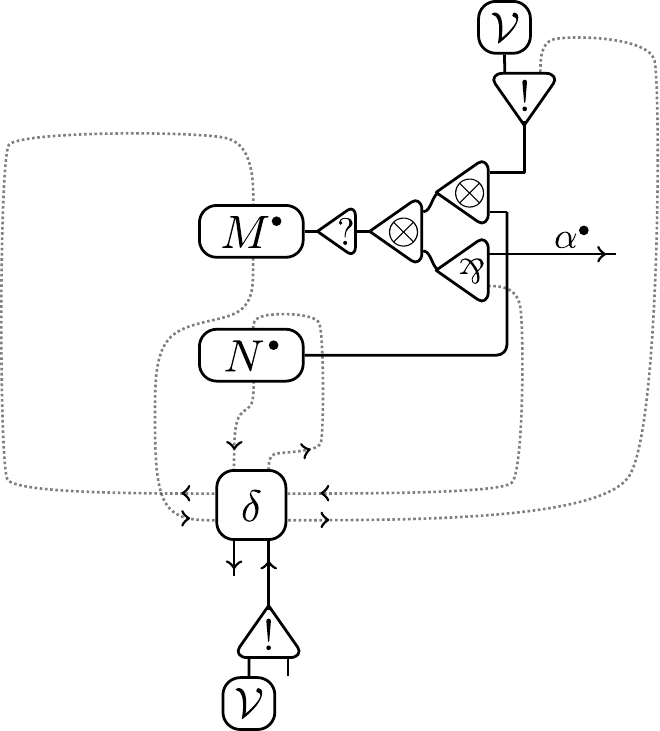}
    } \\
    $\to^\ast$
    & \multicolumn{3}{D}{
      \includegraphics{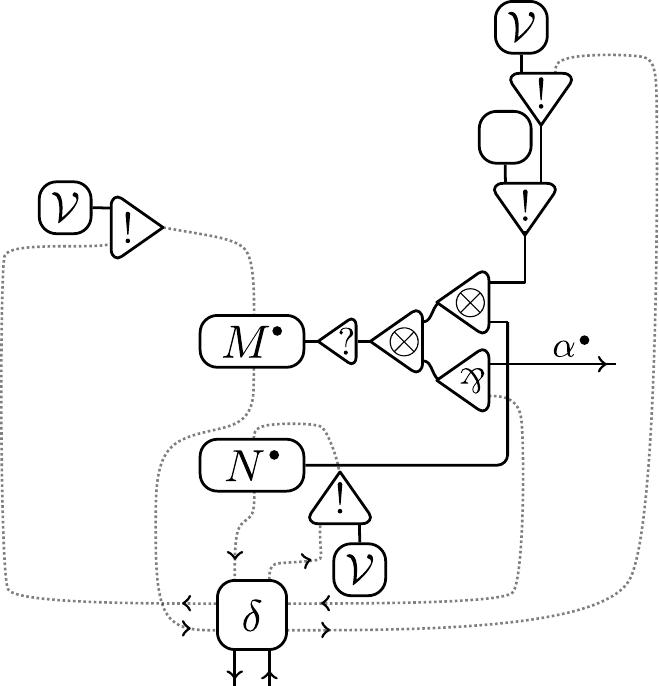}
    } \\
  \end{longtable}
\end{center}

\vspace{0.5cm}
\rrulet{subst-r}{merge} and \rrulet{subst-r}{subst-r'} amount to nothing in nets, as the translation
already identifies the redex and the reduct of these rules.

\begin{center}
  \begin{longtable}{lDlD}
    \rrulet{subst-r}{subst-r'}
    & \includegraphics{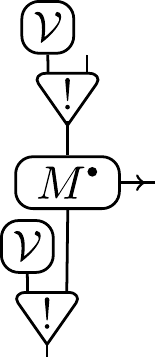}
    & \rrulet{subst-r}{merge}
    & \includegraphics{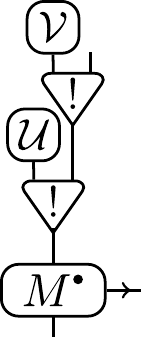} \\
  \end{longtable}
\end{center}

\paragraph{Upward reference substitutions reduction}
As for downward substitutions, the main ingredient is the Transit lemma applies
to our specific routing area $\delta$ and $\gamma$.

\begin{center}
  \begin{longtable}{lDED}
    \rrule{subst-r'}{\parallel}
    & \includegraphics{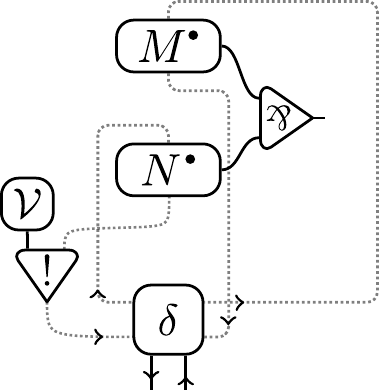}
    & $\to^\ast$
    & \includegraphics{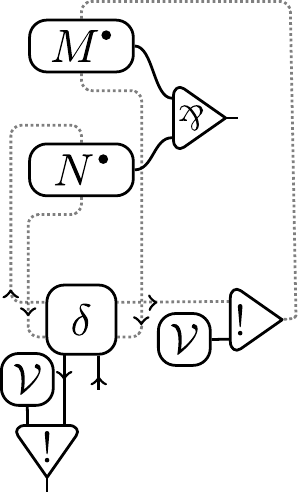} \\

    \rrulet{subst-r'}{lapp}
    & \includegraphics{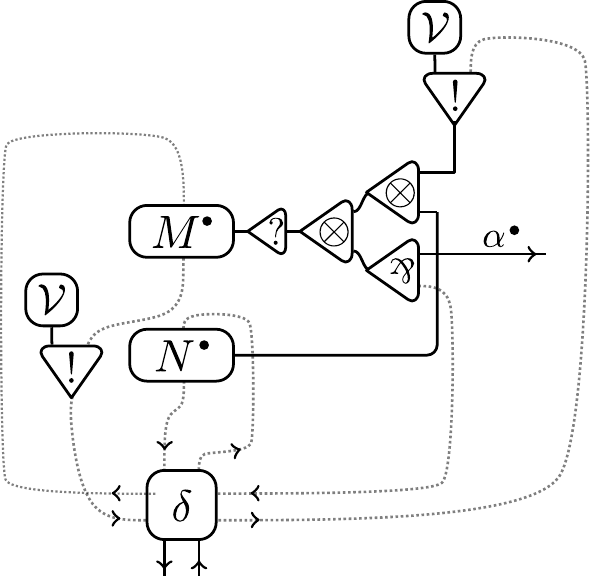}
    & $\to^\ast$
    & \includegraphics{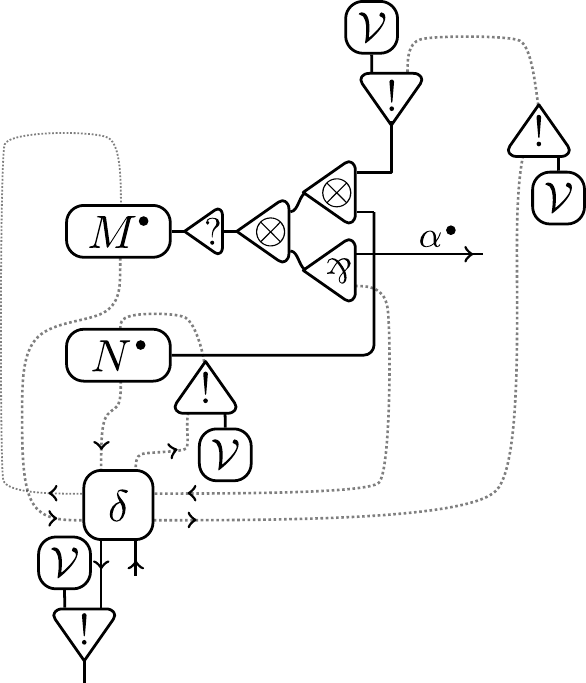} \\

    \rrulet{subst-r'}{rapp}
    & \includegraphics{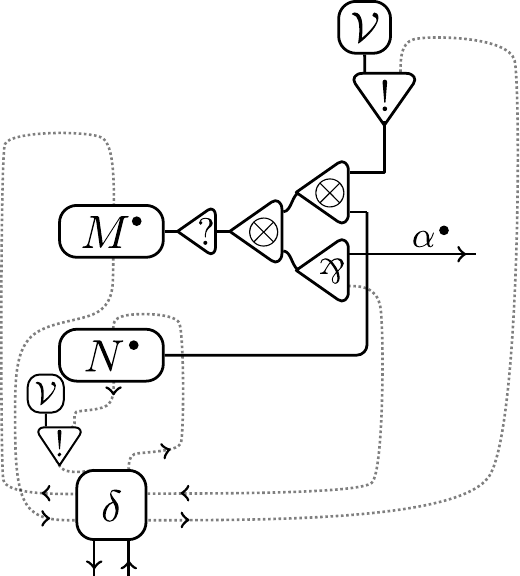}
    & $\to^\ast$
    & \includegraphics{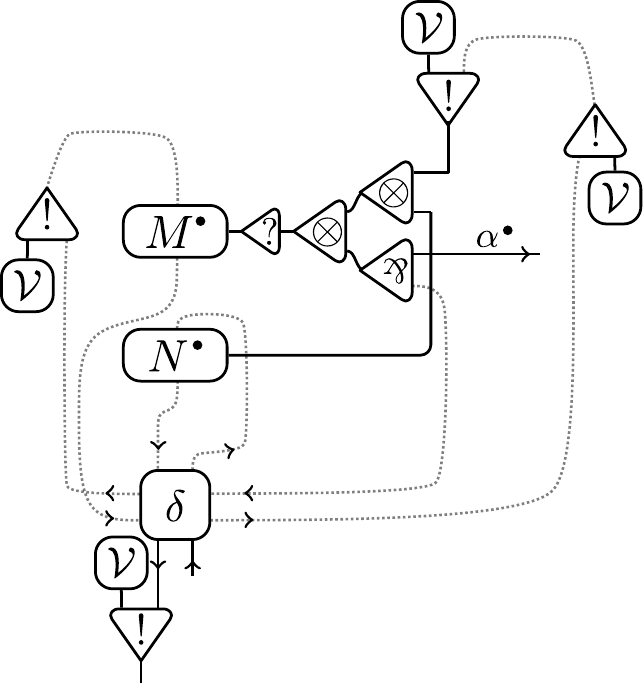} \\
  \end{longtable}
\end{center}
\newcommand{\labelsubst}[2]{|{#1}|_{#2}}
\newcommand{\lthisplus}{$\lamlang{cES+}$}

\subsection{Termination}
The second result states that the translation of a term is strongly
normalizing:

\begin{theorem}{Termination}\label{theorem-termination}\\
  The translation of a well typed term of {\lamadio} terminates.
\end{theorem}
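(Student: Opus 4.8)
The plan is to exploit Theorem~\ref{theorem-nets-termination}, which says that for nets weak and strong normalization coincide: it therefore suffices to produce, for a well-typed term $M$, a \emph{single} finite reduction of $M^\bullet$ to a net normal form. I would obtain such a reduction by transporting a terminating reduction from the source side through the simulation theorem. The starting point is that a well-typed term of {\lamadio} is strongly normalizing at the term level — this is exactly what the stratified type and effects system guarantees — so $M$ admits a complete reduction $M \to^\ast N$ to a term normal form $N$. Since the translation is really defined on the intermediate calculus {\lthis}, whose termination and its correspondence with {\lamadio} are settled in \cite{HamdaouiValiron2018}, I would carry out the argument there and transfer it back.

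Next I would invoke simulation. In its clean form for {\lthis} (Theorem~\ref{theorem-nets-simulation-lthis}) the reduction $M \to^\ast N$ lifts to $M^\bullet \to^\ast N^\bullet$; directly on {\lamadio} one uses Theorem~\ref{theorem-nets-simulation} instead, where $(\texttt{get})$ steps additionally emit a non-deterministic sum $\net R + \sum_i N_i^\bullet$. Because net reduction on sums acts componentwise and the sums stay finite, strong normalization of a sum reduces to strong normalization of each summand; no generality is lost by reasoning on a single translated normal form $N^\bullet$ together with the inert residual $\net R$. Thus the whole problem is reduced to showing that the translation of a term normal form is weakly normalizing.

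The crux, and the step I expect to be the main obstacle, is precisely this last point: the translation of a term normal form need not itself be net-normal. A closed normal form $N$ is an inert parallel composition of values and stores, yet $N^\bullet$ may still contain \emph{administrative} redexes with no term-level counterpart — the structural cells assembled by the routing areas, together with box openings and erasures. To close the argument I would inspect each translation clause and check that all surviving redexes fall into terminating classes: the structural sublayer forms a routing net, which reduces to a routing area by Theorem~\ref{theorem-mrarea-charact}, while the remaining steps only open or erase boxes and hence strictly decrease the total number of boxes, exactly as in Lemma~\ref{lemma-nets-termination-nsurf}. Combining these, $N^\bullet$ reduces to a genuine normal form, so $M^\bullet \to^\ast N^\bullet \to^\ast$ (normal form) is weakly normalizing, and strongly normalizing by Theorem~\ref{theorem-nets-termination}. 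The delicate part is the bookkeeping in this final inspection — giving a precise shape for the translation of normal forms and verifying that no reduction of $N^\bullet$ can escape the two terminating categories above — since a single overlooked interaction between a box and the structural layer would break the termination measure.
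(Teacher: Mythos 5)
Your overall skeleton --- source-level termination, transport through simulation, then a direct analysis of the residual redexes in the translation of a term normal form, closed off by the weak-implies-strong Theorem~\ref{theorem-nets-termination} --- is exactly the paper's strategy, and you correctly isolate the crux: the translation of a normal form is not net-normal. The genuine gap is in your classification of those residual redexes. You claim they all fall into two terminating classes, routing-net reductions (Theorem~\ref{theorem-mrarea-charact}) and box openings/erasures. That is not true. A call-by-value normal form can be a stuck application $(\lambda x.P)\,N$ whose argument $N$ is a non-value (e.g.\ headed by a starving $\get{r}$); the term is normal because $\beta_v$ fires only on values, but the translated net contains a live multiplicative $\otimes/\parr$ redex that nothing prevents from firing. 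Firing it releases the body $P^\bullet$, and the ensuing substitution redexes duplicate boxes and push resources through the routing areas --- reductions that are neither routing-net steps nor mere box openings/erasures, and that do not obviously decrease either of your two measures. This is precisely why the paper introduces the extended calculus {\lthis +} with labelled substitutions and the rule $(\beta_{V}')$, shows that {\lthis} normal forms reduce to $F$-normal forms, and then proves the separability lemmas (Lemmas~\ref{lemma-nets-termination-otsep} and~\ref{lemma-nets-termination-nfsep}): the net is reorganized into one large routing area attached to a redex-free part $\mathcal{S}$, after which the only remaining interactions are finitely many $\red{ba}$ steps at the interface (Lemma~\ref{lemma-termination-lthisp}). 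Without something playing the role of this separability analysis, your ``final inspection'' does not go through.

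Two smaller inaccuracies feed into this. First, normal forms of {\lamadio} are not only parallel compositions of values and stores; deadlocked configurations with starving reads are normal too, and these are exactly the terms whose translations exhibit the problematic behaviour above. Second, the residual $\net R$ in Theorem~\ref{theorem-nets-simulation} is not an inert leftover: it is a full net (the branch where the read drops all available assignments) whose normalization must itself be established. The paper avoids both issues by carrying out the argument entirely in {\lthis}, where the clean simulation Theorem~\ref{theorem-nets-simulation-lthis} internalizes these summands; you gesture at {\lthis} but should commit to it.
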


The theorem that we will actually prove is rather the following:
\begin{theorem}{Termination of \lthis}\label{theorem-termination-lthis}\\
  The translation of the normal form of a well-typed term of {\lthis} terminates.
\end{theorem}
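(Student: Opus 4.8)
The plan is to reduce strong normalization to weak normalization via Theorem~\ref{theorem-nets-termination}, and then to exhibit a single terminating reduction of $N^\bullet$ for a normal form $N$ of {\lthis}. First I would recall from~\cite{HamdaouiValiron2018} the shape of {\lthis} normal forms: parallel compositions of stores, value-boxes and stuck threads carrying suspended (explicit) substitutions that can no longer be propagated, with no remaining $(\beta_v)$, $(\texttt{get})$ or $(\texttt{set})$ redex. By Property~\ref{property-nets-substitution} and the compositionality of the translation, $N^\bullet$ is then obtained by wiring the translations of these components through the routing areas $\gamma$, $\delta$ and their composites, so its reductions can be analysed component by component.

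The core observation is that $N^\bullet$ may still contain surface redexes even though $N$ is source-normal: the net reduction is finer than {\lthis} reduction and keeps firing administrative cuts. I would classify the available reductions into terminating families. The purely structural part — contractions, cocontractions, (co)weakenings together with the $\red{ba}$ rule — forms a routing net and normalizes to a routing area by Theorem~\ref{theorem-mrarea-charact}. The multiplicative reductions $\red{m}$ strictly decrease the number of cells. The non-surface reductions $\rednsurf$, that is the $\red{e}$ and $\red{er}$ steps performed inside boxes, are strongly normalizing by Lemma~\ref{lemma-nets-termination-nsurf}. This leaves the box-duplication rule $\red{d}$ and the non-deterministic rule $\red{nd}$ as the only steps that may increase the size of the net and that could, a priori, cascade.

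The decisive step is therefore to show that in $N^\bullet$ these two rules do not regenerate. I would argue that the contractions facing the principal port of a box come solely from the finitely many variable occurrences recorded by the suspended substitutions of the normal form, and that, since reductions inside boxes are confined to $\rednsurf$, duplicating a value-box never exposes a fresh surface application demanding further duplication — precisely because $N$ carries no $(\beta_v)$ redex; similarly, the $\red{nd}$ choices only resolve the finitely many packed stores available on the reference wires. Concretely, I would fix a strategy that first normalizes the structural fragment so that all contraction trees reach their final shape, and then performs the now bounded box duplications and non-deterministic choices along these fixed trees; a lexicographic measure on (number of boxes sitting under unresolved contractions, total cell count) decreases along this strategy, reaching a net normal form.

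Having exhibited one finite reduction of $N^\bullet$ to a normal form, I conclude that $N^\bullet$ is weakly normalizing, hence strongly normalizing by Theorem~\ref{theorem-nets-termination}. The hard part will be the third step: making rigorous that $\red{d}$ cannot cascade. This demands a precise correspondence between the surface cuts of $N^\bullet$ and the structure of the {\lthis} normal form, showing that every box-against-contraction cut originates from a suspended substitution rather than from a would-be source redex, and that normalizing the routing fragment first genuinely bounds the number of such cuts so that the lexicographic measure is well-founded.
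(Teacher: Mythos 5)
Your overall skeleton --- exhibit one terminating reduction of the translated normal form and then invoke Theorem~\ref{theorem-nets-termination} to upgrade weak to strong normalization, handling the purely structural fragment via routing areas (Theorem~\ref{theorem-mrarea-charact}) --- matches the paper's strategy. The genuine gap is in your ``decisive step''. It is not true that the translation of a {\lthis}-normal form cannot expose fresh surface applications: the multiplicative cut implementing an application $V\ M$ fires in the net even when the argument $M$ is not a value, since the $\otimes/\parr$ cut does not wait for $M^\bullet$ to become a box. Such a redex is blocked in {\lthis} (which is call-by-value) but not in the net; firing it opens the box of $V$, exposes the body of the abstraction, and that body contains new applications, new contractions facing boxes and new reference substitutions. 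So the claim that no cascade occurs ``precisely because $N$ carries no $(\beta_v)$ redex'' is exactly where the argument breaks. Moreover your lexicographic measure (boxes under unresolved contractions, total cell count) is not obviously well-founded: a $\red d$ step duplicates every box nested inside the copied box, and those copies may again sit under contractions, so the first component can increase. Termination here ultimately comes from the typing of the source term, not from a syntactic count on the net.

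The paper closes this gap by a detour through an extended calculus {\lthis +}, which adds labelled variable and reference substitutions together with a rule $(\beta_{V}')$ firing a $\beta$-redex on a non-value argument --- precisely the extra reductions the net can perform. It shows that a {\lthis}-normal form reduces in {\lthis +} to a sum of $F$-normal forms given by an explicit grammar, that these are {\lthis +}-normal, and that the translation of an $F$-normal form is \emph{separable} (Definition~\ref{definition-nets-termination-separability}, Lemma~\ref{lemma-nets-termination-nfsep}): it splits into a redex-free net $\mathcal{S}$ and a single routing area, the only residual interactions being derelictions meeting cocontraction trees, which resolve in finitely many $\red{ba}$ steps (Lemma~\ref{lemma-termination-lthisp}). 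To complete your proof you would need either to reproduce this intermediate-calculus analysis, or to supply an independent argument (for instance a reducibility or typing argument on nets) that the cascade of $\red m$, $\red e$ and $\red d$ steps triggered by non-value $\beta$-redexes terminates; the measure you propose does not establish this.
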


While {\lthis} is closer to nets that {\lamadio}, unfortunately a normal form of
{\lthis} is not translated to a normal form in proof nets. The corresponding net
can still perform some reductions. But we will see that these are unessential
and limited: in a few steps, a normal form is reached in nets.

We proceed in two stages: first, we extend {\lthis} to a language {\lthis +}
that is able to do just a little more reductions than {\lthis}. The extension of
the translation and the simulation theorem for {\lthis +} are straightforward.
Then, we show that {\lthis +} also terminates, give an explicit grammar for its
normal forms, and show that the translation of these normal forms are
strongly normalizing proof nets.

\paragraph{\lthis +}

We add a labelled reference substitution variable substitution to {\lthis}
: $\refssdbar{}{V}{M} \mid \varsbar{x}{N}{M}$. The first one,
$\refssdbar{}{V}{M}$, correspond to a downward reference substitution where
$\multv{V}$ has no free variables. The second one corresponds to a variable
substitution of a term that is not a value. Both can appear during the reduction
in proof nets but are not accounted for in {\lthis}. $\varsbar{x}{N}{M}$ can't
be reduced, either in $N$ or by \rrulet{subst}{} rules (whether $N$ is a value or
not). The labelled reference forbids any reduction under it, except for the
labelled reference substitutions.  We extend the reduction rules
\rrulet{subst-r}{} to the labelled substitution $\refssdbar{}{V}{}$. They can be
performed anywhere (included under a variable substitution) except under
abstraction, in a context defined by the following grammar: $J ::= [.] \mid J\ M
\mid M\ J \mid \refssd{}{V}{J} \mid \varss{\sigma}{J} \mid \varsbar{x}{N}{J}$.

We also add a $\beta$-rule to fire such labelled substitution :
\[
  (\beta_{V}') \refssl{}{V}{(\lambda x . M)\ N} \to
  \varsbar{x}{N}{\refssdbar{}{V}{M}}
\]

This corresponds to the additional reduction nets can do. Indeed a $\beta$ redex
can always be fired in nets even if the argument is not a value. But then, the
corresponding term do need to be reduced before any duplication, erasure or
substitution. Let us now show termination and describe the normal forms of {\lthis+}:

\begin{mdef}{$F$-normal forms}\label{lemma-progress-auxplus}\\
  Let $M$ be a normal form of the form of {\lthis}. It belongs to the grammar
  $M_\text{norm}$ (cf~\cite{HamdaouiValiron2018}). Then $M$ reduces to a sum
  of terms in {\lthis +} that belongs to the
 following grammar of $F$-normal forms :
\begin{tabred}
  F_\text{norm} & ::= & \get{r} \mid \refssl{}{V}{F_\text{norm}\ V} \mid
  \refssl{}{V}{F_\text{norm}\ F_\text{norm}} \mid \varsbar{x}{F_\text{norm}}{M} \\
\end{tabred}
where the $M$ is in \lthis (contains no labelled substitution).
\end{mdef}

\begin{proof}
  By induction on the structure of $M$, with the additional hypothesis that only
  the additional rules are performed (no upward substitution) :
  \begin{itemize}
    \item $M = \get{r}$ : ok 
    \item $M = \refssl{}{V}{M_\text{norm}\ V}$ : by induction, $M_\text{norm}$
      reduces to a sum of $F_\text{norm}$. Take one such summand $N$, then $M \to^\ast
      \refssl{}{V}{N_\text{norm}\ V}$ (because no rule \rrule{subst-r'}{\top} was
      used)
    \item $M = \refssl{}{V}{M_\text{norm}\ M'_\text{norm}}$ : we proceed as the
      previous case
    \item $M = \refssl{}{V}{V\ M_\text{norm}}$ : by induction, $M_\text{norm}$
      reduces to some $N$ in $F_\text{norm}$ grammar, so $M \to^\ast
      \refssl{}{V}{V\ N}$. Then, by inversion of typing rules, $V$ is of the
      form $\lambda x . P$ and we can apply the new $\beta$-reduction to get
      $\varsbar{x}{N}{\refssdbar{}{V}{P}}$. By pushing down $\refssdbar{}{V}{}$ in
      $P$, we can reduce it to a sum of $P_i$s where each $P_i$ do not contain
      labelled substitutions. Then $M \to^\ast \sum_i \vars{x}{N}{P_i}$
  \end{itemize}
\end{proof}

\begin{lemma}
$F$-normal forms are normal.
\end{lemma}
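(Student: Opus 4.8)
The plan is to prove the statement by structural induction on the grammar of $F_\text{norm}$, showing that none of the four productions exposes a redex for any rule of {\lthis +} --- neither an inherited {\lthis} rule nor one of the two additions, namely $\beta_V'$ and the labeled reference-substitution propagations. A convenient first step is the auxiliary observation, itself by a one-line induction on the grammar, that no $F_\text{norm}$ term contains a barred reference substitution $\refssdbar{}{V}{\cdot}$: the productions only build $\get{r}$, the $\refssl$-decorated applications over $F_\text{norm}$ and values, and barred variable substitutions whose body lies in {\lthis} and whose first argument is again an $F_\text{norm}$. Since the added labeled propagation rules require a $\refssdbar{}{V}{\cdot}$ redex, they can never fire inside an $F_\text{norm}$ term, so it suffices to rule out the inherited {\lthis} redexes and the new $\beta_V'$ redex.

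For the base case $\get{r}$, a bare get is inert: the only rule consuming it, \rrulet{subst-r}{get}, needs an enclosing downward reference substitution, which is absent at the head. For the two application productions $\refssl{}{V}{F_\text{norm}\ V}$ and $\refssl{}{V}{F_\text{norm}\ F_\text{norm}}$, I would note that the decoration $\refssl{}{V}{\cdot}$ is the latent bundle of reference values carried by an application and stays inert until the application fires; the only rules with this head shape are the {\lthis} $\beta$-rule $\beta_v$ and its {\lthis +} extension $\beta_V'$, both of which demand that the function part be an abstraction $\lambda x.M$. But no production of $F_\text{norm}$ ever yields an abstraction, so the function part --- being itself an $F_\text{norm}$ --- is never of that shape, and both beta rules are blocked at the head. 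The immediate subterms are normal by the induction hypothesis, the argument value $V$ being normal in the weak call-by-value discipline.

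The last case, $\varsbar{x}{F_\text{norm}}{M}$, is where the bookkeeping of the barred constructs must be invoked carefully, and I expect it to be the main obstacle. By the very design of the barred variable substitution, $\varsbar{x}{N}{\cdot}$ admits no \rrulet{subst}{} redex and forbids reduction in its first argument $N$, whether or not $N$ is a value; the only reductions permitted beneath a barred node are the labeled propagations, which by the context grammar $J$ may only descend into the body $M$. Since $M$ is a genuine {\lthis} term it carries no barred reference substitution, so --- as already observed globally --- these rules find nothing to act on. Hence no rule of {\lthis +} applies at any position, and the term is normal. The crux throughout is to verify that passing from {\lthis} to {\lthis +} does not silently open a redex inside an $F$-normal form, which is exactly why the global absence of $\refssdbar{}{V}{\cdot}$ and the precise shape of the context $J$ are the load-bearing facts.
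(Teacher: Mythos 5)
Your proof is correct and follows essentially the same route as the paper's: a structural induction on the grammar of $F$-normal forms, using that no $F$-normal form is a value (in particular not an abstraction) to block both $\beta$-rules in the application cases, and that the body $M$ under a barred variable substitution lies in {\lthis} and hence contains no labelled substitution for the propagation rules to act on. You merely make explicit the bookkeeping about the absence of barred reference substitutions and the shape of the context grammar $J$, which the paper leaves implicit.
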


\begin{proof}
  By induction :
  \begin{itemize}
    \item $M = \get{r}$ : ok
    \item $M = \refssl{}{V}{F_\text{norm}\ V}$ : by induction, $F_\text{norm}$
      is normal, and values are not $F$-normal form, so the application can't create any
      $\beta$-redex.
    \item $M = \refssl{}{V}{M_\text{norm}\ M'_\text{norm}}$ : same as the
      previous case
    \item $M = \varsbar{x}{F_\text{normal}}{M}$ : the explicit substitution
      prevents any reduction except \rrulet{subst-r}{} ones for labelled
      reference substitution, but by definition, $M$ does not contain any.
  \end{itemize}
\end{proof}

\paragraph{$F$-normal forms and proof nets}
We will see in the following that the translation of a $F$-normal form has not
many possible reductions left.  However, a few steps may remain to eventually
reach a normal form. The first step is to collect and merge all the routing
areas that are created and connected during the translation. Doing so, we
separate the net between a part that closely follows the translated term
structure, and a big routing area which connects various subterms to enable
communication through references between them. Once this is done, a few starving
{\getw s} may interact with the routing area, but nothing more. The following
definition give the shape obtained after the merging of routing:

\begin{definition}{Separability}\label{definition-nets-termination-separability}\\
  Let $\mathcal{R}$ be the translation of a {\lthis +}
  term, then $\mathcal{R}$ is say to be \emph{separable} if
  it can be reduced to the following form :

  \begin{center}
    \includegraphics{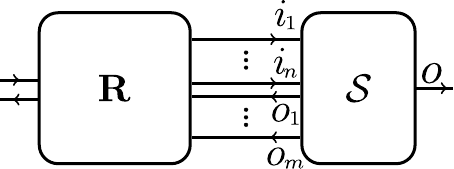}
  \end{center}

  where $\mathbf{R}$ is a routing area and $\mathcal{S}$ a net with free wires labelled by
  $i_1,\ldots,i_n,o_1,\ldots,o_m,O$, satisfying :
  \begin{description}
    \item[(a)] There is no redex in $\mathcal{S}$
    \item[(b)] $i_1,\ldots,i_n$ are either connected to the auxiliary port of a
      $\otimes$ cell, to the auxiliary port of a cocontraction or to the principal port of a dereliction
    \item[(c)] $o_1,\ldots,o_m,O$ are either connected to a
      $\parr$ cell, or to the principal door of an open box
  \end{description}
\end{definition}

The translation of term with at least one free variable is separable.  The
reason is that constructors such as application, substitution, parallel
composition, etc. preserve separability. Moreover, the translation of values
with free variables are obviously separable. The weakenings, which correspond to
free variables, materialize as auxiliary doors of exponential boxes, thus
blocking further reduction.

\begin{lemma}{Open terms separability}\label{lemma-nets-termination-otsep}\\
  Let $\Gamma,x:A \vdash M : (\alpha,e)$ a {\lthis} well-typed term, then its
  translation $M^\bullet$ is separable.
\end{lemma}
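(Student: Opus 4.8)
The plan is to argue by structural induction on the typing derivation of $M$, using crucially that in the type-and-effect system the context $\Gamma,x:A$ is \emph{shared}, i.e. threaded unchanged through every premise of every typing rule. Hence the distinguished variable $x$ occurs in the judgment of every subterm, so the induction hypothesis applies to all immediate subterms, each of which is thus separable. The base cases — a variable, the constant $\ast$, and $\get{r}$ — are settled by direct inspection of their translations: the reference wires traverse the net as a trivial routing area, the residual skeleton $\mathcal{S}$ is manifestly redex-free, and one reads off that each reference input lands on an admissible cell (an auxiliary $\otimes$-port, an auxiliary cocontraction port, or, for $\get{r}$, the principal port of a dereliction) and each output on a $\parr$ or on the principal door of a box, as demanded by (b) and (c).

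The conceptual core of the argument — and the reason the hypothesis must provide a free variable — is a blocking phenomenon. Under the translation's convention, every context variable is threaded through the boxes built for values as an auxiliary door (weakened inside the box when unused); since the shared context places $x$ in every subjudgment, \emph{every} exponential box occurring in $M^\bullet$ carries $x$ on an auxiliary door and is therefore \emph{open}. As the exponential rules $\red{e}$, $\red{d}$, $\red{c}$ and $\red{er}$ all require a \emph{closed} box, none of them can fire. In particular the net-level $\beta$-redex of an application is inert: the translation of the function is an open box, the governing dereliction faces its principal door, $\red{e}$ is inapplicable, the $\parr$ of the arrow type is never exposed, and so the $\otimes$ of the argument never meets it and no $\red{m}$ redex is created either. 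This is exactly what secures clause (a): once the areas are merged, the only reductions anywhere are the structural interactions of $\mathbf{R}$ with the derelictions of starving $\get{r}$'s sitting on its boundary, and these, being interactions with $\mathbf{R}$, are not redexes inside $\mathcal{S}$.

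It remains to assemble the separable form in the inductive step. The constructors for $\parallel$ and application introduce the areas $\gamma$ and $\delta$, which together with the routing areas supplied by the induction hypothesis on the subterms are connected only along reference wires (the wires labelled $\oc X_r$ carrying closed boxes). By iterating Corollary~\ref{corollary-mrarea-composition}, these connections reduce to a single routing area $\mathbf{R}$ — and this is the only place real reduction is performed, using exclusively the area-internal rule $\red{ba}$ and the equivalences of Table~\ref{figure:equivalence}. The sub-skeletons are then glued, following the term structure, into one skeleton $\mathcal{S}$; one verifies that the gluing produces only the admissible interface cells of (b) and (c) and no active cut, so that $\mathcal{S}$ remains free of interior redexes by the blocking phenomenon above.

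The step I expect to be the main obstacle is precisely this uniform verification of clause (a) across all constructors: one must check, case by case through the translation figures, both that openness of boxes genuinely propagates (so that $x$ really does occupy an auxiliary door of each boxed value, including degenerate values such as $\ast^\bullet = \oc 1$) and that connecting a reference output of one sub-skeleton to a reference input of another yields only a passive interface cell rather than a fireable redex. Each such check is routine, but they must be carried out coherently for every typing rule so that the invariants (a)--(c) are preserved simultaneously.
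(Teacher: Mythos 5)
Your proposal is correct and follows essentially the same route as the paper: induction on the term structure, with the key observation that free context variables materialize as auxiliary doors of the value boxes, keeping them open and thereby blocking the exponential reductions (which require closed boxes), while the inductive step merges the routing areas via composition and checks that the interface wires only meet passive cells. The paper states the blocking mechanism more tersely in the paragraph preceding the lemma, but the argument is the same.
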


\begin{proof}\ref{lemma-nets-termination-otsep}\\
  We proceed by induction on terms.
\begin{description}
  \item[Values] As stated above, we can first observe that the translation of $\ast$, of a
    variable $x$ and an abstraction all satisfy the separability conditions.
    Indeed, they are composed of a box with at least one auxiliary door, and the
    inside of the box is a normal form (by induction for abstraction and
    trivially for others). As they are pure terms, $m = n = 0$.
  \item[Get] It is almost the same as values, except that the output $o$
    corresponding to the reference of $\get{r}$ is connected to a dereliction, which is
    allowed in \textbf{(b)}.
  \item[Application] By induction, $M^\bullet$ and $N^\bullet$ are
    separable, thus can be decomposed in the following way :

    \begin{center}
      \includegraphics{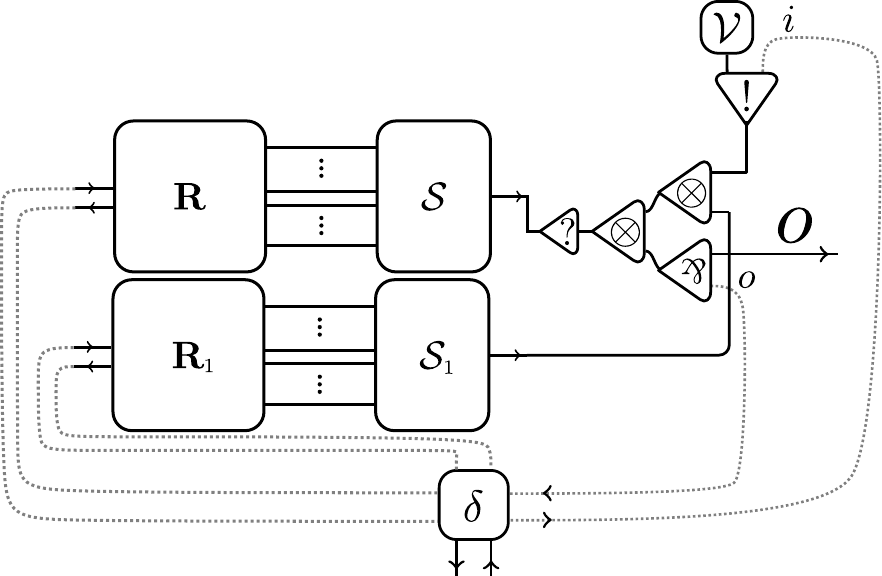}
    \end{center}

    We can merge the 3 routing areas $\routarea{R}, \routarea{R}_1$ and
    $\delta$. All the inputs or outputs previously connected to one of the small
    areas immediately satisfy the conditions \textbf{(b,c)} by IH. The two
    remaining wires $i$ and $o$ are respectively connected to the auxiliary port
    of a par and the auxiliary port of a cocontraction, thus satisfy
    \textbf{(b,c)}.  $\mathcal{S}$ and $\mathcal{S}_1$ are normal by IH, and the
    translation of $\multv{V}$ is easily seen as normal. $O$ is the conclusion of
    a par, hence satisfies \textbf{(b,c)}. It only remains to see that the whole
    net excepted the routing areas is normal, but the only redex that could appear
    is the connection of a dereliction to $M^\bullet$. By \textbf{(c)},
    the output of $M^\bullet$ is either the conclusion of an open box or a par which do not form a
    redex.  The condition \textbf{(a)} is verified.
  \item[Parallel] Similar to application
  \item[Variable substitution] As for application, we apply the IH on
    $M^\bullet$ and $V^\bullet$, merge the routing areas, and just check that the
    connection of $V^\bullet$ to $M^\bullet$ can not create new redexes using
    \textbf{(b)} on the output wire of $V^\bullet$.
  \item[Reference substitutions] Again, the same technique is applied.
\end{description}

\end{proof}

Finally, the main result we rely on as explained above is the following one :

\begin{lemma}{Normal form separability}\label{lemma-nets-termination-nfsep}\\
  The translation of a $F$-normal form is separable
\end{lemma}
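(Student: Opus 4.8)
The plan is to proceed by structural induction on the grammar of $F$-normal forms, reusing the ``merge the routing areas, then check (a)--(c)'' technique already developed in the proof of Lemma~\ref{lemma-nets-termination-otsep}. The guiding intuition is that an $F$-normal form is stuck for a reason that is directly legible in its translation: it is either a bare $\get{r}$, or an application whose head is a stuck subterm sitting in function position, or a term guarded by a labelled variable substitution. In each case the syntactic obstruction to reduction corresponds precisely to one of the clauses of Definition~\ref{definition-nets-termination-separability}, so the induction amounts to tracking how these obstructions survive when subnets are assembled.

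For the base case $M = \get{r}$ I would argue directly rather than through Lemma~\ref{lemma-nets-termination-otsep}, since an $F$-normal form need not carry a free variable. The translation of $\get{r}$ reads its resource from the reference wire through a dereliction and writes nothing, emitting a coweakening. Absorbing that coweakening into the routing area $\mathbf{R}$ (it is itself a routing area), the residual net $\mathcal S$ is normal, its single reference wire is connected to the principal port of a dereliction --- satisfying \textbf{(b)} --- and the output $O$ is the principal door of the (open) box holding the read value, satisfying \textbf{(c)}. Hence $\get{r}^\bullet$ is separable with a trivial $\mathbf{R}$, just as in the Get case of Lemma~\ref{lemma-nets-termination-otsep}.

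For the application clauses $\refssl{}{V}{F_\text{norm}\ V}$ and $\refssl{}{V}{F_\text{norm}\ F_\text{norm}}$ the argument is exactly the application case of Lemma~\ref{lemma-nets-termination-otsep}, now fed by the induction hypothesis instead of by a free variable: the head $F_\text{norm}$, and in the second clause the argument $F_\text{norm}$, are separable by induction, while a value $V$ and the reference values $\multv{V}$ translate to normal nets. Using Corollary~\ref{corollary-mrarea-composition} and juxtaposition (Definition~\ref{definition-mrarea-juxt}) I merge $\routarea{R}$, $\routarea{R}_1$ and $\delta$ into a single area $\mathbf{R}$; every wire formerly attached to a small area still meets \textbf{(b)} or \textbf{(c)} by the induction hypothesis, the two wires created by $\delta$ land on the auxiliary port of a $\parr$ and of a cocontraction (satisfying \textbf{(c)} and \textbf{(b)}), and $O$ is the conclusion of a $\parr$. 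The decisive point for \textbf{(a)} is that the only reduction that could appear is a dereliction applied to the output of the head $F_\text{norm}^\bullet$; but by \textbf{(c)} that output is a $\parr$ or an \emph{open} box, so neither a $\red m$ nor a $\red d$ redex is formed and the $\beta$-step stays blocked --- the net-level image of the fact that $F_\text{norm}\ (-)$ is a stuck application.

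The labelled substitution $M = \varsbar{x}{F_\text{norm}}{N}$ is where the real work lies. Here $F_\text{norm}^\bullet$ is separable by the induction hypothesis, while $N$ is a genuine {\lthis} term typed with $x$ in its context, hence carries at least the free variable $x$ and is separable by Lemma~\ref{lemma-nets-termination-otsep}. After merging the two areas, the translation connects the output $O$ of $F_\text{norm}^\bullet$ to the $x$-wire of $N^\bullet$, whose occurrences are consumed by derelictions, a contraction, or a weakening. Verifying \textbf{(a)} at this new connection is the crux: since the substituted term has a value type $\oc(\ldots)$, its output $O$ cannot be a $\parr$ conclusion and so, by \textbf{(c)}, is forced to be an \emph{open} box; as the exponential reductions $\red d$, $\red{er}$ and $\red e$ all require \emph{closed} boxes, none of the consumers of $x$ forms a redex. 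This open-box blocking is precisely the counterpart of the syntactic stipulation that a labelled substitution forbids reduction, and establishing it rigorously --- including the check that no occurrence of $x$ could close the box --- is the main obstacle of the proof.
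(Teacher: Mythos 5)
Your proposal is correct and follows essentially the same route as the paper, which proves this lemma in one line by saying it is established exactly as Lemma~\ref{lemma-nets-termination-otsep}, by induction on the syntax of $F$-normal forms. Your fleshing-out of the cases --- in particular the observation that the labelled-substitution clause is the only genuinely new one, and that the connection it introduces is blocked because the substituted net's output is forced by condition \textbf{(c)} and its $\oc$-type to be an open box, on which the exponential reductions cannot fire --- is precisely the argument the paper's terse proof presupposes.
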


It is proved as Lemma~\ref{lemma-nets-termination-otsep}, by induction on the
syntax of $F$-normal forms. We can eventually prove from this last lemma:

\begin{lemma}{Termination of {\lthis +}}\label{lemma-termination-lthisp}\\
  The translation of a $F$-normal form is strongly normalizing.
\end{lemma}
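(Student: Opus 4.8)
The plan is to combine the separability of Lemma~\ref{lemma-nets-termination-nfsep} with the equivalence ``weakly normalizing $\iff$ strongly normalizing'' of Theorem~\ref{theorem-nets-termination}. Let $M$ be an $F$-normal form. By Lemma~\ref{lemma-nets-termination-nfsep} its translation is separable, so $M^\bullet \to^\ast \mathbf{R} + \mathcal{S}$ where $\mathbf{R}$ is a routing area and $\mathcal{S}$ is a net satisfying the interface conditions (a)--(c) of Definition~\ref{definition-nets-termination-separability}, in particular containing no redex by (a). Since $M^\bullet$ reduces to $\mathbf{R}+\mathcal{S}$, it is enough, by Theorem~\ref{theorem-nets-termination}, to exhibit one finite reduction of $\mathbf{R}+\mathcal{S}$ to a normal form: this makes $M^\bullet$ weakly, hence strongly, normalizing.

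First I would pin down where redexes of $\mathbf{R}+\mathcal{S}$ can live. None lies inside $\mathcal{S}$ by (a), and none lies inside $\mathbf{R}$ since a routing area is a normal form of routing nets by Theorem~\ref{theorem-mrarea-charact}; hence every redex is a cut at the interface, on a wire among $i_1,\dots,i_n,o_1,\dots,o_m,O$. A cut needs two facing principal ports, and on its free wires $\mathbf{R}$ exposes only the principal ports of its (co)contraction trees. On the $\mathcal{S}$ side, condition (c) makes each output end either on a $\parr$ (inert against a (co)contraction) or on the principal door of an \emph{open} box, which the exponential rules cannot touch since they require closed boxes; so outputs never produce a redex. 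By (b) each input ends on an auxiliary port of a $\otimes$ or of a cocontraction --- not a principal port, hence inert --- or on the principal port of a dereliction. The only active sites are therefore the derelictions coming from the $\get{r}$ occurrences of $M$, facing the trees of $\mathbf{R}$.

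It then remains to show that resolving these sites is a bounded process. Two kinds of steps occur. The finitely many closed boxes carried by the reference substitutions of $M$ feed inputs of $\mathbf{R}$ and can be pushed through it by the Transit Property~\ref{property-translation-transit}; each is duplicated a bounded number of times, $\mathbf{R}$ remaining a routing area throughout, and the copies are routed to weakenings and erased there. In parallel, each dereliction of a $\get{r}$ is fired exactly once against the \emph{box-free} structural tree of $\mathbf{R}$ it faces, an interaction that terminates at once (leaving the summand normal, or collapsing it to $\mathbf{0}$). Crucially, because $M$ is an $F$-normal form the gets are \emph{starving}: the effect-scheduling wiring of $\mathbf{R}$ --- inherited from the $\gamma$ and $\delta$ areas, whose relations cut exactly the pairs forbidden by the call-by-value ordering of effects --- routes each such dereliction only to coweakenings, never to an actual value-box. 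Hence firing a dereliction opens no value and injects no sub-net back into $\mathcal{S}$, and by (b)--(c) no fresh redex is created. A measure such as ``number of un-transited boxes plus number of surface derelictions'' strictly decreases at every step and is never increased, so the reduction halts at a normal form.

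The main obstacle is exactly this non-cascading: that firing a surviving dereliction creates no $\beta$-like cut ($\otimes/\parr$ or box/dereliction) inside $\mathcal{S}$. The clean way to secure it is structural rather than semantic, since a net reduction need not mirror a term reduction: one reads off the separable decomposition that every surviving dereliction is wired, through $\mathbf{R}$, solely to coweakenings, so that no value-box is ever delivered and the dereliction has nothing to open. This is precisely what conditions (b) and (c) are engineered to guarantee, and it is the net-level counterpart of the fact that the gets of an $F$-normal form are starving --- a \lthisplus{} normal form admitting no \texttt{get}-then-$\beta$ step. Granting this wiring fact, the counting argument above closes the lemma.
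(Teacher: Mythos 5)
Your skeleton is the paper's: reduce to the separable form via Lemma~\ref{lemma-nets-termination-nfsep}, observe that all residual redexes sit on the interface wires between $\routarea R$ and $\mathcal{S}$, and use conditions (b) and (c) to rule out everything except a dereliction facing the structural trees of $\routarea R$; exhibiting one finite reduction to normal form then suffices. The problem is the claim you single out as the ``main obstacle'' and then grant: that the wiring of $\routarea R$ routes every surviving dereliction \emph{only to coweakenings}, so that no box is ever delivered. This is neither needed nor derivable from where you look for it --- conditions (b) and (c) constrain the free wires of $\mathcal{S}$, not the multirelation of $\routarea R$, and nothing prevents the output of $\routarea R$ facing a dereliction from being a plain wire back to one of its inputs. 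The paper instead closes the case analysis directly on the three possible shapes of that output: a coweakening (the summand collapses to $\mathbf{0}$); a bare wire to an input of $\routarea R$, so that the dereliction ends up facing an output of $\mathcal{S}$, i.e.\ a $\parr$ or the principal door of an \emph{open} box, and since the exponential steps require closed boxes no redex forms --- an observation you already make in your second paragraph and should have reused here instead of appealing to a term-level ``starving gets'' fact; or a cocontraction tree, in which case finitely many $\red{nd}$ steps produce a sum whose summands fall back into the first two cases.

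A second, smaller issue: the phase in which ``the finitely many closed boxes carried by the reference substitutions'' transit through $\routarea R$ has no counterpart here. In the separable form of Definition~\ref{definition-nets-termination-separability}, condition (c) only allows $\parr$ cells or \emph{open} boxes on the wires feeding $\routarea R$, so there are no closed boxes left to push through the area; Property~\ref{property-translation-transit} is used in the simulation proofs, not in this lemma. Dropping that phase, and replacing your granted wiring claim by the three-way case analysis above, turns your argument into the paper's proof.
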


From which we deduce:

\begin{corollary}{Strong normalization}\label{corollary-termination-sn}\\
  \begin{enumerate}
    \item The translation of a closed well-typed term of {\lthis +} is strongly normalizing.
    \item The translation of a closed well-typed term of {\lthis} is strongly
      normalizing (Theorem~\ref{theorem-termination-lthis}).
  \end{enumerate}
\end{corollary}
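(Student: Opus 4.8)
The plan is to obtain both items purely by assembling the lemmas already in place, using weak normalisation as a bridge to strong normalisation through Theorem~\ref{theorem-nets-termination}. I would prove the first item directly and then derive the second as an immediate consequence, observing that every closed well-typed term of {\lthis} is in particular a closed well-typed term of {\lthis +} with the very same translation.

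For the first item, let $M$ be a closed well-typed term of {\lthis +}. First I would invoke termination of {\lthis +} together with its grammar of normal forms: by Lemma~\ref{lemma-progress-auxplus} and the lemma that $F$-normal forms are normal, $M$ reduces to a finite sum $\sum_i F_i$ of $F$-normal forms. The simulation theorem for {\lthis +} then lifts this to the net side, giving $M^\bullet \to^\ast \sum_i F_i^\bullet$. By Lemma~\ref{lemma-termination-lthisp} each summand $F_i^\bullet$ is strongly normalising; since reduction on formal sums acts summand-wise, a finite sum of strongly normalising nets is again strongly normalising, hence weakly normalising. Thus $M^\bullet$ reduces to a weakly normalising net, so $M^\bullet$ is itself weakly normalising, and Theorem~\ref{theorem-nets-termination} upgrades this to strong normalisation of $M^\bullet$.

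For the second item, which is Theorem~\ref{theorem-termination-lthis}, a closed well-typed term of {\lthis} is also one of {\lthis +} and has the same translation, so the first item applies verbatim. Should one prefer to avoid citing {\lthis +} termination, the same conclusion can be reached more explicitly: by termination of {\lthis} the term $M$ reduces to a {\lthis}-normal form $N$, the clean simulation of Theorem~\ref{theorem-nets-simulation-lthis} yields $M^\bullet \to^\ast N^\bullet$, Lemma~\ref{lemma-progress-auxplus} reduces $N$ inside {\lthis +} to a sum of $F$-normal forms, and the argument then proceeds exactly as above to exhibit $M^\bullet$ as weakly normalising.

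The genuine content, and the only place where care is needed, lies in the two steps where the argument deliberately settles for weak rather than strong information. The translation of a {\lthis}-normal form is generally \emph{not} a net normal form — as the discussion preceding Definition~\ref{definition-nets-termination-separability} stresses — which is exactly why one cannot stop at $N^\bullet$ and must pass through the $F$-normal forms and the separability analysis culminating in Lemma~\ref{lemma-nets-termination-nfsep}; this detour is what makes Lemma~\ref{lemma-termination-lthisp} available. Consequently the whole argument only ever produces \emph{one} reduction of $M^\bullet$ to a strongly (hence weakly) normalising net, so the final leap to strong normalisation of $M^\bullet$ rests entirely on Theorem~\ref{theorem-nets-termination}. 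The remaining obstacle is merely bookkeeping: chaining the two simulations correctly and checking that strong normalisation is preserved under finite formal sums, both of which are routine once the heavy lifting of Lemma~\ref{lemma-termination-lthisp} has been done.
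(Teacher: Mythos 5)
Your argument is correct and fills in the deduction exactly as the paper intends: the paper leaves the corollary as an immediate consequence of Lemma~\ref{lemma-termination-lthisp}, and the chain you spell out (termination of the source calculus, simulation to push the reduction into nets, strong normalisation of the translations of $F$-normal forms, and finally Theorem~\ref{theorem-nets-termination} to upgrade weak to strong normalisation) is precisely the intended route. Your observation that the whole argument only exhibits one reducing path and therefore hinges on the equivalence of weak and strong normalisation is exactly the point of Theorem~\ref{theorem-nets-termination} in this development.
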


\begin{proof}{Lemma~\ref{lemma-termination-lthisp}}\\
  We apply \ref{lemma-nets-termination-nfsep} to reduce the translation of a
  $F$-normal form $F$ to a routing area $\routarea R$ connected to $\mathcal{S}$
  satisfying separability conditions. $\mathcal{S}$ and $\routarea R$ are
  normal. The potential redexes must involve a wire at the interface of
  $\routarea{R}$ and $\mathcal{S}$. The inputs of $\routarea{R}$ are connected
  to $\mathcal{S}$, either to a par or to the conclusion of an open box and thus
  can't form any redex with (co)weakenings and (co)contractions of $\routarea
  R$.

  The outputs of $\routarea R$ are either connected to the auxiliary port of a
  tensor, the auxiliary port of a cocontraciton or to a dereliction. Only the
  latter may form a new redex. If the output of $\mathbf{R}$ is a coweakening,
  then everything reduces to $\mathbf{0}$. If it is just a wire, then the
  dereliction is connected through this wire to an input of $\routarea R$ which
  again can't be part of any redex. The only remaining case is when the output
  of $\routarea R$ is a cocontraction tree.  Then we can perform the
  non-deterministic $\red{ba}$ reductions, and we go back to exactly the two
  previous cases as the dereliction is finally connected to a leaf of the tree
  of an input.

  Hence, after we performed finitely many $\red{ba}$ reductions,
  we finally get a normal form, which is either $\mathbf{0}$, or a sum of the
  previous net $\mathbf{S}$ connected to simpler routing area $\routarea{R}_i$.
\end{proof}

Simulation and termination ensure that if a term $M$ reduces to $T = V_1 \parallel
\ldots \parallel V_n$, then:
\begin{itemize}
  \item By simulation, the translation $M^\bullet$ can be reduced to a non
    deterministic sum in which one summand will be $T^\bullet$
  \item By termination, this does not depend on the path of reduction we chose:
    any reduction will converge to a normal form whose summands contain the
    net $T^\bullet$.
\end{itemize}

The normal form of $M^\bullet$ must contain the summand $T$. However, this does
not tell us anything about what are the other summands of $M^\bullet$. Let us
pretend the support for integers in \lamadio. Assume that a term $M$ have $2$ as
only normal form. Because of this additional term $\net R$ appearing in the
simulation theorem, our results do not prevent $M^\bullet$ to reduce to
$1^\bullet + 2^\bullet + 3^\bullet$. In such a case, one may argue that the
proof nets do not reflect faithfully the language as they may have a lot more
possible outcomes.

\subsection{Adequacy}

The adequacy theorem states that the summands of the normal form of
$M^\bullet$ are either the translation of a normal form $T = V_1 \parallel
\ldots \parallel V_n$ that is a reduct of the original $M$, or garbage, that is
a non correct net that corresponds to execution paths which deadlocked. We can
recognize this garbage, thus eliminate it: with this additional operation, the
summands of the normal form of $M^\bullet$ coincide with the values that are
reachable by $M$.

\begin{theorem}{Adequacy}\label{theorem-adequacy}\\
  Let $M$ be well-typed term of \lamadio. We write $\values{M} := \{
  T = \ \parallel_i V_i \mid M \to^\ast T \}$. Similarly, for a net $\net R$
  with normal form $\net N$, we define $\values{\net R} = \{ \net S \mid \net N =
  \net S + \net S',\ \net S \text{ is a value net } \}$. Then
  $$\values{M} = \values{M^\bullet}$$
\end{theorem}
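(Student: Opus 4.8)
The plan is to prove the two inclusions separately, identifying a reachable value $T=\,\parallel_i V_i$ with the normal form of its translation $T^\bullet$, which is a value net. Throughout I rely on the fact that, by confluence (Theorem~\ref{theorem-nets-confluence}) and termination (Theorem~\ref{theorem-termination}), $M^\bullet$ has a unique normal form $\net N=\sum_j \net S_j$, so that $\values{M^\bullet}$ is well defined as the set of those summands $\net S_j$ that are value nets.

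For the inclusion $\values{M}\subseteq\values{M^\bullet}$ I would take $T\in\values{M}$, so $M\to^\ast T$. Passing through the intermediate language \lthis{} and applying the simulation theorem (Theorem~\ref{theorem-nets-simulation-lthis}, together with Theorem~\ref{theorem-nets-simulation} for the non-deterministic $\get{r}$ steps), I obtain $M^\bullet\to^\ast \net R+T^\bullet$ for some remaining sum $\net R$, so that $T^\bullet$ occurs as a summand of an intermediate reduct. Since $T$ is a parallel composition of values, $T^\bullet$ is a value net, and---having no surface $\get{r}$ left to starve---its normalization neither collapses to $\mathbf 0$ nor splits into several summands, reducing to a single nonzero value net. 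By confluence this value net must itself appear as one of the summands $\net S_j$ of $\net N$, hence lies in $\values{M^\bullet}$.

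For the reverse inclusion $\values{M^\bullet}\subseteq\values{M}$---the crux of the theorem---I would argue on the structure of normal forms. First reduce $M$ inside \lthis{} and its labelled extension to a sum of $F$-normal forms, as in~\ref{lemma-progress-auxplus}; by Lemma~\ref{lemma-nets-termination-nfsep} each $F$-normal form translates to a separable net, and by Lemma~\ref{lemma-termination-lthisp} the normal form of a separable net is either $\mathbf 0$ or a sum of value nets built from simpler routing areas. By confluence, $\net N$ is exactly the sum of these contributions, so every value-net summand $\net S$ of $\net N$ comes from the translation of exactly one $F$-normal form $F$. The decisive step is then to show that whenever $F$ is not (the image of) a parallel composition of values, its translation normalizes to garbage rather than to a value net: the $F$-normal form grammar forces any non-value summand to retain a surface $\get{r}$ at result position, whose dereliction meets a coweakening once the routing area is merged (Lemma~\ref{lemma-termination-lthisp}), collapsing that branch to $\mathbf 0$ or leaving a net failing the value-net shape. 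Hence a value-net summand forces $F$ to be a parallel composition of values $T$, which by construction is a reduct of $M$, so $T\in\values{M}$ and $\net S$ is the normal form of $T^\bullet$.

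Finally I would combine the two inclusions by readback: the translation is injective on parallel compositions of values up to normal form, so distinct reachable values yield distinct value nets and each value net determines its source term uniquely; this makes the identification $T\leftrightarrow\net S$ a bijection and yields $\values{M}=\values{M^\bullet}$. I expect the main obstacle to be the decisive step of the reverse inclusion: ruling out spurious value nets, that is, proving that the extra remainder $\net R$ produced by the $\get{r}$ simulation---branches that drop every available assignment in anticipation of a future one---always reduces to recognizable garbage (a starved $\get{r}$ collapsing to $\mathbf 0$) and never to a value net.
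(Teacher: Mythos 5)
Your forward inclusion and your identification of non-value summands as garbage (via Lemma~\ref{lemma-nets-termination-nfsep} and Lemma~\ref{lemma-termination-lthisp}) match what the paper does. But the crux of the reverse inclusion is resolved incorrectly. You claim that a value-net summand forces the corresponding $F$-normal form to be a parallel of values ``which by construction is a reduct of $M$'', and you frame the remaining obstacle as showing that the extra summand $\net R$ produced by simulating $\get{r}$ ``always reduces to recognizable garbage\dots and never to a value net''. That claim is false. The branch of \rrulet{subst-r}{get} that discards all currently available assignments is not starved forever: other threads may perform further \setw s, after which the pending $\get{r}$ can be served and that branch converges to a perfectly legitimate value net (consider $\get{r} \parallel \store{r}{V_1} \parallel \set{r}{V_2}$, where the drop-everything branch later picks up $V_2$). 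So you cannot dispose of these summands as garbage; you must instead prove that any value they eventually produce is still reachable by $M$ in \lamadio.

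This is exactly what the paper's Theorem~\ref{theorem-adequacy-lthis} (adequacy between \lamadio{} and \lthis) supplies, and it is the ingredient missing from your proposal: every summand arising during the \lthis{} reduction is bounded, in the preorder $\rincl$ of~\cite{HamdaouiValiron2018}, by the translation of a \lamadio{} reduct of $P$. The proof is not ``by construction'': it goes by induction on the length of the longest reduction from $M$, using Lemma~\ref{lemma-onestep-adequacy} (a one-step analysis showing each \lthis{} redex either tracks a \lamadio{} step or yields a term $\rincl$-below the translation of the unchanged term) together with Lemma~\ref{lemma-preservation-values} (reductions avoiding \rrulet{subst-r}{get} preserve the set of reachable normal forms, by confluence). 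Corollary~\ref{corollary-adequacy-simulation-orig} then converts $\rincl$-boundedness into the statement that a parallel-of-values summand is the translation of a \lamadio{} reduct. Without this $\rincl$-based argument (or an equivalent one), your reverse inclusion does not go through.
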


We first need to prove adequacy between {\lamadio} and {\lthis}:

\begin{theorem}{Adequacy for {\lthis}}\label{theorem-adequacy-lthis}\\
  Let $P$ be a term of {\lamadio} and $M = \widetilde{P}$ its translation in
  \lthis, if $M \to^\ast \sum_i M_i$ a normal form, then $\forall
  i, P \to^\ast P_i$ such that $M_i \rincl \widetilde{P_i}$. In other words, any
  term appearing in the normal form of the translation of $P$ is bounded by the
  translation of a reduct of $P$.
\end{theorem}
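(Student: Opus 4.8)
The plan is to prove a one-step \emph{reflection} statement and then conclude by induction on the length of the reduction $\widetilde{P} \to^\ast \sum_i M_i$. First I would recall from \cite{HamdaouiValiron2018} the decoding map $\lfloor\cdot\rfloor$ sending a {\lthis} term to the {\lamadio} program obtained by executing all its explicit (and reference) substitutions, together with the defining properties of $\rincl$; the two facts I rely on are that $\lfloor\widetilde{P}\rfloor = P$ up to the structural equivalence on programs, and that every {\lthis} term is bounded by the translation of its fully substituted form, $M \rincl \widetilde{\lfloor M\rfloor}$. The invariant I would carry along the reduction is that each summand $M$ reached from $\widetilde{P}$ satisfies $P \to^\ast \lfloor M\rfloor$ and $M \rincl \widetilde{\lfloor M\rfloor}$. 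The base case, where the sum is just $\widetilde{P}$ itself, is immediate from reflexivity of $\rincl$ and $\lfloor\widetilde{P}\rfloor = P$, so the whole weight of the proof falls on a single reduction step.

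For the inductive step it suffices to treat one summand $M$ (the others being unchanged) and to show that if $M \to \sum_j M'_j$ then each $M'_j$ again verifies $\lfloor M\rfloor \to^\ast \lfloor M'_j\rfloor$ and $M'_j \rincl \widetilde{\lfloor M'_j\rfloor}$. The hard part will be organising the case analysis on the {\lthis} rules into two families. The \emph{administrative} rules — the substitution-propagation rules, the downward and upward reference-substitution rules, the merges, and the structural rules for $\parallel$ — only rearrange pending substitutions and leave the decoding fixed, $\lfloor M\rfloor = \lfloor M'_j\rfloor$; here I take zero {\lamadio} steps and invoke the confluence and normalisation of {\lthis} established in \cite{HamdaouiValiron2018} to keep the $\rincl$-bound. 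The \emph{computational} rules $(\beta_v)$, $(\texttt{set})$ and $(\texttt{get})$ project onto exactly one genuine {\lamadio} step, $\lfloor M\rfloor \to \lfloor M'_j\rfloor$, which composes with the inductive $P \to^\ast \lfloor M\rfloor$ to extend the source-side reduction.

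The delicate point, and the one I expect to be the main obstacle, is the non-deterministic $(\texttt{get})$ rule: a read $\get{r}$ fires against the multiset of values currently assigned to $r$ and produces a sum with one branch per available value, so I must verify that the decoding of each branch selects exactly one store binding $\store{r}{V_j}$ that is actually present in $\lfloor M\rfloor$, thereby matching each summand with a distinct {\lamadio} $(\texttt{get})$ reduction and creating no spurious branch. The subtlety is that in {\lthis} the memory is carried by the (possibly nested) reference substitutions rather than by top-level stores, so I would first establish a bookkeeping invariant relating the reference-substitution context of $M$ to the store of $\lfloor M\rfloor$ and check that the administrative reductions preserve this correspondence exactly. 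Granting this, every summand $M_i$ of the normal form decodes to a reduct $P_i := \lfloor M_i\rfloor$ with $P \to^\ast P_i$, and the structural bound $M_i \rincl \widetilde{\lfloor M_i\rfloor} = \widetilde{P_i}$ closes the induction and gives the theorem.
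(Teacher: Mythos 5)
Your overall strategy --- a one-step projection lemma followed by an induction along the reduction --- is the same as the paper's, which factors the argument into Lemma~\ref{lemma-onestep-adequacy} and Lemma~\ref{lemma-preservation-values} and inducts on the length of the longest reduction from $M$. The difference is that you pin the witness reduct to be the readback of the summand, $P_i := \lfloor M_i\rfloor$, and carry the invariant $P \to^\ast \lfloor M\rfloor$ along the reduction. That choice is where the argument breaks.

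The problem is the non-deterministic \rrulet{subst-r}{get} step, and precisely the branches you do not discuss. In {\lamadio} the store is cumulative and persistent: the rule $(\texttt{get})$ rewrites $C[E[\get{r}]] \parallel r \Leftarrow V$ to $C[E[V]] \parallel r \Leftarrow V$, keeping the binding, so no reduct of $P$ ever has fewer store bindings than $P$ has accumulated. In {\lthis}, by contrast, firing a get against the multiset $\multv{V}$ produces summands in which the values not selected are \emph{discarded} (this mirrors the target $\red{nd}$ rule, which coweakens the unchosen resources), and there is in addition a summand in which all available values are thrown away. For any such summand $M'_j$, the decoding $\lfloor M'_j\rfloor$ has strictly fewer bindings for $r$ than any actual reduct of $P$, so $P \to^\ast \lfloor M'_j\rfloor$ is simply false and your invariant cannot be maintained. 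This is exactly why the theorem is stated with the preorder $\rincl$ rather than with an equality, and why the paper's Lemma~\ref{lemma-onestep-adequacy} isolates the case ``it threw away available values, in which case $N = N'$ \ldots and $M'' \rincl \widetilde{N'}$'': the witness {\lamadio} reduct is \emph{not} the readback of the summand but a genuine reduct retaining the full store, with the summand only $\rincl$-below its translation. The repair is to weaken your invariant to ``there exists a reduct $P'$ of $P$ with $M \rincl \widetilde{P'}$'', decoupling $P'$ from $\lfloor M\rfloor$, which is what the paper does; your treatment of the administrative and $(\beta_v)$/$(\texttt{set})$ cases then goes through essentially unchanged, and the transfer of normal forms across the non-get completion steps is supplied by Lemma~\ref{lemma-preservation-values}.
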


In particular, applied to values, this gives the sought property for {\lthis}:

\begin{corollary}\label{corollary-adequacy-simulation-orig}
  Let $P$ be a term of {\lamadio}, $M = \widetilde{P}$ its translation in
  {\lthis} such that $M \to^\ast M' + \mathbf{M''}$ where $M'$ is a normal
  form.
  \begin{itemize}
    \item If $M' = V \parallel N$, then $P \to^\ast U \parallel Q$ with $V =
      \widetilde{U}$.
    \item In particular, if $M' = \parallel_i V_i$, then $P \to^\ast \parallel_i
      U_i$ with $V_i = \widetilde{U}_i$
  \end{itemize}
\end{corollary}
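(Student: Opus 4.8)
The plan is to derive the corollary as a direct specialization of Theorem~\ref{theorem-adequacy-lthis}. Starting from a reduction $M \to^\ast M' + \mathbf{M''}$ with $M'$ in normal form, I would first complete $M' + \mathbf{M''}$ to a full normal form $\sum_i M_i$ of $M = \widetilde{P}$, so that $M'$ occurs as one of the summands $M_i$. Applying Theorem~\ref{theorem-adequacy-lthis} then supplies, for this summand, a reduct $P'$ of $P$ in {\lamadio} (that is, $P \to^\ast P'$) together with a bound $M' \rincl \widetilde{P'}$. All the remaining work lies in turning the inequality $\rincl$ into the exact identifications demanded by the two bullets.

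For the first bullet, I would rely on the compositionality of the translation for the parallel operator: $\widetilde{P'}$ is the parallel composition of the thread translations of $P'$, and $\rincl$ respects this thread decomposition. The value net $V$ is therefore bounded by the translation of a single thread of $P'$, say $V \rincl \widetilde{U}$. The crucial observation is that a value net is $\rincl$-maximal: being the translation of a value, $V$ is a closed $\oc_p$ box in normal form that admits no strict upper bound of the form $\widetilde{U}$. The bound is therefore tight, $V = \widetilde{U}$, and since $V$ is a value net its antecedent $U$ is itself a value; the remaining threads of $P'$ supply $Q$, so $P' = U \parallel Q$ and hence $P \to^\ast U \parallel Q$, as required.

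The second bullet then follows by applying the same thread-matching to every component of $M' = \parallel_i V_i$ simultaneously. Since each thread of $M'$ is now a value net, each is matched to a distinct thread of $P'$ whose translation it equals, forcing that thread to be a value $U_i$ with $\widetilde{U_i} = V_i$. Because the $V_i$ exhaust $M'$ and the matching is a bijection on threads for a normal form consisting solely of values, no thread of $P'$ is left over, whence $P' = \parallel_i U_i$.

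The main obstacle I anticipate is the tightness argument underlying both bullets, namely establishing that a value net occurring in the normal form cannot be strictly dominated, under $\rincl$, by the translation of anything other than an equal value. This rests on the structural rigidity of value translations (closed $\oc_p$ boxes in normal form, cf. the translation cases of Section~\ref{section:translation}) and on the precise definition of $\rincl$; once this key matching lemma for values is in place, both the single-value and the all-values cases become immediate, and the passage back through the {\lamadio}--{\lthis} simulation of~\cite{HamdaouiValiron2018} closes the argument.
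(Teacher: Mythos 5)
Your overall route coincides with the paper's: the paper offers no separate proof of this corollary, presenting it as Theorem~\ref{theorem-adequacy-lthis} ``applied to values'' --- exactly your plan of completing $M' + \mathbf{M''}$ to a full normal form (using termination of {\lthis} and the persistence of the already-normal summand $M'$), extracting $P \to^\ast P'$ with $M' \rincl \widetilde{P'}$ from the theorem, and then collapsing $\rincl$ to equality on the value threads.

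The one substantive step, however, is justified at the wrong level. Everything in this corollary lives in {\lthis}: $M'$, $V$ and $\widetilde{U}$ are terms of the intermediate calculus, and $\rincl$ is the preorder on {\lthis} terms defined in~\cite{HamdaouiValiron2018}; proof nets do not enter at all. Asserting that ``$V$ is a closed $\oc_p$ box in normal form'' is therefore a category error: that describes $V^\bullet$, the proof-net image of $V$, not $V$ itself, and the corollary makes no claim about $V^\bullet$. The tightness you need --- that $V \rincl \widetilde{U}$ with $V$ a value forces $V = \widetilde{U}$ --- must be extracted from the definition of $\rincl$ in the companion paper (roughly, $\rincl$ only relaxes the multisets of values carried by reference substitutions, and the translation of a {\lamadio} value carries none to relax), not from the structural rigidity of exponential boxes. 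With that replacement your thread-matching argument goes through and matches the paper's intent; as written, the key lemma you correctly identify as the crux is supported by a reason that does not apply to the objects at hand.
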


The only problematic case is the non deterministic reduction
\rrulet{subst-r}{get} which creates new summands. The proof consist in showing
that these summands are actually limited in what they can do. Formally, they are
bounded by the initial term that is being reduced, in the sense of the preorder
$\rincl$ defined in~\cite{HamdaouiValiron2018}. The following lemma state that
indeed the case of deterministic reduction is trivial:

\begin{lemma}{Values preservation}\label{lemma-preservation-values}\\
  Let $M$ be a term of {\lamadio}, and $M \to^\ast M'$ without using
  \rrulet{subst-r}{get}. We define $\text{NF}(M) = \{ T \mid T \text{ normal}
  \text{ and } M \to^\ast T \}$. Then $\text{NF}(M) = \text{NF}(M')$
\end{lemma}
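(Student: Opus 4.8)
The plan is to split the reduction relation into its deterministic and non-deterministic parts and to exploit the fact that \rrulet{subst-r}{get} is the \emph{only} branching rule. Write $\to_d$ for the union of all reduction rules \emph{except} \rrulet{subst-r}{get}, and $\to_g$ for \rrulet{subst-r}{get}, so that $\to \;=\; \to_d \cup \to_g$. The hypothesis that $M \to^\ast M'$ uses no get-step means precisely $M \to_d^\ast M'$. With this notation one inclusion is immediate: if $T \in \text{NF}(M')$, i.e. $M' \to^\ast T$ with $T$ normal, then prepending the deterministic prefix gives $M \to_d^\ast M' \to^\ast T$, so $T \in \text{NF}(M)$. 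Hence $\text{NF}(M') \subseteq \text{NF}(M)$, and the whole content of the lemma is the reverse inclusion.

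For $\text{NF}(M) \subseteq \text{NF}(M')$ I would prove a \emph{commutation} lemma: $\to_d$ strongly commutes with $\to$. By the standard union principle it suffices to establish strong commutation of $\to_d$ with $\to_g$ and confluence of the deterministic fragment $\to_d$ (the latter I would take from the study of \lthis, or from orthogonality of the deterministic rules). The key statement to verify is: whenever $M_1 \;{}_d\!\!\leftarrow M \to_g M_2$, there is $U$ with $M_1 \to_g^\ast U$ and $M_2 \to_d^{=} U$. The proof is a critical-pair analysis. When the deterministic redex and the get-redex are disjoint the diagram closes as a plain diamond. The interesting overlaps are when a reference-substitution propagation rule (\rrulet{subst-r}{app}, \rrulet{subst-r}{\parallel}, etc.) acts on the very term containing the fired $\get{r}$: there, distributing the substitution and then firing the chosen value commutes with firing the value first and then distributing, because the value chosen by $\to_g$ is completely orthogonal to the syntactic reshuffling performed by $\to_d$. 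Creation cases ($\to_d$ activating a frozen get by a $\beta$-step, or $\to_g$ producing a fresh $\beta$-redex by substituting a $\lambda$-value into head position) are disjoint from the fired redex and close likewise. The crucial observation that keeps every diagram a near-diamond is that a single $\to_d$ step never duplicates an \emph{active} get-redex: variable substitution only copies values, and a $\get{r}$ occurring inside a value is frozen under a binder and not firable, while the reference-substitution rules merely distribute the substitution wrapper without multiplying the firable $\get{r}$ occurrences. Consequently no choice of value is ever forced to be replicated, and the $\to_d^{=}$ side of each diagram stays at most one step.

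Strong commutation lifts by routine diagram tiling to the statement that $\to_d^\ast$ commutes with $\to^\ast$ (note that confluence of the full $\to$ is false and is never used). Applying this to the span $M' \;{}_d\!\!\leftarrow^\ast M \to^\ast T$ yields a term $U$ with $M' \to^\ast U$ and $T \to_d^\ast U$; since $T$ is a $\to$-normal form it is in particular $\to_d$-normal, so $T \to_d^\ast U$ forces $U = T$, giving $M' \to^\ast T$ and hence $T \in \text{NF}(M')$. Together with the easy inclusion this proves $\text{NF}(M) = \text{NF}(M')$. The main obstacle is the local commutation step, and specifically the verification that the overlapping and redex-creating critical pairs close while \emph{active} get-redexes are never duplicated; if some rule were to break the strict near-diamond and yield only plain local commutation, I would recover the lifting by invoking strong normalisation of $\to_d$ (guaranteed here by the type-and-effect system ensuring termination), combining it with local commutation in the manner of a Newman-style argument for commuting relations.
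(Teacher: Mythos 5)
Your proposal is correct and follows essentially the same route as the paper: the paper's proof is a one-line appeal to the fact that, absent \rrulet{subst-r}{get}, the reduction used is the deterministic fragment, and then to confluence (i.e.\ commutation of that fragment with the full reduction) to transport normal forms between $M$ and $M'$. You simply make explicit what the paper leaves implicit --- the split $\to\,=\,\to_d\cup\to_g$, the strong local commutation of $\to_d$ with $\to_g$, and the Hindley--Rosen style lifting --- which is a faithful, more detailed rendering of the same argument.
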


\begin{proof}
  If we do not use \rrulet{subst-r}{get}, the two reduction (full and
  non-deterministic) coincide and we use the confluence in {\lamadio}.
\end{proof}

\begin{lemma}{}\label{lemma-onestep-adequacy} Let $M = \widetilde{P}$ be the
  translation of a {\lamadio} term, such that $M \to M'$. Then there exists
  $M'',N'$, such that $M' \to^\ast M'' \rincl \widetilde{N'}$ and $N
  \to^{\{0,1\}} N'$, with all reductions from $M'$ to $M''$ not being
\rrulet{subst-r}{get}.  \end{lemma}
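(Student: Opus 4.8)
The plan is to reduce the statement to a case analysis on the single {\lthis} rule fired in $M \to M'$, projecting each rule down to a reduction of length $0$ or $1$ in \lamadio. Since $M = \widetilde{P}$ lies in the image of the translation and the reduction contexts of {\lthis} do not bind variables, it suffices to analyse the top-level redex and then extend to arbitrary contexts by compatibility of $\widetilde{\cdot}$, exactly as in the simulation argument; I therefore assume the redex sits at the root. I would split the {\lthis} rules into two groups: the \emph{administrative} rules (the \rrulet{subst}{} family together with the propagation rules of the \rrulet{subst-r}{} family other than \rrulet{subst-r}{get}), which carry no {\lamadio} counterpart, and the \emph{computational} rules (the $\beta_v$ rule, the set rule, and \rrulet{subst-r}{get}), which realise a genuine step of \lamadio.

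For an administrative step I would take $N' = P$, so that $P \to^{\{0,1\}} N'$ holds with zero steps. What remains is to exhibit $M''$ with $M' \to^\ast M'' \rincl \widetilde{P}$ using only non-get reductions. This follows because the explicit-substitution fragment of {\lthis} is terminating and its normal forms carry no pending substitution: completing the propagation of the substitution touched by the step drives $M'$ back to the translation of the underlying {\lamadio} term, and the preorder $\rincl$ from \cite{HamdaouiValiron2018} absorbs any multiplicity mismatch created when a reference substitution is duplicated across threads. No \rrulet{subst-r}{get} step is needed for this clean-up, as required.

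For a $\beta_v$ or a set step the fired redex is the image of the corresponding {\lamadio} redex, so $P \to N'$ in one step with $N'$ the matching reduct; the step leaves a freshly created explicit (variable or reference) substitution in $M'$, which I would propagate by administrative, non-get reductions to reach $M'' \rincl \widetilde{N'}$ as in the previous paragraph. For \rrulet{subst-r}{get}, the single step selects one value $V$ from the reference substitution standing for the store on $r$; this mirrors the {\lamadio} rule $C[E[\get{r}]] \parallel \store{r}{V} \to C[E[V]] \parallel \store{r}{V}$, so again $P \to N'$ in one step by choosing the same $V$. I would then finish the substitution of $V$ by non-get steps to obtain $M''$, the bound $M'' \rincl \widetilde{N'}$ holding because the reference substitution retained in $M''$ still records the whole multiset of stored values, which $\widetilde{N'}$ dominates in the sense of $\rincl$.

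The main obstacle will be the \rrulet{subst-r}{get} case together with the justification of the $\rincl$ inequality after clean-up. Two points need care: first, that the clean-up can always be completed without firing a second \rrulet{subst-r}{get} (which would reintroduce non-determinism and destroy the one-step correspondence), for which I would appeal to termination of the administrative fragment and to the deterministic-reduction analysis underlying Lemma~\ref{lemma-preservation-values}; second, that the resulting term is genuinely $\rincl$-dominated by $\widetilde{N'}$, which requires unfolding the definition of $\rincl$ from \cite{HamdaouiValiron2018} and checking that retaining the store's values while propagating a single substitution only ever yields a term bounded by the clean translation. Once these are in place the remaining cases are routine bookkeeping, and iterating the lemma yields Theorem~\ref{theorem-adequacy-lthis}.
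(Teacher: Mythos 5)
Your case split and the treatment of $(\beta_v)$, set, and the value-selecting branch of \rrulet{subst-r}{get} essentially match the paper's proof: identify the underlying {\lamadio} step, then push the generated substitutions down (and the upward reference substitution up) by non-get administrative reductions to land $\rincl$-below $\widetilde{N'}$. One structural difference: the paper opens by observing that, since $M = \widetilde{P}$ is in the image of the translation, the \emph{only} redexes in $M$ are premises of \rrulet{subst-r'}{}, $(\beta_v)$, or \rrulet{subst-r}{get}; your paragraph on purely administrative first steps therefore treats a case that cannot occur, which is harmless but hides the observation that actually drives the case analysis.

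The genuine gap is in the \rrulet{subst-r}{get} case. You assume the step ``selects one value $V$ from the reference substitution,'' but in {\lthis} this non-deterministic rule also produces the summand in which the get \emph{discards} the available values and remains a $\get{r}$ waiting for a hypothetical future assignment --- this is the counterpart of the extra net $\net R$ in the simulation theorem, and it is precisely the case that forces the conclusion to be stated with $\rincl$ rather than equality and that generates the garbage summands the adequacy theorem must account for. The paper handles it explicitly: in that branch one takes $N = N'$ (zero {\lamadio} steps) and $M' = M''$, and $M'' \rincl \widetilde{N'}$ holds because the term has only lost store values relative to $\widetilde{P}$. Without this branch your case analysis of the one-step reducts of $M$ is incomplete, and the downstream induction in Theorem~\ref{theorem-adequacy-lthis} would have nothing to say about exactly the summands it was designed to control. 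The fix is short, but it must be stated.
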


\begin{proof}
  The only redexes in $M$ are either premises of \rrulet{subst-r'}{},
  $(\beta_v)$ or \rrulet{subst-r}{get}.
  In the first case, it corresponds to a reducible set whose reduction can be
  carried on in $M'$ by pushing the upward substitution to the top and pushing
  down the corresponding generated downward substitutions, to obtain the translation of
  $N'$ (which is $N$ where we the set is reduced). We can proceed the same way
  with $(\beta_v)$ : this corresponds to a $\beta$-redex in $N$, where $N'$ is
  the result of reducing it, and $M''$ is obtained by pushing down the generated
  substitutions (variable and refenreces).

  Finally, if the reduction rule is \rrulet{subst-r}{get}, then either it
  choosed one of the available values and it corresponds exactly to a get
  reduction $N \to N'$, or it threw away available values, in which case $N =
  N'$, $M' = M''$ and clearly $M'' \rincl \widetilde{N'}$.
\end{proof}

\begin{proof}{Theorem~\ref{theorem-adequacy-lthis}}\\
  We proceed by induction on $\eta (M)$, the length of the longest reduction
  starting from $M$. If $\eta (M) = 0$, ie $M$ is a normal form, this is
  trivially true.
  \\
  To prove the induction step, consider a reduct $M'$ of $M$. We use
  \ref{lemma-onestep-adequacy} to get $M' \to^\ast M'' \rincl \widetilde{P'}$ for some reduct
  $P'$ of $P$, such that the reduction to $M''$ doesn't use
  \rrulet{subst-r}{get}. Thus, by \ref{lemma-preservation-values},
  $\text{NF}(M'') = \text{NF}(M')$. By induction on $M''$, $\forall T \in
  \text{NF}(M''), \exists Q, P \to^\ast Q \text{ and } T \rincl \widetilde{Q}$.
  But this is true for any reduct $M'$, and we have $\text{NF}(M) = \bigcup_{M \to
  M'} \text{NF}(M')$, hence this is true for $M$.
\end{proof}

From there, we get the result combining Theorem~\ref{theorem-adequacy-lthis} and
Theorem~\ref{theorem-nets-simulation-lthis}. One can extract from the proof of
Lemma~\ref{lemma-nets-termination-nfsep} that the translation of a summand that
is not a parallel of values either reduces to $\mathbf{0}$, or to a net which is not a
translation of a value.

\section{Conclusion}
\label{section:conclusion}

In this paper, we presented a translation of a $\lambda$-calculus with higher
order references and concurrency inside a fragment of differential proof nets.
While several translations of effectful languages have been proposed in the
literature, none supports this combination of features to our knowledge. We
introduced a generalization of communication areas, routing areas, which turned
out to be a useful device to encode references. More generally, we think that
routing areas can be used to express various kind of non-deterministic,
concurrent communications.

Modeling concurrency comes at a price, as terms such as $(\lambda x.
\set{r}{x})\ \get{r} \parallel (\lambda x. \set{r}{x})\ \get{r})$ translates to
a net that do not respect the differential proofs nets correctness
criterion~\cite{ehrhard:hal-00150274}. Inside each thread, the $\set{r}{x}$
depends on the $\get{r}$ whose value will replace the $x$ variable. But of these
$\get{r}$ may also depend from the $\set{r}{x}$ of the other thread, creating a
seemingly circular dependence which breaks the acyclicity required for
correctness. This ambiguity is avoided at execution, as the first $\get{r}$ to
reduce will be forced to chose an available {\setw}: these set-get dependencies
are in fact mutually exclusive. But proof nets seems unable to express this
subtlety. Differential {\LL} seems to suffer from more fundamental limitations
as a model of concurrency as pointed out by Mazza~\cite{Mazza:TrueConcMSCS}. It is
yet to be clarified how these results apply to the fragment presented here.
While seriously limiting what can be modeled in proof nets, this might not be an
obstacle when aiming for practical parallel or distributed implementations.

The enrichment of the source language, such as switching to a more realistic
erase-on-write semantic for stores, or the addition of new effects and features
(synchronization operations, sum types, divergence either by fixpoint or
references, \emph{etc .}) is the main focus of future work.

\bibliographystyle{plainurl}
\bibliography{info}
\end{document}